\date{Published 2020}
\pgfplotsset{compat=1.13} 
\algnewcommand{\PROCEDURE}[1]{\STATE \textbf{procedure}\space #1}
\algnewcommand{\ENDPROCEDURE}{\STATE \textbf{end procedure}}
\algnewcommand\RETURN{\STATE \textbf{return}\space}
\algnewcommand\TRUE{\textbf{true}}
\algnewcommand\FALSE{\textbf{false}}
\algnewcommand\OR{\textbf{or}\space}
\algnewcommand\AND{\textbf{and}\space}
\algnewcommand{\IFTHENELSE}[3]{\STATE \algorithmicif\ #1\ \algorithmicthen\ #2\ \algorithmicelse\ #3 \algorithmicend\ \algorithmicif}
\lstdefinelanguage{Prism} {
    keywords=[0]{const,global,bool,int,double,dtmc,ctmc,mdp,csg,rewards,endrewards,formula,module,endmodule,player,endplayer,init,true,false},
    basicstyle=\linespread{1.0}\footnotesize\ttfamily\color{black},
    keywordstyle=[0]\ttfamily\color{black},
    comment=[l][\color{gray}]{//},
    sensitive=true,
    frame=single,	
    numbers=left,
    numberstyle=\tiny\color{gray},
    morestring=[b]"
}   
\title{Automatic Verification of Concurrent Stochastic Systems}
\author{Marta Kwiatkowska \and Gethin Norman \and David Parker \and Gabriel Santos}
\institute{Marta Kwiatkowska \and Gabriel Santos \at Department of Computing Science, University of Oxford, UK
\and
Gethin Norman \at School of Computing Science, University of Glasgow, UK
\and
David Parker
\at School of Computer Science, University of Birmingham, UK
}
\renewcommand{\emptyset}{\varnothing}
\DeclareMathOperator{\opt}{opt}
\newcommand\scale[2]{\vstretch{#1}{\hstretch{#1}{#2}}}
\newcommand{\appref}[1]{Appendix~\ref{#1}}
\newcommand{\sectref}[1]{Section~\ref{#1}}
\newcommand{\figref}[1]{Figure~\ref{#1}}
\newcommand{\tabref}[1]{Table~\ref{#1}}
\newcommand{\egref}[1]{Example~\ref{#1}}
\newcommand{\eqnref}[1]{(\ref{#1})}
\newcommand{\propref}[1]{Proposition~\ref{#1}}
\newcommand{\lemref}[1]{Lemma~\ref{#1}}
\newcommand{\defref}[1]{Definition~\ref{#1}}
\newcommand{\assumref}[1]{Assumption~\ref{#1}}
\newcommand{\sectsectref}[2]{Sections~\ref{#1} and~\ref{#2}}
\newcommand{\figfigref}[2]{Figures~\ref{#1} and~\ref{#2}}
\newcommand{\tabtabref}[2]{Tables~\ref{#1} and~\ref{#2}}
\newcommand{\defdefref}[2]{Definitions~\ref{#1} and~\ref{#2}}
\newcommand{\appappref}[2]{Appendices~\ref{#1} and~\ref{#2}}
\newcommand{\assumassumref}[2]{Assumptions~\ref{#1} and~\ref{#2}}
\spnewtheorem{assumption}{Assumption}{\bfseries}{\itshape}
\newcounter{exampcount}
\newenvironment{examp}
{\refstepcounter{exampcount}
\vskip6pt\noindent
{\bf Example \arabic{exampcount}.}}
{\hfill$\blacksquare$\vskip6pt}
\newcommand{\startpara}[1]{{%
\vskip6pt\noindent
{\bf #1.}}}
\def\ra{{\rightarrow}}
\def\cC{{\mathcal{C}}}
\def\cF{{\mathcal{F}}}
\def\Nset{\mathbb{N}}
\def\Qset{\mathbb{Q}}
\def\Eset{\mathbb{E}}
\def\ra{\rightarrow} %
\def\rmdef{\,\stackrel{\mbox{\rm {\tiny def}}}{=}}
\newcommand{\true}{\mathtt{true}} %
\renewcommand{\leq}{\leqslant}
\renewcommand{\geq}{\geqslant}
\newcommand\game{{\sf G}}
\newcommand\nfgame{{\sf N}}
\newcommand\mgame{{\sf Z}}
\newcommand\sinit{{\bar{s}}}
\newcommand\dist{{\mathit{Dist}}}
\newcommand\Prob{{\mathit{Prob}}}
\newcommand\val{{\mathit{val}}}
\newcommand{\last}{\mathit{last}}
\newcommand{\ipaths}{\mathit{IPaths}}
\newcommand{\fpaths}{\mathit{FPaths}}
\newcommand{\swne}{SWNE\xspace}
\newcommand{\scne}{SCNE\xspace}
\newcommand{\eswne}{$\varepsilon$-SWNE\xspace}
\newcommand{\escne}{$\varepsilon$-SCNE\xspace}
\def\AP{{\mathit{AP}}}
\def\sat{{\,\models\,}}
\def\sateps{{\,\models\,}}
\def\notsateps{{\,\not\models\,}}
\def\Sat{{\mathit{Sat}}}
\newcommand{\coalition}[1]{\langle \! \langle {#1} \rangle \! \rangle}
\def\notsat{{\,\not\models\,}}
\def\next{{X\,}}
\def\until{{\ {\cal U}\ }}
\def\buntil{{\ {\cal U}^{\leq k}\ }}
\def\next{{\mathtt X\,}}
\def\until{{\ \mathtt{U}\ }}
\def\untilop{{\mathtt{U}}}
\def\buntil{{\ \mathtt{U}^{\leq k}\ }}
\def\buntilop{{\mathtt{U}^{\leq k}}}
\newcommand{\buntilp}[1]{{\ \mathtt{U}^{\leq #1}\, }}
\def\future{{\mathtt{F}\ }}
\def\futureop{{\mathtt{F}}}
\def\bfuture{{\mathtt{F}^{\leq k}\ }}
\def\bfutureop{{\mathtt{F}^{\leq k}}}
\newcommand\bfuturep[1]{{\mathtt{F}^{\leq #1}\ }}
\def\globally{{\mathtt{G}\ }}
\newcommand{\scumul}[1]{\mathtt{C}^{#1}} %
\newcommand{\ap}{\mathsf{a}}
\newcommand{\sinstant}[1]{\mathtt{I}^{#1}} %
\newcommand\V{{\mathtt V}}
\newcommand\probopP{{\mathtt P}}
\newcommand\nashop[3]{\coalition{#1}_{#2}(#3)}
\newcommand\snashop[3]{\coalition{#1}_{\scale{.75}{#2}}(#3)}
\newcommand\probop[2]{\probopP_{#1}[\,{#2}\,]}
\newcommand\rewopR{{\mathtt R}}
\newcommand\rewop[3]{\rewopR^{#1}_{#2}[\,{#3}\,]}
\newcommand{\lab}{{\mathit{L}}}
\newcommand{\rew}{\mathit{rew}}
\patchcmd{\@addmarginpar}{\ifodd\c@page}{\ifodd\c@page\@tempcnta\m@ne}{}{}
\begin{document}

\maketitle

\begin{abstract}
Automated verification techniques for stochastic games
allow formal reasoning about %
systems
that feature competitive or collaborative behaviour among
rational agents in uncertain or probabilistic settings.
Existing tools and techniques focus on turn-based games,
where each state of the game is controlled by a single player,
and on zero-sum properties, where two players or coalitions have directly opposing objectives.
In this paper, we present automated verification techniques
for concurrent stochastic games (CSGs), which provide a more natural
model of concurrent decision making and interaction.
We also consider (social welfare) Nash equilibria, to formally identify
scenarios where two players or coalitions with distinct goals can collaborate
to optimise their joint performance.
We propose an extension of the temporal logic rPATL for specifying
quantitative %
properties in this setting
and present corresponding algorithms for verification and strategy synthesis for a variant of stopping games. For finite-horizon properties the computation is exact, while for infinite-horizon it is approximate using value iteration.
For zero-sum properties it requires solving matrix games via linear programming, and for equilibria-based properties we find social welfare or social cost Nash equilibria of bimatrix games via the method of labelled polytopes through an SMT encoding.
We implement this approach in PRISM-games, which required extending the tool's modelling language for CSGs,
and apply it to case studies from domains including
robotics, computer security and computer networks,  
explicitly demonstrating the benefits of both CSGs and equilibria-based properties.

\keywords{Quantitative verification \and Probabilistic model checking \and Concurrent stochastic games \and Nash equilibria}
\end{abstract}

\section{Introduction}

Stochastic multi-player games are a versatile modelling framework for systems that exhibit cooperative or competitive behaviour in the presence of adversarial or uncertain environments.
They can be viewed as a collection of players (agents) with strategies for determining their actions based on the execution so far.
These models combine \emph{nondeterminism}, representing the adversarial, cooperative and competitive choices,
\emph{stochasticity}, modelling uncertainty due to noise, failures or randomness,
and \emph{concurrency}, representing simultaneous execution of interacting agents.
Examples of such systems appear in many domains, from robotics and autonomous transport, to security and computer networks.
A game-theoretic approach also facilitates the design of protocols
that use penalties or incentives to ensure robustness against selfish participants.
However, the complex interactions involved in such systems make their correct construction a challenge.

\emph{Formal verification} for stochastic games provides a means of producing quantitative guarantees on the correctness of these systems (e.g.\ ``the control software can always safely stop the vehicle with probability at least 0.99, regardless of the actions of other road users''), where the required behavioural properties are specified precisely in quantitative extensions of temporal logic. The closely related problem of \emph{strategy synthesis} constructs an optimal strategy for a player, or coalition of players, which guarantees that such a property is satisfied.

A variety of verification algorithms for stochastic games have been devised, e.g.,~\cite{Cha07b,Umm10,AHK07,AM04,CAH13}.
In recent years, further progress has been made:
verification and strategy synthesis algorithms have been developed
for various temporal logics~\cite{CFK+13b,BKW17,KNPS18,KKKW18}
and implemented in the PRISM-games tool~\cite{KPW18},
an extension of the PRISM probabilistic model checker~\cite{KNP11}.
This has allowed modelling and verification of stochastic games
to be used for a variety of non-trivial applications,
in which competitive or collaborative behaviour between entities is a crucial ingredient, including computer security and energy management.

A limitation of the techniques implemented in PRISM-games to date
is that they focus on \emph{turn-based} stochastic multi-player games (TSGs),
whose states are partitioned among a set of players,
with exactly one player taking control of each state.
In this paper, we propose and implement techniques for
\emph{concurrent stochastic multi-player games} (CSGs),
which generalise TSGs by permitting players to choose their actions
simultaneously in each state. This provides a more realistic model of interactive agents operating concurrently, and making action choices without already knowing the actions being taken by other agents.
Although algorithms for CSGs
have been known for some time (e.g., \cite{AHK07,AM04,CAH13}),
their implementation and application to real-world examples has been lacking.

A further limitation of existing work is that it focuses on \emph{zero-sum} properties,
in which one player (or a coalition of players) aims to optimise some objective,
while the remaining players have the directly opposing goal.
In PRISM-games, properties are specified in the logic rPATL (probabilistic alternating-time temporal logic with rewards)~\cite{CFK+13b}, a quantitative extension of the game logic ATL~\cite{AHK02}. This allows us to specify that a coalition of players can achieve a high-level objective, regarding the probability of an event's occurrence or the expectation of reward measure, irrespective of the other players' strategies. Extensions have allowed players to optimise multiple objectives~\cite{CFKSW13,BKW17}, but again in a zero-sum fashion.

In this work, we move beyond zero-sum properties
and consider situations where two players (or two coalitions of players)
in a CSG have distinct objectives to be maximised or minimised.
The goals of the players (or coalitions) are not necessarily directly opposing,
and so it may be beneficial for players to collaborate.
For these \emph{nonzero-sum} scenarios,
we use the well studied notion of \emph{Nash equilibria} (NE),
where it is not beneficial for any player to unilaterally change their strategy.
In particular, we use subgame-perfect NE~\cite{OR04},
where this equilibrium criterion holds in every state of the game,
and we focus on two specific variants of equilibria:
\emph{social welfare} and \emph{social cost} NE,
which maximise and minimise, respectively,
the sum of the objectives of the players.

We propose an extension of the rPATL logic
which adds the ability to express quantitative \emph{nonzero-sum} properties
based on these notions of equilibria,
for example ``the two robots have navigation strategies which form a
(social cost) Nash equilibrium, and under which the combined expected energy usage until completing their tasks is below $k$''.
We also include some additional reward properties
that have proved to be useful when applying our methods to various case studies.

We provide a formal semantics for the new logic
and propose algorithms for CSG verification and strategy synthesis for a variant of stopping games,
including both zero-sum and nonzero-sum properties.
Our algorithms extend the existing approaches for rPATL model checking,
and employ a combination of
exact computation through backward induction for finite-horizon properties
and approximate computation through value iteration for infinite-horizon properties.
Both approaches require the solution of games for each state of the model
in each iteration of the computation:
we solve matrix games for the zero-sum case
and find optimal NE for bimatrix games for the nonzero-sum case.
The former can be done with linear programming;
we perform the latter using labelled polytopes~\cite{LH64} and a reduction to SMT.

We have implemented our verification and strategy synthesis algorithms
in a new release, version 3.0, of PRISM-games~\cite{KNPS20},
extending both the modelling and property specification languages to
support CSGs and nonzero-sum properties.
In order to investigate the performance, scalability and applicability
of our techniques, we have developed a large selection of case studies
taken from a diverse set of application domains including:
finance, computer security, computer networks, communication systems,
robot navigation and power control.

These illustrate examples of systems whose modelling and analysis requires stochasticity
\emph{and} competitive or collaborative behaviour between concurrent components or agents. %
We demonstrate that our CSG modelling and verification techniques facilitate
insightful analysis of quantitative aspects of such systems.
Specifically, we show cases where CSGs allow more accurate modelling of concurrent
behaviour than their turn-based counterparts
and where our equilibria-based extension of rPATL
allows us to synthesise better performing strategies
for collaborating agents than can be achieved using the zero-sum version.

\vskip0.7em
The paper combines and extends the conference papers~\cite{KNPS18} and \cite{KNPS19}.
In particular, we:
(i) introduce the definition of social cost Nash equilibria for CSGs and model checking algorithms for verifying temporal logic specifications using this definition;
(ii) provide additional details and proofs of model checking algorithms,
for example for combinations of finite- and infinite-horizon objectives;
(iii) present an expanded experimental evaluation, 
including a wider range of properties, extended analysis of the case studies
and a more detailed evaluation of performance, including efficiency
improvements with respect to~\cite{KNPS18,KNPS19}.

\startpara{Related work}
Various verification algorithms have been proposed for CSGs,
e.g.~\cite{AHK07,AM04,CAH13},
but without implementations, tool support or case studies.
PRISM-games~2.0~\cite{KPW18}, which we have built upon in this work, 
provided modelling and verification for a wide range of properties of stochastic multi-player games,
including those in the logic rPATL, and multi-objective extensions of it,
but focusing purely on the turn-based variant of the model (TSGs) in the context of two-coalitional zero-sum properties.
GIST~\cite{CHJ+10} allows the analysis of $\omega$-regular properties on probabilistic games,
but again focuses on turn-based, not concurrent, games.
GAVS+~\cite{CKL+11} is a general-purpose tool for algorithmic game solving,
supporting TSGs and (non-stochastic) concurrent games, but not CSGs.
Three further tools, PRALINE~\cite{BRE13}, EAGLE~\cite{TGW15}  and EVE~\cite{GNP+18}, 
support the computation of NE~\cite{Nas50} for the restricted class of (non-stochastic) concurrent games. In addition, EVE has recently been extended to verify if an LTL property holds on some or all NE~\cite{GNPW20}.
Computing NE is also supported by MCMAS-SLK~\cite{CLM+14} via strategy logic and general purpose tools such as Gambit~\cite{GAMB} can compute a variety of equilibria but, again, not for stochastic games.

Work concerning nonzero-sum properties includes~\cite{CMJ04,Umm10}, in which the existence of and the complexity of finding NE for stochastic
games is studied,
but without practical algorithms. The complexity of finding subgame-perfect NE for quantitative reachability properties is studied in~\cite{BBG+19}, while~\cite{GNP+19} considers the complexity of equilibrium design for temporal logic properties and lists social welfare requirements and implementation as future work. In \cite{PPB15}, a learning-based algorithm for finding NE for discounted properties of CSGs is presented and evaluated. Similarly,~\cite{LP17} studies NE for discounted properties and introduces iterative algorithms for strategy synthesis. A theoretical framework for price-taking equilibria of CSGs is given in~\cite{AY17}, where players try to minimise their costs which include a price common to all players and dependent on the decisions of all players.
A notion of strong NE for a restricted class of CSGs is formalised in~\cite{DDJ+18} and an approximation algorithm for checking the existence of such NE for discounted properties is introduced and evaluated. The existence of stochastic equilibria with imprecise deviations for CSGs and a PSPACE algorithm to compute such equilibria is considered in~\cite{BMS16}. Finally, we mention the fact that the concept of equilibrium is used to analyze different applications such as cooperation among agents in stochastic games~\cite{HSC+18} and to design protocols based on quantum secret sharing \cite{QTT18}.

\section{Preliminaries}\label{prelim-sect}

We begin with some basic background from game theory,
and then describe CSGs, illustrating each with examples.
For any finite set $X$, we will write $\dist(X)$ for the set of probability distributions over $X$ and for any vector $v \in \Qset^n$ for $n \in \Nset$ we use $v(i)$ to denote the $i$th entry of the vector.

\subsection{Game Theory Concepts}

We first introduce \emph{normal form games},
which are simple one-shot games where players make their choices concurrently.
\begin{definition}[Normal form game] A (finite, $n$-person) \emph{normal form game} (NFG) is a tuple $\nfgame = (N,A,u)$ where:
\begin{itemize}
\item $N=\{1,\dots,n\}$ is a finite set of \emph{players};
\item $A = A_1 \times \cdots \times A_n$ and $A_i$ is a finite set of \emph{actions} available to player $i \in N$;
\item $u = (u_1,\dots,u_n)$ and $u_i \colon A \rightarrow \Qset$ is a \emph{utility function} for player $i \in N$.
\end{itemize}
\end{definition}
In a game $\nfgame$, players select actions simultaneously, with player $i \in N$ choosing from the action set $A_i$. If each player $i$ selects action $a_i$, then player $j$ receives the utility $u_j(a_1,\dots,a_n)$.
\begin{definition}[Strategies and strategy profile] 
A \emph{(mixed) strategy} $\sigma_i$ for player $i$ in an NFG $\nfgame$ is a distribution over its action set, i.e., $\sigma_i \in \dist(A_i)$. We let $\Sigma^i_\nfgame$ denote the set of all strategies for player $i$.
A \emph{strategy profile} (or just \emph{profile})
$\sigma = (\sigma_1,\dots,\sigma_n)$ is a tuple of strategies for each player.
\end{definition}
Under a strategy profile $\sigma = (\sigma_1,\dots,\sigma_n)$ of an NFG $\nfgame$, the expected utility of player $i$ is defined as follows:
\[ \begin{array}{c}
u_i(\sigma) \ \rmdef \ \sum_{(a_1,\dots,a_n) \in A} u_i(a_1,\dots,a_n) \cdot \left( \prod_{j=1}^n \sigma_j(a_j) \right) \, .
\end{array} \]
A two-player NFG is also called a \emph{bimatrix game} as it can be represented by two distinct matrices $\mgame_1, \mgame_2 \in \Qset^{l \times m}$ where $A_1 = \{a_1,\dots,a_l\}$, $A_2 = \{b_1,\dots,b_m\}$, $z^1_{ij} = u_1(a_i,b_j)$ and $z^2_{ij} = u_2(a_i,b_j)$. 

A two-player NFG is \emph{constant-sum} if there exists $c \in \Qset$ such that $u_1(\alpha) {+} u_2(\alpha) = c$ for all $\alpha \in A$ and \emph{zero-sum} if $c = 0$. A zero-sum, two-player NFG is often called a \emph{matrix game} as it can be represented by a single matrix $\mgame \in \Qset^{l \times m}$ where $A_1 = \{a_1,\dots,a_l\}$, $A_2 = \{b_1,\dots,b_m\}$ and $z_{ij} = u_1(a_i,b_j) = - u_2(a_i,b_j)$. For zero-sum, two-player NFGs, in the bimatrix game representation we have $\mgame_1 =-\mgame_2$.

\subsubsection{Matrix Games}\label{matrix-sect}

\noindent
We require the following classical result concerning matrix games,
which introduces the notion of the \emph{value} of a matrix game (and zero-sum NFG).

\begin{theorem}[Minimax theorem~\cite{NEU28,NMK+44}]\label{minimax-thm}
For any zero-sum NFG $\nfgame = (N,A,u)$ and corresponding matrix game $\mgame$, there exists $v^\star \in \Qset$, called the \emph{value} of the game and denoted $\val(\mgame)$, such that:
\begin{itemize}
\item there is a strategy $\sigma_1^\star$ for player 1, called an optimal strategy of player 1, such that under this strategy the player's expected utility is at least $v^\star$ regardless of the strategy of player 2, i.e.\ $\inf_{\sigma_2 \in \Sigma^2_\nfgame} u_1(\sigma_1^\star,\sigma_2) \geq v^\star$;
\item there is a strategy $\sigma_2^\star$ for player 2, called an optimal strategy of player 2, such that under this strategy the player's expected utility is at least $-v^\star$ regardless of the strategy of player 1, i.e.\ $\inf_{\sigma_1 \in \Sigma^1_\nfgame} u_2(\sigma_1,\sigma_2^\star) \geq -v^\star$.
\end{itemize}
\end{theorem}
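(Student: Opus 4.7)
The plan is to reduce the existence of the value $v^\star$ and of the optimal strategies to linear programming duality, which is both the cleanest modern route and aligned with the algorithmic approach the paper later adopts for solving matrix games.

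First I would observe that for any mixed strategy $\sigma_1 \in \dist(A_1)$, since $u_1(\sigma_1,\sigma_2) = \sum_{i,j} \sigma_1(a_i)\,\sigma_2(b_j)\,z_{ij}$ is linear in $\sigma_2$ over the simplex $\dist(A_2)$, the infimum $\inf_{\sigma_2} u_1(\sigma_1,\sigma_2)$ is attained at a pure strategy of player 2 and equals $\min_{1 \le j \le m} \sum_{i=1}^{l} \sigma_1(a_i)\,z_{ij}$. Player 1's guaranteed payoff therefore becomes the optimum of the linear program
\begin{align*}
\text{maximise } & v \\
\text{subject to } & \sum\nolimits_{i=1}^{l} \sigma_1(a_i)\, z_{ij} \ge v, \quad 1 \le j \le m, \\
& \sum\nolimits_{i=1}^{l} \sigma_1(a_i) = 1,\ \sigma_1(a_i) \ge 0 .
\end{align*}
An analogous derivation for player 2 (who minimises $u_1$, equivalently maximises $u_2 = -u_1$) yields the linear program
\begin{align*}
\text{minimise } & w \\
\text{subject to } & \sum\nolimits_{j=1}^{m} \sigma_2(b_j)\, z_{ij} \le w, \quad 1 \le i \le l, \\
& \sum\nolimits_{j=1}^{m} \sigma_2(b_j) = 1,\ \sigma_2(b_j) \ge 0 .
\end{align*}

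Next I would verify that these two LPs are LP-duals of one another via the standard primal/dual transformation, with the free scalars $v,w$ arising as the duals of the normalisation constraints. Both programs are feasible (any uniform mixed strategy works, paired with $v = \min_{i,j} z_{ij}$ or $w = \max_{i,j} z_{ij}$) and bounded, so strong LP duality supplies a common rational optimal value $v^\star = \val(\mgame)$ together with optimisers $\sigma_1^\star$ and $\sigma_2^\star$. By construction of the primal LP, $\sigma_1^\star$ guarantees $\sum_i \sigma_1^\star(a_i)\, z_{ij} \ge v^\star$ for every pure action $b_j$, and hence $u_1(\sigma_1^\star,\sigma_2) \ge v^\star$ for every mixed $\sigma_2$ by linearity. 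Symmetrically, $\sigma_2^\star$ guarantees $u_1(\sigma_1,\sigma_2^\star) \le v^\star$, equivalently $u_2(\sigma_1,\sigma_2^\star) \ge -v^\star$, for every $\sigma_1$, which is exactly what the theorem asserts; the rationality of $v^\star$ follows from the fact that an LP with rational data has rational basic optima.

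The main obstacle in a fully self-contained exposition would be establishing strong LP duality (or, equivalently, a Farkas-lemma or separating hyperplane statement) from scratch; once that machinery is accepted, the remainder of the argument is bookkeeping. An alternative route, closer to von Neumann's original proof, applies Brouwer's fixed point theorem to the joint best-response correspondence on $\dist(A_1) \times \dist(A_2)$, but this invokes heavier machinery to reach the same conclusion and is less directly connected to the paper's subsequent algorithmic use of linear programming for matrix-game solving.
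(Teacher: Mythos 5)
Your proposal is correct, and it follows essentially the route the paper itself takes: the paper cites the theorem as classical and immediately presents exactly your primal and dual LPs (maximise $v$ subject to $\sum_i x_i z_{ij} \geq v$, and its dual) as the means of computing $\val(\mgame)$ and the optimal strategies. Your additional observations — that the inner infimum is attained at a pure strategy by linearity over the simplex, that both programs are feasible and bounded, and that rationality of $v^\star$ follows from rational LP data — correctly fill in the steps the paper leaves implicit.
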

The value of a matrix game $\mgame \in \Qset^{l \times m}$ can be found by solving the following linear programming (LP) problem~\cite{NEU28,NMK+44}.
Maximise $v$ subject to the constraints:
\begin{eqnarray*}
x_1 {\cdot} z_{1j} + \cdots + x_l {\cdot} z_{lj} & \geq & v \;\; \mbox{for all} \; 1 \leq j \leq m \\
x_i & \geq & 0 \;\; \mbox{for all} \; 1 \leq i \leq l \\ 
 x_1 + \cdots + x_l & = & 1
\end{eqnarray*}
In addition, the solution for $(x_1,\dots,x_l)$ yields an optimal strategy for player 1. The value of the game can also be found by solving the following dual LP problem. Minimise $v$ subject to the constraints:
\begin{eqnarray*}
y_1 {\cdot} z_{i1} + \cdots + y_m {\cdot} z_{im} & \leq & v \;\; \mbox{for all} \; 1 \leq i \leq l \\
y_j & \geq & 0 \;\; \mbox{for all} \; 1 \leq j \leq m \\ 
y_1 + \cdots + y_m & = & 1 
\end{eqnarray*}
and in this case the solution $(y_1,\dots,y_m)$ yields an optimal strategy for player 2.

\begin{examp}\label{rps-eg}
Consider the (zero-sum) NFG corresponding to the well known rock-paper-scissors game,
where each player $i\in\{1,2\}$ chooses ``rock'' ($r_i$), ``paper'' ($p_i$) or ``scissors'' ($s_i$).
The matrix game representation is:
\[
\mgame = \kbordermatrix{
 & r_2 & p_2 & s_2 \cr 
 r_1 & 0 & {-}1 & 1 \cr 
 p_1 & 1 & 0 & {-}1 \cr
 s_1 & {-}1 & 1 & 0}
\]
where the utilities for winning, losing and drawing are $1$, $-1$ and $0$ respectively.
The value for this matrix game is the solution to the following LP problem. Maximise $v$ subject to the constraints:
\begin{eqnarray*}
x_2 - x_3 & \geq & v \\ 
x_3 - x_1 & \geq & v \\ 
x_1 - x_2 & \geq & v \\
x_1,x_2,x_3 & \geq & 0 \\
x_1 + x_2 + x_3 & = & 1 
\end{eqnarray*}
which yields the value $v^\star=0$ with optimal strategy $\sigma_1^\star = (1/3,1/3,1/3)$ for player 1 (the optimal strategy for player 2 is the same).
\end{examp}

\subsubsection{Bimatrix Games}\label{bimatrix-sect}

\noindent
For bimatrix games (and nonzero-sum NFGs), we use the concept of \emph{Nash equilibria} (NE),
which represent scenarios for players with distinct objectives
in which it is not beneficial for any player to unilaterally change their strategy.
In particular, we will use variants called \emph{social welfare optimal} NE and \emph{social cost optimal} NE.
These variants are equilibria that maximise or minimise, respectively, the total utility of the players,
i.e., the sum of the individual player utilities.
\begin{definition}[Best and least response] 
For NFG $\nfgame = (N,A,u)$, strategy profile $\sigma = (\sigma_1,\dots,\sigma_n)$ and player $i$ strategy $\sigma_i'$, we define the sequence of strategies $\sigma_{-i} = (\sigma_1,\dots,\sigma_{i-1},\sigma_{i+1},\dots,\sigma_n)$ and profile $\sigma_{-i}[\sigma_i'] = (\sigma_1,\dots,\sigma_{i-1},\sigma_i',\sigma_{i+1},\dots,\sigma_n)$. For player $i$ and strategy sequence $\sigma_{-i}$:
\begin{itemize}
\item
a \emph{best response} for player $i$ to $\sigma_{-i}$ is a strategy $\sigma^\star_i$ for player $i$ such that $u_i(\sigma_{-i}[\sigma^\star_i]) \geq u_i(\sigma_{-i}[\sigma_i])$ for all strategies $\sigma_i$ of player $i$;
\item
a \emph{least response} for player $i$ to $\sigma_{-i}$ is a strategy $\sigma^\star_i$ for player $i$ such that $u_i(\sigma_{-i}[\sigma^\star_i]) \leq u_i(\sigma_{-i}[\sigma_i])$ for all strategies $\sigma_i$ of player $i$.
\end{itemize}
\end{definition}
\begin{definition}[Nash equilibrium]\label{nash-def}
For NFG $\nfgame = (N,A,u)$, a strategy profile $\sigma^\star$ of $\nfgame$ is a \emph{Nash equilibrium} (NE) and $\langle u_i(\sigma^\star)\rangle_{i \in N}$ \emph{NE values}
if $\sigma_i^\star$ is a best response to $\sigma_{-i}^\star$ for all $i \in N$.
\end{definition}
\begin{definition}[Social welfare NE]\label{swne-def}
For NFG $\nfgame = (N,A,u)$, an NE $\sigma^\star$ of $\nfgame$ is a \emph{social welfare optimal NE} (SWNE) and $\langle u_i(\sigma^\star)\rangle_{i \in N}$ corresponding \emph{SWNE values} if $u_1(\sigma^\star){+}\cdots{+}u_n(\sigma^\star)\geq u_1(\sigma){+} \cdots{+}u_n(\sigma)$ for all NE $\sigma$ of $\nfgame$.
\end{definition}
\begin{definition}[Social cost NE]\label{scne-def}
For NFG $\nfgame = (N,A,u)$, a profile $\sigma^\star$ of $\nfgame$ is a \emph{social cost optimal NE} (SCNE) and $\langle u_i(\sigma^\star)\rangle_{i \in N}$ corresponding \emph{SCNE values} if it is an NE of $\nfgame^{-}= (N,A,{-}u)$ and $u_1(\sigma^\star){+}\cdots{+}u_n(\sigma^\star)\leq u_1(\sigma){+} \cdots{+}u_n(\sigma)$ for all NE $\sigma$ of $\nfgame^{-}= (N,A,{-}u)$.
\end{definition}
The notion of SWNE is standard~\cite{NRTV07} and corresponds to the case where utility values represent profits or rewards.
We introduce the dual notion of SCNE for the case where utility values correspond to losses or costs.
In our experience of modelling with stochastic games,
such situations are common: example objectives in this category include
minimising the probability of a fault occurring or minimising the expected time to complete a task.
Representing SCNE directly is a more natural approach than the alternative of simply negating utilities, as above.

The following demonstrates the relationship between SWNE and SCNE.
\begin{lemma}\label{duality-lem}
For NFG $\nfgame = (N,A,u)$, a strategy profile $\sigma^\star$ of $\nfgame$ is an NE of $\nfgame^{-} = (N,A,{-}u)$ if and only if $\sigma^\star_i$ is a least response to $\sigma^\star_{-i}$ of player $i$ in $\nfgame$ for all $i \in N$. Furthermore, $\sigma^\star$ is a SWNE of $\nfgame^{-}$ if and only if $\sigma^\star$ is a SCNE of $\nfgame$.
\end{lemma}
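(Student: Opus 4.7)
The plan is to unwind the definitions on both sides and use the fact that expected utility is linear, so $({-}u_i)(\sigma) = -u_i(\sigma)$ for every strategy profile $\sigma$ and every $i \in N$; this swaps maxima with minima throughout.

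For the first claim, I would start from the definition: $\sigma^\star$ is an NE of $\nfgame^{-}$ iff for every $i \in N$, $\sigma^\star_i$ is a best response to $\sigma^\star_{-i}$ in $\nfgame^{-}$, i.e.
\[
({-}u_i)\bigl(\sigma^\star_{-i}[\sigma^\star_i]\bigr) \;\geq\; ({-}u_i)\bigl(\sigma^\star_{-i}[\sigma_i]\bigr) \quad \text{for all } \sigma_i \in \Sigma^i_\nfgame.
\]
Multiplying by $-1$ and using $({-}u_i)(\cdot) = -u_i(\cdot)$ gives the equivalent inequality
\[
u_i\bigl(\sigma^\star_{-i}[\sigma^\star_i]\bigr) \;\leq\; u_i\bigl(\sigma^\star_{-i}[\sigma_i]\bigr) \quad \text{for all } \sigma_i \in \Sigma^i_\nfgame,
\]
which is exactly the statement that $\sigma^\star_i$ is a least response to $\sigma^\star_{-i}$ in $\nfgame$. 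Since this equivalence holds player-by-player, the two conditions coincide, proving the first part.

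For the second claim, I would use the first part to identify the NE sets: a profile is an NE of $\nfgame^{-}$ if and only if each component is a least response in $\nfgame$, so the two NE conditions in Definitions~\ref{swne-def} (applied to $\nfgame^{-}$) and~\ref{scne-def} (applied to $\nfgame$) pick out the same set of profiles. Now $\sigma^\star$ is a SWNE of $\nfgame^{-}$ means it is an NE of $\nfgame^{-}$ satisfying
\[
\sum_{i=1}^{n} ({-}u_i)(\sigma^\star) \;\geq\; \sum_{i=1}^{n} ({-}u_i)(\sigma) \quad \text{for all NE } \sigma \text{ of } \nfgame^{-},
\]
and multiplying by $-1$ turns this into $\sum_i u_i(\sigma^\star) \leq \sum_i u_i(\sigma)$ over the same set of profiles, which is precisely the SCNE optimality condition in Definition~\ref{scne-def}. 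The reverse direction is identical.

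There is no real obstacle here beyond bookkeeping; the only point that deserves a brief justification is the linearity identity $({-}u_i)(\sigma) = -u_i(\sigma)$, which follows immediately from the defining sum $u_i(\sigma) = \sum_{a \in A} u_i(a) \prod_j \sigma_j(a_j)$ by pulling the sign out of the sum. With that in hand both equivalences are pure sign flips of the relevant inequalities, so the proof is short.
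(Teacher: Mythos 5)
Your proof is correct: the paper states \lemref{duality-lem} without an explicit proof, treating it as immediate from the definitions, and your argument is exactly the definitional unwinding it relies on — the linearity identity $({-}u_i)(\sigma) = -u_i(\sigma)$ flips best responses in $\nfgame^{-}$ into least responses in $\nfgame$, and flips the social-welfare maximisation over NE of $\nfgame^{-}$ into the social-cost minimisation in \defref{scne-def} (whose quantification is already over NE of $\nfgame^{-}$, so the two optimality conditions range over the same set of profiles). No gaps.
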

\lemref{duality-lem} can be used to reduce the computation of SCNE profiles and values to those of SWNE profiles and values (or vice versa). This is achieved by negating all utilities in the NFG or bimatrix game, computing an SWNE profile and corresponding SWNE values, and then negating the SWNE values to obtain an SCNE profile and corresponding SCNE values for the original NFG or bimatrix game.

Finding NE and NE values in bimatrix games is in the class of \emph{linear complementarity} problems (LCPs). More precisely, $(\sigma_1,\sigma_2)$ is an NE profile and $(u,v)$ are the corresponding NE values of the bimatrix game $\mgame_1,\mgame_2 \in \Qset^{l \times m}$ where $A_1 = \{a_1,\dots,a_l\}$, $A_2 = \{b_1,\dots,b_m\}$ if and only if for the column vectors $x \in \Qset^l$ and $y \in \Qset^m$ where $x_i = \sigma_1(a_i)$ and $y_j = \sigma_2(b_j)$ for $1 \leq i \leq l$ and $1 \leq j \leq m$, we have:
\begin{eqnarray}
x^T(\mathbf{1} u - \mgame_1 y) & = & 0 \label{lc1-eqn} \\
y^T(\mathbf{1} v - \mgame_2^T x) & = & 0 \label{lc2-eqn} \\
\mathbf{1} u -\mgame_1 y & \geq & \mathbf{0} \label{lc3-eqn} \\
\mathbf{1} v-\mgame_2^T x & \geq & \mathbf{0} \label{lc4-eqn}
\end{eqnarray}
and $\mathbf{0}$ and $\mathbf{1}$ are vectors or matrices with all components 0 and 1, respectively.

\begin{examp}\label{stag1-eg} We consider a nonzero-sum stag hunt game~\cite{PSS+11} where, if players decide to cooperate, this can yield a large utility, but if the others do not, then the cooperating player gets nothing while the remaining players get a small utility. A scenario with 3 players, where two form a coalition (assuming the role of player 2), yields a bimatrix game: 
\vspace*{-0.2cm}
\[
\mgame_1 = 
\kbordermatrix{
 & \mathit{nc}_2 & \mathit{hc}_2 & c_2 \cr 
 \mathit{nc}_1 & 2 & 2 & 2 \cr 
 c_1 & 0 & 4 & 6 
 }
\qquad
\mgame_2 = 
\kbordermatrix{
 & \mathit{nc}_2 & \mathit{hc}_2 & c_2 \cr 
 \mathit{nc}_1 & 4 & 2 & 0 \cr 
 c_1 & 4 & 6 & 9
 }
\]
where $\mathit{nc}_i$ and $\mathit{c}_i$ represent player 1 and coalition 2 not cooperating and cooperating, respectively, and $\mathit{hc}_2$ represents half the players in the coalition cooperating.
A strategy profile $\sigma^* = ((x_1,x_2,x_3),(y_1,y_2))$ is an NE and $(u,v)$ the corresponding NE values of the game if and only if, from Eqn \eqnref{lc1-eqn} and \eqnref{lc2-eqn}:
\begin{eqnarray*}
u {\cdot}x_1 - 2{\cdot}x_1 {\cdot} y_1 - 2{\cdot}x_1 {\cdot} y_2 - 2{\cdot}x_1 {\cdot} y_3 + u {\cdot}x_2 - 4{\cdot}x_2 {\cdot} y_2 - 6{\cdot}x_2 {\cdot} y_3 & = & 0 \\
v{\cdot}y_1 - 4{\cdot}x_1{\cdot}y_1 - 4{\cdot}x_2{\cdot}y_1 + v{\cdot}y_2 - 2{\cdot}x_1{\cdot}y_2 - 6{\cdot}x2{\cdot}y_2 + v{\cdot}y_3 -9{\cdot}x_2{\cdot}y_3 & = & 0
\end{eqnarray*}
and, from Eqn \eqnref{lc3-eqn} and \eqnref{lc4-eqn}:
\begin{eqnarray*}
u - 2{\cdot}y_1 - 2{\cdot}y_2 - 2{\cdot}y_3 & \geq & 0 \\
u - 4{\cdot}y_2 - 6{\cdot}y_3 & \geq & 0 \\
v - 4{\cdot}x_1 - 4{\cdot}x_2 & \geq & 0 \\
v - 2{\cdot}x_1 - 6{\cdot}x_2 & \geq & 0 \\
v - 9{\cdot}x_2 & \geq & 0 \, .
\end{eqnarray*}
There are three solutions to this LCP problem which correspond to the following NE profiles:
\begin{itemize}
\item player 1 and the coalition pick $\mathit{nc}_1$ and $\mathit{nc}_2$, respectively, with NE values $(2,4)$;
\item player 1 selects $\mathit{nc}_1$ and $c_1$ with probabilities $5/9$ and $4/9$ and the coalition selects $\mathit{nc}_2$ and $c_2$ with probabilities $2/3$ and $1/3$, with NE values $(2,4)$;
\item player 1 and the coalition select $c_1$ and $c_2$, respectively, with NE values $(6,9)$.
\end{itemize}
For instance, in the first case, neither player 1 nor the coalition believes the other will cooperate: the best they can do is act alone. 
The third maximises the joint utility and is the only SWNE profile,
with corresponding SWNE values $(6,9)$.

To find SCNE profiles and SCNE values for the same set of utility functions, using \lemref{duality-lem}
we can negate all the utilities of the players in the game and look for NE profiles in the resulting bimatrix game;
again, there are three:
\begin{itemize}
\item player 1 and the coalition select $c_1$ and $\mathit{nc}_2$, respectively, with NE values $(0,-4)$;
\item player 1 selects $\mathit{nc}_1$ and $c_1$ with probabilities $1/2$ and $1/2$ and the coalition selects $\mathit{nc}_2$ and $\mathit{hc}_2$ with probabilities $1/2$ and $1/2$, with NE values $(-2,-4)$;
\item player 1 and the coalition select $\mathit{nc}_1$ and $c_2$, respectively, with NE values $(-2,0)$.
\end{itemize}
The third is the only SCNE profile, %
with corresponding SCNE values $(2,0)$.
\end{examp}

In this work, we compute the SWNE values for a bimatrix game
(or, via \lemref{duality-lem}, the SCNE values)
by first identifying all the NE values of the game.
For this, we use the Lemke-Howson algorithm~\cite{LH64}, which is based on the method of \emph{labelled polytopes}~\cite{NRTV07}.
Other well-known methods include those based on \emph{support enumeration}~\cite{PNS04} and \emph{regret minimisation} \cite{SGC05}.
Given a bimatrix game $\mgame_1,\mgame_2 \in \Qset^{l \times m}$, we denote the sets of deterministic strategies of players 1 and 2 by $I = \{1,\dots,l\}$ and $M = \{1,\dots,m\}$ and define $J = \{l{+}1,\dots,l{+}m\}$ by mapping $j \in M$
to $l{+}j \in J$. A \emph{label} is then defined as an element of $I \cup J$. The sets of strategies for players 1 and 2 can be represented by:
\[
X = \{x \in \Qset^l \mid (\mathbf{1} x = 1) \wedge (x \geq \mathbf{0}) \} \quad \mbox{and} \quad
Y = \{y \in \Qset^m \mid (\mathbf{1} y = 1) \wedge (y \geq \mathbf{0}) \} \, .
\]
The strategy set $Y$ is then divided into regions $Y(i)$ and $Y(j)$ (polytopes) for $i \in I$ and $j \in J$ such that $Y(i)$ contains strategies for which the deterministic strategy $i$ of player 1 is a best response and $Y(j)$ contain strategies which choose action $j$ with probability $0$:
\[
Y(i) = \{y \in Y \mid \forall k \in I . \; \mgame_1(i,:)y \geq \mgame_1(k,:)y \} \;\; \mbox{and} \; \; Y(j) = \{ y \in Y \mid y_{j-l} = 0\}
\]
where $\mgame_1(i,:)$ is the $i$th row vector of $\mgame_1$. A vector $y$ is then said to have label $k$ if $y \in Y(k)$, for $k \in I \cup J$.
The strategy set $X$ is divided analogously into regions $X(j)$ and $X(i)$ for $j \in J$ and $i\in I$ and a vector $x$ has label $k$ if $x \in X(k)$, for $k \in I \cup J$.
A pair of vectors $(x,y) \in X {\times} Y$ is \emph{completely labelled} if the union of the labels of $x$ and $y$ equals $I \cup J$.

The NE profiles of the game are the vector pairs that are completely labelled~\cite{LH64,SHP74}. 
The corresponding NE values can be computed through matrix-vector multiplication.
A SWNE profile and corresponding SWNE values can then be found through an NE profile with NE values that maximise the sum.

\subsection{Concurrent Stochastic Games}\label{csgs-sec}

We now define concurrent stochastic games~\cite{Sha53}, where players repeatedly make simultaneous choices over actions
that update the game state probabilistically.
\begin{definition}[Concurrent stochastic game] A \emph{concurrent stochastic multi-player game} (CSG) is a tuple
$\game = (N, S, \bar{S}, A, \Delta, \delta, \AP, \lab)$ where:
\begin{itemize}
\item $N=\{1,\dots,n\}$ is a finite set of players;
\item $S$ is a finite set of states and $\bar{S} \subseteq S$ is a set of initial states;
\item $A = (A_1\cup\{\bot\}) {\times} \cdots {\times} (A_n\cup\{\bot\})$ where $A_i$ is a finite set of actions available to player $i \in N$ and $\bot$ is an idle action disjoint from the set $\cup_{i=1}^n A_i$;
\item $\Delta \colon S \rightarrow 2^{\cup_{i=1}^n A_i}$ is an action assignment function;
\item $\delta \colon S {\times} A \rightarrow \dist(S)$ is a probabilistic transition function;
\item $\AP$ is a set of atomic propositions and $\lab \colon S \rightarrow 2^{\AP}$ is a labelling function.
\end{itemize}
\end{definition}
A CSG $\game$ starts in an initial state $\sinit \in \bar{S}$ and, when in state $s$, each player $i \in N$ selects an action from its available actions $A_i(s) \rmdef \Delta(s) \cap A_i$ if this set is non-empty, and from $\{ \bot \}$ otherwise. Supposing each player $i$ selects action $a_i$, the state of the game is updated according to the distribution $\delta(s,(a_1,\dots,a_n))$. A CSG is a \emph{turn-based stochastic multi-player game} (TSG) if for any state $s$ there is precisely one player $i$ for which $A_i(s) \neq \{ \bot \}$. Furthermore, a CSG is a \emph{Markov decision process} (MDP) if there is precisely one player $i$ such that $A_i(s) \neq \{ \bot \}$ for all states $s$.

A \emph{path} $\pi$ of $\game$ is a sequence $\pi = s_0 \xrightarrow{\alpha_0}s_1 \xrightarrow{\alpha_1} \cdots$
where $s_i \in S$, $\alpha_i\in A$ %
and
$\delta(s_i,\alpha_i)(s_{i+1})>0$ for all $i \geq 0$.
We denote by $\pi(i)$ the $(i{+}1)$th state of $\pi$, $\pi[i]$ the action associated with the $(i{+}1)$th transition and, if $\pi$ is finite, $\last(\pi)$ the final state. The length of a path $\pi$, denoted $|\pi|$, is the number of transitions appearing in the path. Let $\fpaths_\game$ and $\ipaths_\game$ ($\fpaths_{\game,s}$ and $\ipaths_{\game,s}$) be the sets of finite and infinite paths (starting in state $s$).

We augment CSGs with \emph{reward structures} of the form $r=(r_A,r_S)$, where $r_A \colon S{\times}A \ra \Qset$ is an action reward function (which maps each state and action tuple pair to a rational value that is accumulated when the action tuple is selected in the state) and $r_S \colon S \ra \Qset$ is a state reward function (which maps each state to a rational value that is incurred when the state is reached). 
We allow both positive and negative rewards; however, we will later impose certain restrictions to ensure the correctness of our model checking algorithms.

A \emph{strategy} for a player in a CSG resolves the player's choices in each state. These choices can depend on the history of the CSG's execution and can be randomised. Formally, we have the following definition.
\begin{definition}[Strategy]
A \emph{strategy} for player $i$ in a CSG $\game$ is a function of the form $\sigma_i \colon \fpaths_{\game} \ra \dist(A_i \cup \{ \bot \})$ such that, if $\sigma_i(\pi)(a_i)>0$, then $a_i \in A_i(\last(\pi))$. We denote by $\Sigma^i_\game$ the set of all strategies for player $i$.
\end{definition}

As for NFGs, a \emph{strategy profile} for $\game$ is a tuple $\sigma = (\sigma_1,\dots,\sigma_{n})$ of strategies for all players and, for player $i$ and strategy $\sigma_i'$, we define the sequence $\sigma_{-i}$ and profile $\sigma_{-i}[\sigma_i']$ in the same way.
For strategy profile $\sigma=(\sigma_1,\dots,\sigma_{n})$ and state $s$, we let
$\fpaths^\sigma_{\game,s}$ and
$\ipaths^\sigma_{\game,s}$ denote the
finite and
infinite paths from $s$ under the choices of $\sigma$.
We can define a probability measure $\Prob^{\sigma}_{\game,s}$ over the infinite paths $\ipaths^{\sigma}_{\game,s}$~\cite{KSK76}. This construction is based on first defining the probabilities for finite paths from the probabilistic transition function and choices of the strategies in the profile. More precisely, for a finite path $\pi = s_0 \xrightarrow{\alpha_0}s_1 \xrightarrow{\alpha_1} \cdots \xrightarrow{\alpha_{m-1}} s_m$ where $s_0=s$, the probability of $\pi$ under the profile $\sigma$ is defined by:
\[ \begin{array}{c}
\mathbf{P}^\sigma(\pi) \ \rmdef \ \prod_{j=0}^{m-1} \Big(  \Big( \prod_{i=1}^n \sigma_i (s_0 \xrightarrow{\alpha_0} \cdots \xrightarrow{\alpha_{j-1}} s_j)(\alpha_j(i))  \Big) \cdot \delta(s_j,\alpha_j)(s_{j+1}) \Big) \, .
\end{array} \]
Next, for each finite path $\pi$, we define the basic cylinder $C^\sigma(\pi)$ that consists of all infinite paths in $\ipaths^\sigma_{\game,s}$ that have $\pi$ as a prefix. Finally, using properties of cylinders, we can then construct the probability space $(\ipaths^{\sigma}_{\game,s}, \cF^\sigma_s, \Prob^{\sigma}_{\game,s})$, where $\cF^\sigma_s$ is the smallest $\sigma$-algebra generated by the set of basic cylinders $\{ C^\sigma(\pi) \mid \pi \in \ipaths^{\sigma}_{\game,s} \}$ and $Prob^{\sigma}_{\game,s}$ is the unique measure such that $Prob^{\sigma}_{\game,s}(C^\sigma(\pi)) = \mathbf{P}^\sigma(\pi)$ for all $\pi \in \fpaths^\sigma_{\game,s}$.

For random variable $X \colon \ipaths_{\game} \rightarrow \Qset$, we can then define for any profile $\sigma$ and state $s$ the expected value $\Eset^{\sigma}_{\game,s}(X)$ of $X$ in $s$ with respect to $\sigma$.
These random variables $X$ %
represent an \emph{objective} (or utility function) for a player, %
which includes both \emph{finite-horizon} and \emph{infinite-horizon} properties. Examples of finite-horizon properties include the probability of reaching a set of target states $T$ within $k$ steps or the expected reward accumulated over $k$ steps. These properties can be expressed by the random variables:
\begin{eqnarray*}
X(\pi) &=& 
\begin{cases}
1 & \mbox{if $\pi(j) \in T$ for some $j \leq k$} \\
0 & \mbox{otherwise}
\end{cases}
\\
Y(\pi) &=& \mbox{$\sum_{i=0}^{k-1}$} \big( r_A(\pi(i),\pi[i])+r_S(\pi(i)) \big) 
\end{eqnarray*}
respectively. Examples of infinite-horizon properties include the probability of reaching a target set $T$ and the expected cumulative reward until reaching a target set $T$ (where paths that never reach the target have infinite reward), which can be expressed by the random variables:
\begin{eqnarray*}
X(\pi) &=& \begin{cases}
1 & \mbox{if $\pi(j) \in T$ for some $j \in \Nset$} \\
0 & \mbox{otherwise}
\end{cases}
\\
Y(\pi) &=&
\begin{cases}
\sum_{i=0}^{k_{\min}-1} \big( r_A(\pi(i),\pi[i])+r_S(\pi(i)) \big) & \mbox{if $\pi(j) \in T$ for some $j \in \Nset$} \\
\infty & \mbox{otherwise}
\end{cases}
\end{eqnarray*}
where $k_{\min} = \min \{ j \in \Nset \mid \pi(j) \in T \}$, respectively.

Let us first focus on \emph{zero-sum} games, which are by definition two-player games.
As for NFGs (see \defref{minimax-thm}), for a two-player CSG $\game$ and a given objective $X$, we can consider the case where player 1 tries to maximise the expected value of $X$, while player 2 tries to minimise it. The above definition yields the \emph{value} of $\game$ with respect to $X$ if it is determined, i.e., if the maximum value that player 1 can ensure equals the minimum value that player 2 can ensure. Since the CSGs we consider are finite state and finitely-branching, it follows that they are determined for all the objectives we consider~\cite{Mar98}.
Formally we have the following.
\begin{definition}[Determinacy and optimality]\label{determined-def}
For a two-player CSG $\game$ and objective $X$, we say that $\game$ is \emph{determined} with respect to $X$ if, for any state $s$:
\[ \begin{array}{rcl}
\sup_{\sigma_1 \in \Sigma^1} \inf_{\sigma_2 \in \Sigma^2} \Eset^{\sigma_1,\sigma_2}_{\game,s}(X) & \; = \; & \inf_{\sigma_2 \in \Sigma^2} \sup_{\sigma_1 \in \Sigma^1} \Eset^{\sigma_1,\sigma_2}_{\game,s}(X) \, 
\end{array} \]
and call this the \emph{value} of $\game$ in state $s$ with respect to $X$, denoted $\val_\game(s,X)$.
Furthermore, a strategy $\sigma_1^\star$ of player 1 is \emph{optimal} with respect to $X$ if we have\/
$\smash{\Eset^{\sigma_1^\star,\sigma_2}_{\game,s}(X) \geq \val_\game(s,X)}$ for all $s\in S$ and $\sigma_2 \in \Sigma^2$
and a strategy of player 2 is \emph{optimal} with respect to $X$ if\/ $\smash{\Eset^{\sigma_1,\sigma_2^\star}_{\game,s}(X) \leq \val_\game(s,X)}$ for all $s\in S$ and $\sigma_1 \in \Sigma^1$.
\end{definition}

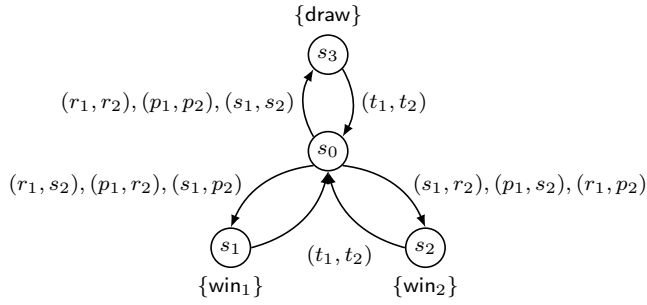
\begin{figure}[t]
\centering
\begin{tikzpicture}[->,>=latex,auto,node distance=2.8cm, semithick, scale=.40]
 \tikzstyle{every state}=[draw=black,text=black, initial text=]

\small
\node[state, inner sep=2pt,minimum size=0pt] (S)at(0,3.2) (s0) {$s_0$}; 
\node[state, inner sep=2pt,minimum size=0pt, label={below:$\{\mathsf{win}_1 \}$}] (S)at(-3.2,0) (s1) {$s_1$};
\node[state, inner sep=2pt,minimum size=0pt, label={below:$\{\mathsf{win}_2 \}$}] (S)at(3.2,0) (s2) {$s_2$}; 
\node[state, inner sep=2pt,minimum size=0pt, label={above:$\{\mathsf{draw}\}$}] (S)at(0,6.4) (s3) {$s_3$};
\path [->] (s0.north west) [bend left]
edge node [left] {$(r_1,r_2), (p_1,p_2),(s_1,s_2)$} (s3.south west);
\path [->] (s3.south east) [bend left]
edge node [right] {$(t_1,t_2)$} (s0.north east);
\path [->] (s0.south west) [bend right]
edge node[left,xshift=-1mm] {$(r_1,s_2),(p_1,r_2),(s_1,p_2)\,$} (s1.north);
\path [->] (s0.south east) [bend left]
edge node[right,xshift=1mm] {$\,(s_1,r_2),(p_1,s_2),(r_1,p_2)$} (s2.north); 

\path [->] (s1.east) [bend right]
edge[pos=0.5,swap] node[near start,xshift=2.5mm] {$(t_1,t_2)$} (s0.south);

\path [->] (s2.west) [bend left]
edge[pos=0.5,swap] (s0.south);

\end{tikzpicture}
\vspace*{-0.0cm}
\caption{Rock-paper-scissors CSG.}\label{fig:rps-csg}
\vspace*{-0.2cm}
\end{figure}

\begin{examp}\label{rps-eg-csg}
Consider the (non-probabilistic) CSG shown in \figref{fig:rps-csg} corresponding to two players repeatedly playing the rock-paper-scissors game (see \egref{rps-eg}). Transitions are labelled with action pairs, where $A_i = \{r_i,p_i,s_i,t_i\}$ for $1 \leq i \leq 2$, with $r_i$, $p_i$ and $s_i$ representing playing rock, paper and scissors, respectively, and $t_i$ restarting the game. The CSG starts in state $s_0$ and states $s_1$, $s_2$ and $s_3$ are labelled with atomic propositions corresponding to when a player wins or there is a draw in a round of the rock-paper-scissors game.

For the zero-sum objective to maximise the probability of reaching $s_1$ before $s_2$, i.e. player 1 winning a round of the game before player 2, the value of the game is $1/2$ and the optimal strategy of each player $i$ is to choose $r_i$, $p_i$ and $s_i$, each with probability $1/3$ in state $s_0$ and $t_i$ otherwise.
\end{examp}

For \emph{nonzero-sum} CSGs, with an objective $X_i$ for each player $i$, we will %
use NE,
which can be defined as for NFGs (see \defref{nash-def}).
In line with the definition of zero-sum optimality above
(and because the model checking algorithms we will later introduce are based on backward induction~\cite{SW+01,NMK+44}),
we restrict our attention to \emph{subgame-perfect} NE~\cite{OR04}, which are NE in \emph{every state} of the CSG.
\begin{definition}[Subgame-perfect NE] For CSG $\game$, a strategy profile $\sigma^\star$ is a \emph{subgame-perfect Nash equilibrium}
for objectives $\langle X_i \rangle_{i \in N}$ if and only if\/ $\Eset^{\sigma^\star}_{\game,s}(X_i) \geq \sup_{\sigma_i \in \Sigma_i} \Eset^{\sigma^\star_{-i}[\sigma_i]}_{\game,s}(X_i)$ for all $i \in N$ and $s \in S$.
\end{definition}
Furthermore, because we use a variety of objectives, including infinite-horizon objectives,
where the existence of NE is an open problem~\cite{BMS14},
we will in some cases use $\varepsilon$-NE, which do exist for any $\varepsilon>0$ for all the properties we consider.
\begin{definition}[Subgame-perfect $\varepsilon$-NE] For CSG $\game$ and $\varepsilon>0$, a strategy profile $\sigma^\star$ is a \emph{subgame-perfect $\varepsilon$-Nash equilibrium}
for objectives $\langle X_i \rangle_{i \in N}$ if and only if\/ $\Eset^{\sigma^\star}_{\game,s}(X_i) \geq \sup_{\sigma_i \in \Sigma_i} \Eset^{\sigma^\star_{-i}[\sigma_i]}_{\game,s}(X_i) - \varepsilon$ for all $i \in N$ and $s \in S$.
\end{definition}

\begin{examp}\label{mac-eg}
In \cite{BRE13} a non-probabilistic concurrent game is used to model medium access control. Two users with limited energy share a wireless channel and choose to transmit ($t_i$) or wait ($w_i$) and, if both transmit, the transmissions fail due to interference. We extend this to a CSG by assuming that transmissions succeed with probability $q$ if both transmit. \figref{mac-fig} presents a CSG model of the protocol where each user has enough energy for one transmission. The states are labelled with the status of each user, where the first value represents if the user $i$ has transmitted or not transmitted their message ($\mathit{tr}_i$ and $\mathit{nt}_i$ respectively) and the second if there is sufficient energy  to transmit or not ($1$ and $0$ respectively).

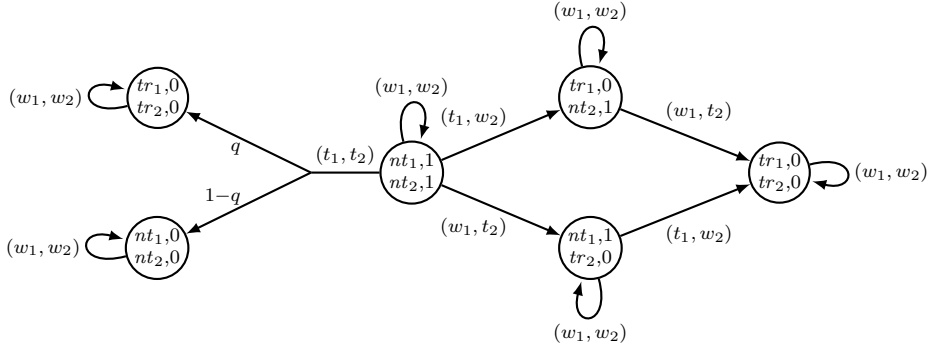
\begin{figure}[t]
\centering
\begin{tikzpicture}[auto,every node/.style={scale=0.9},initial text={},->,>=latex,thick]
\node[state, inner sep=-1pt, minimum size=0pt] (s0) {$\begin{array}{c}\mathit{nt}_1{,}1 \\ \mathit{nt}_2{,}1 \end{array}$};

\node[scale=0.02, left = 0.9 of s0] (nod1) {};
\draw[-] (s0) edge node [above]{$(t_1,t_2)$}(nod1);

\node[state, inner sep=-1pt, minimum size=0pt, above left=0.4 and 2.75 of s0] (s1) {$\begin{array}{c}\mathit{tr}_1{,}0 \\ \mathit{tr}_2{,}0 \end{array}$};
\node[state, inner sep=-1pt, minimum size=0pt, below left=0.4 and 2.75 of s0] (s2) {$\begin{array}{c}\mathit{nt}_1{,}0 \\ \mathit{nt}_2{,}0 \end{array}$};
\draw[->] (nod1) edge node[left,xshift=-0mm,yshift=-1mm] {$\;\;\;q$} (s1);
\draw[->] (nod1) edge node[left,xshift=0mm,yshift=1mm] {$\;\;\;1{-}q$} (s2);

\node[state, inner sep=-1pt, minimum size=0pt, above right=0.4 and 1.75 of s0] (s3) {$\begin{array}{c}\mathit{tr}_1{,}0 \\ \mathit{nt}_2{,}1 \end{array}$};

\node[state, inner sep=-1pt, minimum size=0pt, below right=0.4 and 1.75 of s0] (s4) {$\begin{array}{c}\mathit{nt}_1{,}1 \\ \mathit{tr}_2{,}0 \end{array}$};

\node[state, inner sep=-1pt, minimum size=0pt, right=4
of s0] (s5) {$\begin{array}{c}\mathit{tr}_1{,}0 \\ \mathit{tr}_2{,}0 \end{array}$};

\draw[->] (s0) edge node [above,xshift=-4mm]{$(t_1,w_2)$}(s3);
\draw[->] (s0) edge node [below,xshift=-4mm]{$(w_1,t_2)$}(s4);

\draw[->] (s4) edge node [below,xshift=2mm,yshift=-1mm]{$(t_1,w_2)$}(s5);
\draw[->] (s3) edge node [above,xshift=2mm,yshift=1mm]{$(w_1,t_2)$}(s5);

\draw[->] (s0) edge [loop above] node [above] {$(w_1,w_2)$} ();
\draw[->] (s3) edge [loop above] node [above,xshift=0mm,yshift=-0mm] {$(w_1,w_2)$} ();
\draw[->] (s4) edge [loop below] node [below,xshift=0mm,yshift=0mm] {$(w_1,w_2)$} ();

\draw[->] (s1) edge [loop left] node [left] {$(w_1,w_2)$} ();
\draw[->] (s2) edge [loop left] node [left] {$(w_1,w_2)$} ();
\draw[->] (s5) edge [loop right] node [right] {$(w_1,w_2)$} ();
\end{tikzpicture}
\vspace*{-0.4cm}
\caption{CSG model of a medium access control problem.}\label{mac-fig}
\vspace*{-0.4cm}
\end{figure}

If the objectives are to maximise the probability of a successful transmission, there are two subgame-perfect SWNE profiles, one when user 1 waits for user 2 to transmit before transmitting and another when user 2 waits for user 1 to transmit before transmitting. Under both profiles, both users successfully transmit with probability 1. If the objectives are to maximise the probability of being one of the first to transmit, then there is only one SWNE profile corresponding to both users immediately trying to transmit. In this case the probability of each user successfully transmitting is $q$.
\end{examp}

\section{Property Specification: Extending the Logic rPATL}\label{logic-sect}

In order to formalise properties of CSGs,
we propose an extension of the logic rPATL,
previously defined for zero-sum properties of TSGs~\cite{CFK+13b}.
In particular, we add operators to specify
nonzero-sum properties, using (social welfare or social cost) Nash equilibria,
and provide a semantics for this extended logic on CSGs.

\begin{definition}[Extended rPATL syntax]
The syntax of our extended version of {\rm rPATL} is given by the grammar:
\begin{eqnarray*}
\phi & \; \coloneqq \; & \mathtt{true} \; \mid \; \ap \; \mid \; \neg \phi \; \mid \; \phi \wedge \phi \; \mid \; \coalition{C}\probop{\sim q}{\psi} \; \mid \; \coalition{C}\rewop{r}{\sim x}{\rho} \; \mid \;   \nashop{C{:}C'}{\opt \sim x}{\theta} \\
\psi & \; \coloneqq \; & \next \phi \ \mid \ \phi \buntil \phi \ \mid \ \phi \until \phi \\
\rho & \; \coloneqq \; & \sinstant{=k} \ \mid \ \scumul{\leq k} 
\ \mid \  \future \phi \\
\theta & \; \coloneqq \; & \probop{}{\psi}{+}\probop{}{\psi} \ \mid \  \rewop{r}{}{\rho}{+}\rewop{r}{}{\rho}  
\end{eqnarray*}
where $\ap$ is an atomic proposition, $C$ and $C'$ are coalitions of players such that $C'  =  N {\setminus} C$, $\opt \in \{ \min,\max\}$, $\sim \,\in \{<, \leq, \geq, >\}$, $q \in\Qset\cap[0, 1]$, $x \in \Qset$, $r$ is a reward structure and $k \in \Nset$. %
\end{definition}
rPATL is a branching-time temporal logic for stochastic games,
which combines the probabilistic operator $\probopP$ of PCTL~\cite{HJ94},
PRISM's reward operator $\rewopR$~\cite{KNP11},
and the coalition operator $\coalition{C}$ of ATL~\cite{AHK02}.
The syntax distinguishes between state ($\phi$), path ($\psi$) and reward ($\rho$) formulae. State formulae are evaluated over states of a CSG, while path and reward formulae are both evaluated over paths.

The core operators from the existing version of rPATL~\cite{CFK+13b} are
$\coalition{C} \probop{\sim q}{\psi}$ and $\coalition{C} \rewop{r}{\sim x}{\rho}$.
A state satisfies a formula $\coalition{C} \probop{\sim q}{\psi}$ if the \emph{coalition} of players $C$ can ensure that the probability of the path formula $\psi$ being satisfied is ${\sim} q$, regardless of the actions of the other players ($N{\setminus}C$) in the game.
A state satisfies a formula $\coalition{C} \rewop{r}{\sim x}{\rho}$ if the players in $C$ can ensure that the expected value of the reward formula $\rho$ for reward structure $r$ is ${\sim} x$, whatever the other players do.
Such properties are inherently \emph{zero-sum} in nature as one coalition tries to maximise an objective (e.g., the probability of $\psi$) and the other tries to minimise it; hence, we call these \emph{zero-sum formulae}.

The most significant extension we make to the rPATL logic is the addition of
\emph{nonzero-sum formulae}.
These take the form $\nashop{C{:}C'}{\opt\sim x}{\theta}$, where
$C$ and $C'$ are two coalitions that represent a partition of the set of players $N$,
and $\theta$ is the sum
of either two probabilistic or two reward objectives.
Their meaning is as follows:
\begin{itemize}
\item
$\nashop{C{:}C'}{\max\sim x}{\theta}$ is satisfied
if there exists a subgame-perfect SWNE profile between coalitions $C$ and $C'$ under which the \emph{sum} of the objectives of $C$ and $C'$ in $\theta$ is ${\sim} x$;
\item
$\nashop{C{:}C'}{\min\sim x}{\theta}$ is satisfied
if there exists a subgame-perfect SCNE profile between coalitions $C$ and $C'$ under which the \emph{sum} of the objectives of $C$ and $C'$ in $\theta$ is ${\sim} x$.
\end{itemize}
\noindent
Like the existing zero-sum formulae, the new nonzero-sum formulae still 
split the players into just two coalitions, $C$ and $C' = N{\setminus}C$.
This means that the model checking algorithm (see \sectref{mc-sect})
reduces to finding equilibria in two-player CSGs,
which is more tractable than for larger numbers of players.
Technically, therefore, we could remove the second coalition $C'$ from the syntax.
However, we retain it for clarity about which coalition corresponds to each of the
two objectives, and to allow a later extension to more than two coalitions~\cite{KNPS20b}.

Both types of formula, zero-sum and nonzero-sum,
are composed of \emph{path} ($\psi$) and \emph{reward} ($\rho$) formulae,
used in probabilistic and reward objectives
included within $\probopP$ and $\rewopR$ operators, respectively.
For path formulae, we follow the existing rPATL syntax from~\cite{CFK+13b} and
allow \emph{next} ($\next \phi$), \emph{bounded until} ($\phi \buntil \phi$) and \emph{unbounded until} ($\phi \until \phi$).
We also allow the usual equivalences such as $\future\phi \equiv \true\until\phi$
(i.e., \emph{probabilistic reachability}) and $\bfuturep{k}\phi \equiv \true\buntil\phi$
(i.e., \emph{bounded probabilistic reachability}).

For reward formulae, we introduce some differences with respect to~\cite{CFK+13b}.
We allow instantaneous (state) reward at the $k$th step (\emph{instantaneous reward} $\sinstant{=k}$),
reward accumulated over $k$ steps (\emph{bounded cumulative reward} $\scumul{\leq k}$),
and reward accumulated until a formula $\phi$ is satisfied (\emph{expected reachability} $\future \phi$).
The first two, adapted from the property specification language of PRISM~\cite{KNP11},
were not previously included in rPATL, 
but proved to be useful for the case studies we present later in \sectref{case-sect}.
For the third ($\future \phi$), \cite{CFK+13b} defines several variants,
which differ in the treatment of paths that never reach a state satisfying $\phi$.
We restrict our attention to the most commonly used one,
which is the default in PRISM, where paths that never satisfy $\phi$ have infinite reward. In the case of zero-sum formulae, adding the additional variants is straightforward based on the algorithm of~\cite{CFK+13b}. On the other hand, for nonzero-sum formulae, currently no algorithms exist for these variants.

As for other probabilistic temporal logics, it is useful to consider \emph{numerical} queries, which represent the value of an objective, rather than checking whether it is above or below some threshold. In the case of zero-sum formulae, these take the form $\coalition{C} \probop{\min=?}{\psi}$, $\coalition{C} \probop{\max=?}{\psi}$, $\coalition{C} \rewop{r}{\min=?}{\rho}$ and $\coalition{C} \rewop{r}{\max=?}{\rho}$.
For nonzero-sum formulae, numerical queries are of the form $\coalition{C{:}C'}_{\max=?}[\theta]$ and $\coalition{C{:}C'}_{\min=?}[\theta]$ which return the SWNE and SCNE values, respectively.

\begin{examp}\label{logic-eg}
Consider a scenario in which two robots ($\mathit{rbt}_1$ and $\mathit{rbt}_2$) move concurrently over a square grid of cells, where each is trying to reach their individual goal location. Each step of the robot involves transitioning to an adjacent cell, possibly stochastically. Examples of \emph{zero-sum} formulae, where $\mathsf{crash}, \mathsf{goal}_1, \mathsf{goal}_2$ denote the obvious atomic propositions labelling states, include:
\begin{itemize}
\item
$\coalition{\mathit{rbt}_1}\probop{\max=?}{\neg \mathsf{crash}\ \untilop^{\leq 10}\ \mathsf{goal}_1}$ asks what is the maximum probability with which the first robot can ensure that it reaches its goal location within 10 steps and without crashing, no matter how the second robot behaves;
\item
$\coalition{\mathit{rbt}_2}\rewop{r_\mathit{crash}}{\leq 1.5}{\future \mathsf{goal}_2}$ states that, no matter the behaviour of the first robot, the second robot can ensure the expected number of times it crashes before reaching its goal is less than or equal to $1.5$ ($r_\mathit{crash}$ is a reward structure that assigns 1 to states labelled $\mathsf{crash}$ and 0 to all other states).
\end{itemize}
Examples of \emph{nonzero-sum} formulae include:
\begin{itemize}
\item
$\nashop{\mathit{rbt}_1{:}\mathit{rbt}_2}{\max \geq 2}{\probop{}{\future
\mathsf{goal}_1}{+}\probop{}{\neg \mathit{crash} \ \untilop^{\leq 10} \mathsf{goal}_2}}$ states the robots can collaborate so that both reach their goal with probability 1, with the additional condition that the second has to reach its goal within 10 steps without crashing;
\item
$\nashop{\mathit{rbt}_1{:}\mathit{rbt}_2}{\min=?}{\rewop{r_\mathit{steps}}{}{\future \mathsf{goal}_1}{+}\rewop{r_\mathit{steps}}{}{\future \mathsf{goal}_2}}$ asks what is the sum of expected reachability values when the robots collaborate and each wants to minimise their expected steps to reach their goal ($r_\mathit{steps}$ is a reward structure that assigns 1 to all state and action tuple pairs).
\end{itemize}
Examples of more complex \emph{nested} formulae for this scenario include the following, where $r_\mathit{steps}$ is as above:
\begin{itemize}
\item
$\coalition{\mathit{rbt}_1}\probop{\max=?}{ \future \coalition{\mathit{rbt}_2}\rewop{r_\mathit{steps}}{\geq 10}{\futureop \, \mathsf{goal}_2}}$ asks what is the maximum probability with which the first robot can get to a state where the expected time for the second robot to reach their goal is at least 10 steps;
\item
$\coalition{\mathit{rbt}_1,\mathit{rbt}_2}\probop{\geq 0.75}{ \future \nashop{\mathit{rbt}_1{:}\mathit{rbt}_2}{\min \leq 5}{\rewop{r_\mathit{steps}}{}{\futureop \, \mathsf{goal}_1}{+}\rewop{r_\mathit{steps}}{}{\futureop \, \mathsf{goal}_2}}}$ states the robots can collaborate to reach, with probability at least 0.75, a state where the sum of the expected time for the robots to reach their goals is at most 5.
\end{itemize}
\end{examp}
\noindent
Before giving the semantics of the logic, we define \emph{coalition games} which, for a CSG $\game$ and coalition (set of players) $C\subseteq N$, reduce $\game$ to a two-player CSG $\game^C$,
with one player representing $C$ and the other $N{\setminus}C$.
Without loss of generality we assume the coalition of players is of the form $C = \{1,\dots,n'\}$.
\begin{definition}[Coalition game] For CSG $\game = (N, S, \bar{s}, A, \Delta, \delta, \AP, \lab)$ and coalition $C =  \{1,\dots,n'\} \subseteq N$, the \emph{coalition game} $\game^C  =  ( \{1,2\}, S, \bar{s}, A^C, \Delta^C, \delta^C, \AP, \lab )$ is a two-player CSG where:
\begin{itemize}
\item
$A^C = (A^C_1\cup \{ \bot\}) {\times} (A^C_2\cup \{ \bot\})$;
\item
$A^C_1 = (A_1\cup\{\bot\}) {\times} \cdots {\times} (A_{n'}\cup\{\bot\}) \setminus \{(\bot,\dots,\bot)\}$;
\item
$A^C_2 = (A_{n'+1}\cup\{\bot\}) {\times} \cdots {\times} (A_n\cup\{\bot\}) \setminus \{(\bot,\dots,\bot)\}$;
\item $a_1^C  =  (a_1,\dots,a_m) \in \Delta^C(s)$ if and only if either $\Delta(s) \cap A_j =\emptyset$ and $a_j=\bot$ or $a_j \in \Delta(s)$  for all $1  \leq  j  \leq  m$ and $a_2^C  =  (a_{m+1},\dots,a_n) \in \Delta^C(s)$ if and only if either $\Delta(s) \cap A_j =\emptyset$ and $a_j=\bot$ or $a_j \in \Delta(s)$  for all $m+1  \leq  j  \leq  n$ for $s \in S$;
\item
for any $s \in S$, $a^C_1 \in A^C_1$ and $a^C_2 \in A^C_2$ we have $\delta^C(s,(a^C_1,a^C_2))=\delta(s,(a_1,a_2))$ where $a_i=(\bot,\dots,\bot)$ if $a^C_i=\bot$ and $a_i=a^C_i$ otherwise for $1  \leq i \leq 2$.
\end{itemize}
Furthermore, for a reward structure $r=(r_A,r_S)$ of $\game$, by abuse of notation we also use $r$ for the corresponding reward structure $r=(r^C_A,r^C_S)$ of $\game^C$ where:
\begin{itemize}
\item
for any $s \in S$, $a^C_1 \in A^C_1$ and $a^C_2 \in A^C_2$ we have $r^C_A(s,(a^C_1,a^C_2))=r_A(s,(a_1,a_2))$ where $a_i=(\bot,\dots,\bot)$ if $a^C_i=\bot$ and $a_i=a^C_i$ otherwise for $1  \leq i \leq 2$;
\item
for any $s \in S$ we have $r^C_S(s) =r_S(s)$.
\end{itemize}
\end{definition}
Our logic includes both \emph{finite-horizon} ($\next$, $\buntilop$, $\sinstant{=k}$, $\scumul{\leq k}$) and \emph{infinite-horizon} ($\untilop$, $\futureop$) temporal operators.
For the latter, the existence of \swne or \scne profiles is an open problem~\cite{BMS14},
but we can check for \eswne or \escne profiles for any $\varepsilon$.
Hence, we define the semantics of the logic in the context of a particular $\varepsilon$.

\begin{definition}[Extended rPATL semantics]\label{sem-def}
For a CSG $\game$, $\varepsilon>0$ and a formula $\phi$ in our rPATL extension,
we define the satisfaction relation $\sateps$ inductively over the structure of $\phi$.
The propositional logic fragment $(\mathtt{true}$, $\ap$, $\neg$, $\wedge)$
is defined in the usual way.
For a \emph{zero-sum} formula and state $s \in S$ of CSG $\game$, we have:
\begin{eqnarray*}
s \sateps \coalition{C} \probop{\sim q}{\psi}  
& \ \Leftrightarrow \ &
\exists \sigma_1 \in \Sigma^1 . \, \forall \sigma_2 \in \Sigma^2 . \,
\Eset^{\sigma_1,\sigma_2}_{\game^{C},s}(X^\psi) \sim q \\
s \sateps \coalition{C} \rewop{r}{\sim x}{\rho}
& \ \Leftrightarrow \ &
\exists \sigma_1 \in \Sigma^1 . \, \forall \sigma_2 \in \Sigma^2  . \, \Eset^{\sigma_1,\sigma_2}_{\game^C,s}(X^{r,\rho}) \sim x
\end{eqnarray*}
For a \emph{nonzero-sum} formula and state $s \in S$ of CSG $\game$, we have:
\begin{eqnarray*}
s \sateps \nashop{C{:}C'}{\opt \sim x}{\theta}
& \ \Leftrightarrow \ &
\exists \sigma_1^\star \in \Sigma^1, \sigma_2^\star \in \Sigma^2 . \, \left( \Eset^{\sigma_1^\star,\sigma_2^\star}_{\game^{C},s}(X^\theta_1)+\Eset^{\sigma_1^\star,\sigma_2^\star}_{\game^{C},s}(X^\theta_2) \right) \sim x %
\end{eqnarray*}
where $(\sigma_1^\star,\sigma_2^\star)$ is a subgame-perfect \eswne profile if $\opt = \max$, or a subgame-perfect \escne profile if $\opt = \min$, for the objectives $(X^\theta_1,X^\theta_2)$ in $\game^{C}$.
For an objective $X^{\psi}$, $X^{r,\rho}$
or $X^\theta_i$ ($1 \leq i \leq 2$),
and path $\pi \in \ipaths_{\game^C,s}$:
\begin{eqnarray*}
X^{\psi}(\pi) & \ = \ & 1 \;\mbox{if $\pi \sateps \psi$ and 0 otherwise} \\
X^{r,\rho}(\pi) & \ = \ & \rew(r,\rho)(\pi) \\
X^{\probop{}{\psi^1}{+}\probop{}{\psi^2}}_i(\pi) & \ = \ & 1 \;\mbox{if $\pi \sateps \psi^i$ and 0 otherwise} \\
X^{\rewop{r_{\scale{.75}{1}}}{}{\rho^1}{+}\rewop{r_{\scale{.75}{2}}}{}{\rho^2}}_i(\pi) & \ = \ &  \rew(r_i,\rho^i)(\pi)
\end{eqnarray*}
For a temporal formula and path $\pi \in \ipaths_{\game^C,s}$:
\begin{eqnarray*}
\pi \sateps \next \phi 
& \ \Leftrightarrow \ & 
\pi(1) \sateps \phi \\
\pi \sateps \phi_1 \buntil \phi_2 
& \ \Leftrightarrow \ &
\exists i \leq k . \, (\pi(i) \sateps \phi_2 \wedge \forall j < i . \, \pi(j) \sateps \phi_1 )
\\
\pi \sateps \phi_1 \until \phi_2 
& \ \Leftrightarrow \ & 
\exists i \in \Nset . \, ( \pi(i) \sateps \phi_2 \wedge \forall j < i  . \, \pi(j) \sateps \phi_1 )
\end{eqnarray*}
For a reward structure $r$, reward formula and path $\pi \in \ipaths_{\game^C,s}$:
\begin{eqnarray*}
\rew(r,\sinstant{=k})(\pi) & \ = \ & r_S(\pi(k)) \\
\rew(r,\scumul{\leq k})(\pi) & \ = \ & \mbox{$\sum_{i=0}^{k-1}$} \big( r_A(\pi(i),\pi[i])+r_S(\pi(i)) \big) \\
\rew(r,\future  \phi)(\pi) & \ = \ & \begin{cases}
\infty
& \mbox{if} \; \forall j \in \Nset . \, \pi(j) \notsateps \phi \\
\mbox{$\sum_{i=0}^{k_\phi-1}$} \big( r_A(\pi(i),\pi[i])+r_S(\pi(i)) \big) & \mbox{otherwise}
\end{cases}
\end{eqnarray*}
where $k_\phi = \min \{ k \mid \pi(k) \sateps \phi \}$.
\end{definition}
Using the notation above, we can also define the numerical queries mentioned previously.
For example, for state $s$ we have:
\[
\begin{array}{rcl}
\coalition{C} \probop{\min=?}{\psi}
& \ \rmdef & \ 
\inf_{\sigma_1 \in \Sigma^1_{\game^{\scale{.75}{C}}}} \sup_{\sigma_2 \in \Sigma^2_{\game^{\scale{.75}{C}}}}
\Eset^{\sigma_1,\sigma_2}_{\game^{C},s}(X^\psi) \\
\coalition{C} \probop{\max=?}{\psi}
& \ \rmdef & \ 
\sup_{\sigma_1 \in \Sigma^1_{\game^{\scale{.75}{C}}}} \inf_{\sigma_2 \in \Sigma^2_{\game^{\scale{.75}{C}}}}
\Eset^{\sigma_1,\sigma_2}_{\game^{C},s}(X^\psi) \, .
\end{array}
\]
As the zero-sum objectives appearing in the logic are either finite-horizon or infinite-horizon and correspond to either probabilistic until or expected reachability formulae, we have that CSGs are \emph{determined} (see \defref{determined-def}) with respect to these objectives~\cite{Mar98}, and therefore values exist. More precisely, for any CSG $\game$, coalition $C$, state $s$, path formula $\psi$, reward structure $r$ and reward formula $\rho$, the values $\val_{\game^C}(s,X^\psi)$ and $\val_{\game^C}(s,X^{r,\rho})$ of the game $\game^C$ in state $s$ with respect to the objectives $X^\psi$ and $X^{r,\rho}$ are well defined. 
This determinacy result also yields the following equivalences:
\[
\coalition{C} \probop{\max=?}{\psi} \ \equiv \ \coalition{N {\setminus} C} \probop{\min=?}{\psi} \;\;\; \mbox{and} \;\;\;
\coalition{C} \rewop{r}{\max=?}{\rho} \ \equiv \ \coalition{N {\setminus} C} \rewop{r}{\min=?}{\rho} \, .
\]
Also, as for other probabilistic temporal logics, we can represent negated path formulae by inverting the probability threshold, e.g.:
$\coalition{C} \probop{\geq q}{\neg\psi} \equiv  \coalition{C} \probop{\leq 1-q}{\psi}$ and $\nashop{C{:}C'}{\max\geq q }{\probop{}{\psi_1}{+}\probop{}{\psi_2}} \equiv \nashop{C{:}C'}{\min\leq 2-q }{\probop{}{\neg\psi_1}{+}\probop{}{\neg\psi_2}}$,
notably allowing the `globally' operator $\globally \phi \equiv \neg (\future \neg \phi)$ to be defined.
\section{Model Checking for Extended rPATL against CSGs}\label{mc-sect}

We now present model checking algorithms for the extended rPATL logic, introduced in the previous section, on a CSG $\game$.
Since rPATL is a branching-time logic, this works by recursively
computing the set $\Sat(\phi)$ of states satisfying formula $\phi$
over the structure of $\phi$, as is done for rPATL on TSGs~\cite{CFK+13b}.

If $\phi$ is a zero-sum formula of the form $\coalition{C} \probop{\sim q}{\psi}$ or $\coalition{C} \rewop{r}{\sim x}{\rho}$, this reduces to computing values for a two-player CSG (either $\game^C$ or $\game^{N \setminus C}$) with respect to $X^\psi$ or $X^{r,\rho}$. In particular, for $\sim \, \in \{ \geq , > \}$ and $s \in S$ we have:
\begin{eqnarray*}
s \sat \coalition{C} \probop{\sim q}{\psi} 
& \ \Leftrightarrow \ &
\val_{\game^C}(s,X^\psi) \sim q \\
s \sat \coalition{C} \rewop{r}{\sim x}{\rho} 
& \ \Leftrightarrow \ &
\val_{\game^C}(s,X^{r,\rho}) \sim x \, .
\end{eqnarray*}
and, since CSGs are determined for the zero-sum properties we consider, for $\sim \, \in \{ < , \leq \}$ we have:
\begin{eqnarray*}
s \sat \coalition{C} \probop{\sim q}{\psi}
& \ \Leftrightarrow \ &
\val_{\game^{N\setminus C}}(s,X^\psi) \sim q \\
s \sat \coalition{C} \rewop{r}{\sim x}{\rho}
& \ \Leftrightarrow \ &
\val_{\game^{N \setminus C}}(s,X^{r,\rho}) \sim x \, .
\end{eqnarray*}
Without loss of generality, for such formulae we focus on computing
$\val_{\game^C}(s,X^\psi)$ and $\val_{\game^C}(s,X^{r,\rho})$ and, to simplify the presentation, we denote these values by $\V_{\game^C}(s,\psi)$ and $\V_{\game^C}(s,r,\rho)$ respectively.

If, on the other hand, $\phi$ is a nonzero-sum formula of the form $\nashop{C{:}C'}{\opt \sim x}{\theta}$ then, from the semantics for $\nashop{C{:}C'}{\opt \sim x}{\theta}$ (see \defref{sem-def}),
computing $\Sat(\phi)$ reduces to the computation of subgame-perfect SWNE or SCNE values for the objectives $(X^\theta_1,X^\theta_2)$ and a comparison of their sum to the threshold $x$. Again, to simplify the presentation, will use the notation $\V_{\game^C}(s,\theta)$ for the SWNE values of the objectives $(X^\theta_1,X^\theta_2)$ in state $s$ of $\game^C$.

For the remainder of this section, we fix a CSG $\game= (N, S, \bar{S}, A, \Delta, \delta, \AP, \lab)$ and coalition $C$ of players and assume that the available actions of players 1 and 2 of the (two-player) CSG $\game^C$ in a state $s$ are $\{a_1,\dots,a_l\}$ and $\{b_1,\dots,b_m\}$, respectively.
We also fix a value $\varepsilon>0$ which, as discussed in \sectref{logic-sect},
is needed to define the semantics of our logic,
in particular for infinite-horizon objectives
where we need to consider $\varepsilon$-SWNE profiles.

\startpara{Assumptions}
Our model checking algorithms require several assumptions on CSGs,
depending on the operators that appear in the formula $\phi$.
These can all be checked using standard graph algorithms~\cite{dAH00}.
In the diverse set of model checking case studies
that we later present in \sectref{case-sect},
these assumptions have not limited the practical applicability of our model checking algorithms.

For zero-sum formulae, the only restriction is for infinite-horizon reward properties
on CSGs with both positive and negative reward values.
\begin{assumption}\label{game1-assum}
For a zero-sum formula of the form $\coalition{C}\rewop{r}{\sim x}{\future \phi}$,
from any state $s$ where $r_S(s)<0$ or $r_A(s,a)<0$ for some action $a$,
under all profiles of $\game$, with probability~1
we reach either a state satisfying $\phi$ or a state where all rewards are zero and which cannot be left with probability 1 under all profiles.
\end{assumption}
Without this assumption, the values computed during value iteration can oscillate, and therefore fail to converge (see~\appref{2-app}).
This restriction is not applied in the existing rPATL model checking algorithms for TSGs~\cite{CFK+13b} since that work assumes that all rewards are non-negative.

The remaining two assumptions concern nonzero-sum formulae
that contain infinite-horizon objectives.
We restrict our attention to a class of CSGs
that can be seen as a variant of \emph{stopping games}~\cite{CFKSW13},
as used for multi-objective TSGs.
Compared to~\cite{CFKSW13}, we use a weaker, objective-dependent assumption,
which ensures that, under all profiles,
with probability 1, eventually the outcome of each player's objective does not change by continuing.
\begin{assumption}\label{game2-assum}
For nonzero-sum formulae, if $\probop{}{\phi_1 \until \phi_2}$ is a probabilistic objective, then $\Sat(\neg \phi_1 \vee \phi_2)$ is reached with probability 1 from all states under all profiles of $\game$.
\end{assumption}
\begin{assumption}\label{game3-assum}
For nonzero-sum formulae, if $\rewop{r}{}{\future \phi}$ is a reward objective, then $\Sat(\phi)$ is reached with probability 1 from all states under all profiles of $\game$.
\end{assumption}
Like for Assumption~\ref{game1-assum}, without this restriction,
value iteration may not converge since values can oscillate
(see \appappref{3-app}{4-app}).
Notice that Assumption~\ref{game1-assum} is not required for nonzero-sum
properties containing negative rewards since Assumption~\ref{game3-assum} is itself a stronger restriction.

\subsection{Model Checking Zero-Sum Properties}\label{zero-sect}

In this section, we present algorithms for \emph{zero-sum} properties,
i.e., for computing the values $\V_{\game^C}(s,\psi)$ or $\V_{\game^C}(s,r,\rho)$
for path formulae $\psi$ or reward formulae $\rho$ in all states $s$ of $\game^C$.
We split the presentation into
\emph{finite-horizon properties}, which can be solved exactly using backward induction~\cite{SW+01,NMK+44},
and \emph{infinite-horizon properties}, for which we approximate values using value iteration~\cite{RF91,CH08}.
Both cases require the solution of matrix games, for which we rely on
the linear programming approach presented in \sectref{matrix-sect}.

\subsubsection{Computing the Values of Zero-Sum Finite-Horizon Formulae}

Finite-horizon properties are defined over a bounded number of steps:
the next or bounded until operators for probabilistic formulae,
and the instantaneous or bounded cumulative reward operators.
Computation of the values $\V_{\game^C}(s,\psi)$ or $\V_{\game^C}(s,r,\rho)$
for these is done recursively, based on the step bound, using backward induction and
solving matrix games in each state at each iteration.
The actions of each matrix game correspond to the actions available in that state;
the utilities are constructed from the transition probabilities $\delta^C$ of the game $\game^C$, the reward structure $r$ (in the case of reward formulae) and the values already computed recursively for successor states.

\startpara{Next} This is the simplest operator, over just one step,
and so in fact requires no recursion, just solution of a matrix game for each state.
If $\psi = \next \phi$, then for any state $s$ we have that $\V_{\game^C}(s,\psi) = \val(\mgame)$ where $\mgame \in \Qset^{l \times m}$ is the matrix game with:
\[ \begin{array}{c}
z_{i,j} = \sum_{s' \in \Sat(\phi)} \delta^C(s,(a_i,b_j))(s') \, .
\end{array} \]
\startpara{Bounded Until} If $\psi = \phi_1 \buntil \phi_2$, we compute the values for the path formulae $\psi_{n} = \phi_1 \ \untilop^{\leq n}\ \phi_2$ for $0 \leq n \leq k$ recursively. For any state $s$:
\[
\V_{\game^C}(s,\psi_n) = \begin{cases}
1 & \mbox{if $s \in \Sat(\phi_2)$} \\
0 & \mbox{else if $s \not\in \Sat(\phi_1)$} \\
0 & \mbox{else if $n = 0$} \\
\val(\mgame) & \mbox{otherwise}
\end{cases}
\]
where $\val(\mgame)$ equals the value of the matrix game $\mgame \in \Qset^{l \times m}$ with:
\begin{align*}
z_{i,j} & = \; \mbox{$\sum_{s' \in S}$} \, \delta^C(s,(a_i,b_j))(s') \cdot v^{s'}_{n-1}
\end{align*}
and $v^{s'}_{n-1} = \V_{\game^C}(s',\psi_{n-1})$ for all $s' \in S$.

\startpara{Instantaneous Rewards} If $\rho = \sinstant{=k}$, then for the reward structure $r$ we compute the values for the reward formulae $\rho_{n} = \sinstant{=n}$ for $0 \leq n \leq k$ recursively. For any state $s$:
\[
\V_{\game^C}(s,r,\rho_n) = \begin{cases}
r_S(s) & \mbox{if $n=0$} \\
\val(\mgame) & \mbox{otherwise}
\end{cases} 
\]
where $\val(\mgame)$ equals the value of the matrix game $\mgame \in \Qset^{l \times m}$ with:
\begin{align*}
z_{i,j} = \mbox{$\sum_{s' \in S}$} \, \delta^C(s,(a_i,b_j))(s') \cdot v^{s'}_{n-1}
\end{align*}
and $v^{s'}_{n-1} = \V_{\game^C}(s',r,\rho_{n-1})$ for all $s' \in S$.

\startpara{Bounded Cumulative Rewards} If $\rho = \scumul{\leq k}$, then for the reward structure $r$ we compute the values for the reward formulae $\rho_{n} = \scumul{\leq n}$ for $0 \leq n \leq k$ recursively. For any state $s$:
\[
\V_{\game^C}(s,r,\rho_n) = \begin{cases}
0 & \mbox{if $n=0$} \\
\val(\mgame) & \mbox{otherwise}
\end{cases}
\]
where $\val(\mgame)$ equals the value of the matrix game $\mgame\in \Qset^{l \times m}$ with:
\begin{align*}
z_{i,j} = r_A(s,(a_i,b_j)) + r_S(s) + \mbox{$\sum_{s' \in S}$} \, \delta^C(s,(a_i,b_j))(s') \cdot v^{s'}_{n-1}
\end{align*}
and $v^{s'}_{n-1} = \V_{\game^C}(s',r,\rho_{n-1})$ for all $s' \in S$.

\subsubsection{Computing the Values of Zero-Sum Infinite Horizon Formulae}\label{inf-zero-sect}

We now discuss how to compute the values $\V_{\game^C}(s,\psi)$ and $\V_{\game^C}(s,r,\rho)$ for infinite-horizon properties,
i.e., when the path formula $\psi$ is an until operator,
or for the expected reachability variant of the reward formulae $\rho$.
In both cases, we approximate these values using value iteration,
adopting a similar recursive computation to the finite-horizon cases above,
solving matrix games in each state and at each iteration,
which converges in the limit to the desired values.

Following the approach typically taken in probabilistic model checking tools
to implement value iteration, we estimate convergence of the iterative computation
by checking the maximum relative difference between successive iterations.
However, it is known~\cite{HM18} that, even for simpler probabilistic models
such as MDPs, this convergence criterion cannot be used to guarantee that
the final computed values are accurate to within a specified error bound.
Alternative approaches that resolve this by computing lower and upper bounds
for each state have been proposed for MDPs (e.g.~\cite{HM18,BCC+14})
and extended to both single- and multi-objective solution of TSGs~\cite{KKKW18,ACK+20};
extensions could be investigated for CSGs.
Another possibility is to use \emph{policy iteration} (see, e.g., \cite{CAH13}).

\startpara{Until} If $\psi = \phi_1 \until \phi_2$, the probability values can be approximated through value iteration using the fact that $\langle \V_{\game^C}(s,\phi_1 \buntil \phi_2) \rangle_{k \in \Nset}$ is a non-decreasing sequence converging to $\V_{\game^C}(s,\phi_1 \until \phi_2)$.
We compute $\V_{\game^C}(s,\phi_1 \buntil \phi_2)$ for increasingly large $k$
and estimate convergence as described above,
based on the difference between values in successive iterations.
However, we can potentially speed up convergence by first precomputing the set of states $S^{\psi}_0$ for which the value of the zero-sum objective $X^\psi$ is 0 and the set of states $S^{\psi}_1$ for which the value is 1 using standard graph algorithms~\cite{dAH00}. We can then apply value iteration to approximate $\V_{\game^C}(s,\phi_1 \until \phi_2) = \lim_{k \ra \infty} \V_{\game^C}(s,\phi_1 \until \phi_2,k)$ where:
\[
\V_{\game^C}(s,\phi_1 \until \phi_2,n) = \begin{cases}
1 & \mbox{if $s \in S^{\psi}_1$} \\
0 & \mbox{else if $s \in S^{\psi}_0$} \\
0 & \mbox{else if $n = 0$} \\
\val(\mgame) & \mbox{otherwise}
\end{cases}
\]
where $\val(\mgame)$ equals the value of the matrix game $\mgame \in \Qset^{l \times m}$ with:
\begin{align*}
z_{i,j} & = \; \mbox{$\sum_{s' \in S}$} \, \delta^C(s,(a_i,b_j))(s') \cdot v^{s'}_{n-1}
\end{align*}
and $v^{s'}_{n-1} = \V_{\game^C}(s',\phi_1 \until \phi_2,n-1)$ for all $s' \in S$.
\startpara{Expected Reachability} If $\rho = \future \phi$ and the reward structure is $r$, then we first make all states of $\game^C$ satisfying $\phi$ absorbing, {i.e., we remove all outgoing transitions from such states}.
Second, we find the set of states $S^\rho_\infty$ for which the reward is infinite; as in \cite{CFK+13b}, this involves finding the set of states satisfying the formula $\coalition{C}\probop{<1}{\future \phi}$ and we can use the graph algorithms of \cite{dAH00} to find these states. Again following \cite{CFK+13b}, to deal with zero-reward cycles we need to use value iteration to compute a greatest fixed point.
This involves first computing upper bounds on the actual values, by changing all zero reward values to some value $\gamma>0$ to construct the reward structure $r_\gamma=(r_A^\gamma,r_A^\gamma)$ and then applying value iteration to approximate $\V_{\game^C}(s,r_\gamma,\rho) = \lim_{k \ra \infty} \V_{\game^C}(s,r_\gamma,\rho_k)$ where:
\[
\V_{\game^C}(s,r_\gamma,\rho_n) = \begin{cases}
0 & \mbox{if $s \in Sat(\phi)$} \\
\infty & \mbox{if $s \in S^\rho_\infty$} \\
\val(\mgame) & \mbox{otherwise}
\end{cases}
\]
where $\val(\mgame)$ equals the value of the matrix game $\mgame \in \Qset^{l \times m}$ with:
\begin{align*}
z_{i,j} = r_A^\gamma(s,(a_i,b_j)) + r_S^\gamma(s) + \mbox{$\sum_{s' \in S}$} \, \delta^C(s,(a_i,b_j))(s') \cdot v^{s'}_{n-1}
\end{align*}
and $v^{s'}_{n-1} = \V_{\game^C}(s',r_\gamma,\rho_{n{-}1})$ for all $s' \in S$.
Finally, using these upper bounds as the initial values we again perform value iteration as above, except now using the original reward structure $r$,
i.e., to approximate $\V_{\game^C}(s,r_,\rho) = \lim_{k \ra \infty} \V_{\game^C}(s,r,\rho_k)$.
The choice of $\gamma$ can influence value iteration computations in opposing ways: increasing $\gamma$ can speed up convergence when computing over-approximations, while potentially slowing it down when computing the actual values.

\subsection{Model Checking Nonzero-Sum Properties}\label{nonzero-sect}

Next, we show how to compute subgame-perfect SWNE and SCNE values
for the two objectives corresponding to a \emph{nonzero-sum} formula.
As for the zero-sum case, the approach taken depends on whether the formula
contains finite-horizon or infinite-horizon objectives.
We now have three cases:
\begin{enumerate}
\item when both objectives are finite-horizon, we use backward induction~\cite{SW+01,NMK+44} to compute (precise) subgame-perfect SWNE and SCNE values;
\item
when both objectives are infinite-horizon, we use value iteration~\cite{RF91,CH08}
to approximate the values;
\item
when there is a mix of the two types of objectives,
we convert the problem to two infinite-horizon objectives on an augmented model.
\end{enumerate}
We describe these three cases separately in Sections~\ref{nonzero-bounded-sect}, \ref{nonzero-unbounded-sect} and \ref{nonzero-mixed-sect},  respectively,
focusing on the computation of SWNE values.
Then, in \sectref{scne-sect}, we explain how to adapt this for SCNE values.
 
In a similar style to the algorithms for zero-sum properties,
in all three cases the computation is an iterative process
that analyses a two-player game for each state at each step.
However, this now requires finding SWNE or SCNE values of a bimatrix game,
rather than solving a matrix game as in the zero-sum case.
We solve bimatrix games using the approach presented in \sectref{bimatrix-sect}
(see also the more detailed discussion of its implementation in \sectref{impl-sect}).

Another important aspect of our algorithms is that, for efficiency, 
if we reach a state where the value of one player's objective cannot change
(e.g., the goal of that player is reached or can no longer be reached),
then we switch to the simpler problem of solving an MDP
to find the optimal value for the other player in that state.
This is possible since the only SWNE profile in that state corresponds to
maximising the objective of the other player. More precisely:
\begin{itemize}
\item
the first player (whose objective cannot change) is \emph{indifferent},
since its value will not be affected by the choices of either player;
\item
the second player cannot do better than the optimal value of its objective in the corresponding MDP where both players collaborate;
\item
for any NE profile, the value of the first player is fixed and the value of the second is less than or equal to the optimal value of its objective in the MDP.
\end{itemize}
We use the notation $\probopP^{\max}_{\game,s}(\psi)$ and $\rewopR^{\max}_{\game,s}(r,\rho)$ for the maximum probability of satisfying the path formula $\psi$ and the maximum expected reward for the random variable $\rew(r,\rho)$, respectively, when the players collaborate in state $s$.
These values can be computed through standard MDP model checking~\cite{BdA95,dA99}.

\subsubsection{Computing SWNE Values of Finite-Horizon Nonzero-Sum Formulae}\label{nonzero-bounded-sect}

As for the zero-sum case, for a \emph{finite-horizon} nonzero-sum formula $\theta$,
we compute the SWNE values $\V_{\game^C}(s,\theta)$ for all states $s$ of $\game^C$
in a recursive fashion based on the step bound.
We now solve bimatrix games at each step, which are defined in a 
similar manner to the matrix games for zero-sum properties:
the actions of each bimatrix game correspond to the actions available in that state and
the utilities are constructed from the transition probabilities $\delta^C$ of the game $\game^C$, the reward structure (in the case of reward formulae) and the values already computed recursively for successor states.

For any state formula $\phi$ and state $s$ we let $\eta_{\phi}(s)$ equal $1$ if $s \in \Sat(\phi)$ and $0$ otherwise.
Recall that probability and reward values of the form $\probopP^{\max}_{\game,s}(\psi)$ and $\rewopR^{\max}_{\game,s}(r,\rho)$, respectively, are computed through standard MDP verification.
Below, we explain the computation for both types of finite-horizon probabilistic objectives (next and bounded until) and reward objectives (instantaneous and bounded cumulative), as well as combinations of each type.

\startpara{Next} If $\theta = \probop{}{\next \phi^1}{+}\probop{}{\next \phi^2}$, then 
$\V_{\game^C}(s,\theta)$ equals SWNE values of the bimatrix game $(\mgame_1, \mgame_2) \in \Qset^{l \times m}$ where: 
\begin{align*}
z^1_{i,j} & = \; \mbox{$\sum_{s' \in \Sat(\phi^1)}$} \, \delta^C(s,(a_i,b_j))(s') \\
z^2_{i,j} & = \; \mbox{$\sum_{s' \in \Sat(\phi^2)}$} \, \delta^C(s,(a_i,b_j))(s') \, .
\end{align*}
Again, since next is a 1-step property, no recursion is required.
\startpara{Bounded Until} If $\theta = \probop{}{\phi_1^1 \buntilp{k_1} \phi_2^1}\,+\,\probop{}{\phi_1^2 \buntilp{k_2} \phi_2^2}$, we compute SWNE values for the objectives for the nonzero-sum formulae $\theta_{n+n_1,n+n_2}=\probop{}{\phi_1^1 \buntilp{n+n_1} \phi_2^1}\,+\,\probop{}{\phi_1^2 \buntilp{n+n_2} \phi_2^2}$ for $0 \leq n \leq k$ recursively, where $k = \min\{k_1,k_2\}$, $n_1 = k_1{-}k$ and $n_2 = k_2{-}k$. In this case, there are three situations in which the value of the objective of one of the players cannot change, and hence we can switch to MDP verification. The first is when the step bound is zero for only one of the corresponding objectives, the second is when a state satisfying $\phi_2^i$ is reached by only one player $i$ (and therefore the objective is satisfied by that state) and the third is when a state satisfying $\neg \phi_1^i \wedge \neg \phi_2^i$ is reached by only one player $i$ (and therefore the objective is not satisfied by that state). For any state $s$, if $n = 0$, then:
\[
\V_{\game^C}(s,\theta_{n_1,n_2}) = \begin{cases}
(\eta_{\phi^1_2}(s),\eta_{\phi^2_2}(s)) & \mbox{if $n_1 = n_2 = 0$} \\
(\eta_{\phi^1_2}(s),\probopP^{\max}_{\game,s}(\phi_1^2 \buntilp{n_2} \phi_2^2)) & \mbox{else if $n_1 = 0$} \\
(\probopP^{\max}_{\game,s}(\phi_1^1 \buntilp{n_1} \phi_2^1),\eta_{\phi^2_2}(s)) & \mbox{otherwise.}
\end{cases}
\]
On the other hand, if $n>0$, then:
\[
\V_{\game^C}(s,\theta_{n+n_1,n+n_2}) = 
\begin{cases}
(1,1) & \mbox{if $s \in \Sat(\phi_2^1) \cap \Sat(\phi_2^2)$} \\
(1,\probopP^{\max}_{\game,s}(\phi_1^2 \buntilp{n+n_2} \phi_2^2)) & \mbox{else if $s \in \Sat(\phi_2^1)$} \\
(\probopP^{\max}_{\game,s}(\phi_1^1 \buntilp{n+n_1} \phi_2^1),1) & \mbox{else if $s \in \Sat(\phi_2^2)$} \\
(\probopP^{\max}_{\game,s}(\phi_1^1 \buntilp{n+n_1} \phi_2^1),0) & \mbox{else if $s \in \Sat(\phi_1^1) \setminus \Sat(\phi_1^2)$} \\
(0,\probopP^{\max}_{\game,s}(\phi_1^2 \buntilp{n+n_2} \phi_2^2)) & \mbox{else if $s \in \Sat(\phi_1^2) \setminus \Sat(\phi_1^1)$} \\
(0,0) & \mbox{else if $s \not\in \Sat(\phi_1^1) \cap \Sat(\phi_1^2)$} \\
\val(\mgame_1, \mgame_2) & \mbox{otherwise}
\end{cases}
\]
where $\val(\mgame_1, \mgame_2)$ equals SWNE values of the bimatrix game $(\mgame_1,\mgame_2)\in \Qset^{l \times m}$:
\begin{align*}
z^1_{i,j} & = \; \mbox{$\sum_{s' \in S}$} \, \delta^C(s,(a_i,b_j))(s') \cdot v^{s',1}_{(n-1)+n_1} \\ 
z^2_{i,j} & = \; \mbox{$\sum_{s' \in S}$} \, \delta^C(s,(a_i,b_j))(s') \cdot v^{s',2}_{(n-1)+n_2}
\end{align*}
and $(v^{s',1}_{(n-1)+n_1},v^{s',2}_{(n-1)+n_2}) = \V_{\game^C}(s',\theta_{(n-1)+n_1,(n-1)+n_2})$ for all $s' \in S$.
\startpara{Next and Bounded Until} If $\theta = \probop{}{\next \phi^1}{+}\probop{}{\phi_1^2 \buntilp{k_2} \phi_2^2}$, then 
$\V_{\game^C}(s,\theta)$ equals SWNE values of the bimatrix game $(\mgame_1, \mgame_2) \in \Qset^{l \times m}$ where: 
\begin{align*}
z^1_{i,j} & = \; \mbox{$\sum_{s' \in S}$} \, \delta^C(s,(a_i,b_j))(s') \cdot \eta_{\phi^1}(s') \\
z^2_{i,j} & = \; \begin{cases}
1 & \mbox{if $s \in \Sat(\phi_2^2)$} \\
0 & \mbox{else if $k_2 = 0$} \\
\mbox{$\sum_{s' \in S}$} \, \delta^C(s,(a_i,b_j))(s') \cdot \probopP^{\max}_{\game,s}(\phi_1^2 \buntilp{k_2-1} \phi_2^2) & \mbox{else if $\Sat(\phi_1^2)$} \\
0 & \mbox{otherwise.}
\end{cases}
\end{align*}
In this case, since the value for objectives corresponding to next formulae cannot change after the first step, we can always switch to MDP verification after this step. The symmetric case is similar.
\startpara{Instantaneous Rewards} If $\theta = \rewop{r_1}{}{\sinstant{=k_1}}{+}\rewop{r_2}{}{\sinstant{=k_2}}$, we compute SWNE values of the objectives for the nonzero-sum formulae $\theta_{n+n_1,n+n_2}=\rewop{r_1}{}{\sinstant{=n+n_1}}+\rewop{r_2}{}{\sinstant{=n+n_2}}$ for $0 \leq n \leq k$ recursively, where $k = \min\{k_1,k_2\}$, $n_1 = k_1{-}k$ and $n_2 = k_2{-}k$. Here, there is only one situation in which the value of the objective of one of the players cannot change: when one of the step bounds equals zero. Hence, this is the only time we switch to MDP verification. For any state $s$, if $n = 0$, then:
\[
\V_{\game^C}(s,\theta_{n_1,n_2}) = 
\begin{cases}
(r^1_S(s),r^2_S(s)) & \mbox{if $n_1 = n_2 = 0$} \\
(r^1_S(s),\rewopR^{\max}_{\game,s}(r_2,\sinstant{=n_2})) & \mbox{else if $n_1 = 0$} \\
(\rewopR^{\max}_{\game,s}(r_1,\sinstant{= n_1}),r^2_S(s)) & \mbox{otherwise.}
\end{cases}
\]
On the other hand, if $n>0$, then
$\V_{\game^C}(s,\theta_{n+n_1,n+n_2})$ equals SWNE values of the bimatrix game $(\mgame_1, \mgame_2) \in \Qset^{l \times m}$ where: 
\begin{align*}
z^1_{i,j} & = \; \mbox{$\sum_{s' \in S}$} \, \delta^C(s,(a_i,b_j))(s') \cdot v^{s',1}_{(n-1)+n_1} \\
z^2_{i,j} & = \; \mbox{$\sum_{s' \in S}$} \, \delta^C(s,(a_i,b_j))(s') \cdot v^{s',2}_{(n-1)+n_2}
\end{align*}
and $(v^{s',1}_{(n-1)+n_1},v^{s',2}_{(n-1)+n_2}) = \V_{\game^C}(s',\theta_{(n-1)+n_1,(n-1)+n_2})$ for all $s' \in S$.
\startpara{Bounded Cumulative Rewards} If $\theta = \rewop{r_1}{}{\scumul{\leq k_1}}{+}\rewop{r_2}{}{\scumul{\leq k_2}}$, we compute values of the objectives for the formulae $\theta_{n+n_1,n+n_2}=\rewop{r_1}{}{\scumul{\leq n+n_1}}+\rewop{r_2}{}{\scumul{\leq n+n_2}}$ for $0 \leq n \leq k$ recursively, where $k = \min\{k_1,k_2\}$, $n_1 = k_1{-}k$ and $n_2 = k_2{-}k$. As for instantaneous rewards, the only time we can switch to MDP verification is when one of the step bounds equals zero.
For state $s$, if $n = 0$:
\[
\V_{\game^C}(s,\theta_{n_1,n_2}) = 
\begin{cases}
(0,0) & \mbox{if $n_1 = n_2 = 0$} \\
(0,\rewopR^{\max}_{\game,s}(r_2,\scumul{\leq n_2})) & \mbox{else if $n_1 = 0$} \\
(\rewopR^{\max}_{\game,s}(r_1,\scumul{\leq n_1}),0) & \mbox{otherwise}
\end{cases}
\]
and if $n>0$, then $\V_{\game^C}(s,\theta_{n+n_1,n+n_2})$ equals SWNE values of the bimatrix game $(\mgame_1, \mgame_2) \in \Qset^{l \times m}$: 
\begin{align*}
z^1_{i,j} &= \; r^1_S(s) + r^1_A(s,(a_i,b_j)) + \mbox{$\sum_{s' \in S}$} \, \delta^C(s,(a_i,b_j))(s') \cdot v^{s',1}_{(n-1)+n_1} \\
z^2_{i,j} &= \; r^2_S(s) + r^2_A(s,(a_i,b_j)) + \mbox{$\sum_{s' \in S}$} \, \delta^C(s,(a_i,b_j))(s') \cdot v^{s',l}_{(n-1)+n_2}
\end{align*}
and $(v^{s',1}_{(n-1)+n_1},v^{s',2}_{(n-1)+n_2}) = \V_{\game^C}(s',\theta_{(n-1)+n_1,(n-1)+n_2})$ for all $s' \in S$.
\startpara{Bounded Instantaneous and Cumulative Rewards} If $\theta = \rewop{r_1}{}{\sinstant{\leq k_1}}{+}\rewop{r_2}{}{\scumul{\leq k_2}}$, we compute values of the objectives for the formulae $\theta_{n+n_1,n+n_2}=\rewop{r_1}{}{\sinstant{\leq n+n_1}}+\rewop{r_2}{}{\scumul{\leq n+n_2}}$ for $0 \leq n \leq k$ recursively, where $k = \min\{k_1,k_2\}$, $n_1 = k_1{-}k$ and $n_2 = k_2{-}k$. Again,  here we can only switch to MDP verification when one of the step bounds equals zero. For state $s$, if $n = 0$:
\[
\V_{\game^C}(s,\theta_{n_1,n_2}) = 
\begin{cases}
(r^1_S(s),0) & \mbox{if $n_1 = n_2 = 0$} \\
(r^1_S(s),\rewopR^{\max}_{\game,s}(r_2,\scumul{\leq n_2})) & \mbox{else if $n_1 = 0$} \\
(\rewopR^{\max}_{\game,s}(r_1,\sinstant{\leq n_1}),0) & \mbox{otherwise}
\end{cases}
\]
and if $n>0$, then $\V_{\game^C}(s,\theta_{n+n_1,n+n_2})$ equals SWNE values of the bimatrix game $(\mgame_1, \mgame_2) \in \Qset^{l \times m}$: 
\begin{align*}
z^1_{i,j} &= \; \mbox{$\sum_{s' \in S}$} \, \delta^C(s,(a_i,b_j))(s') \cdot v^{s',1}_{(n-1)+n_1} \\
z^2_{i,j} &= \; r^2_S(s) + r^2_A(s,(a_i,b_j)) + \mbox{$\sum_{s' \in S}$} \, \delta^C(s,(a_i,b_j))(s') \cdot v^{s',l}_{(n-1)+n_2}
\end{align*}
and $(v^{s',1}_{(n-1)+n_1},v^{s',2}_{(n-1)+n_2}) = \V_{\game^C}(s',\theta_{(n-1)+n_1,(n-1)+n_2})$ for all $s' \in S$. 
The symmetric case follows similarly.

\subsubsection{Computing SWNE Values of Infinite-Horizon Nonzero-Sum Formulae}\label{nonzero-unbounded-sect}

We next show how to compute SWNE values $\V_{\game^C}(s,\theta)$ for \emph{infinite-horizon} nonzero-sum formulae $\theta$ in all states $s$ of $\game^C$.
As for the zero-sum case, we approximate these using a value iteration approach.
Each step of this computation is similar in nature to the algorithms
in the previous section, where a bimatrix game is solved for each state,
and a reduction to solving an MDP is used
after one of the player's objective can no longer change.

A key aspect of the value iteration algorithm is that,
while the SWNE (or SCNE) values take the form of a pair, with one value for each player,
convergence is defined over the \emph{sum} of the two values.
This is because there is not necessarily a unique pair of such values,
but the maximum (or minimum) of the sum of NE values \emph{is} uniquely defined.
Convergence of value iteration is estimated in the same way as for the zero-sum
computation (see \sectref{inf-zero-sect}), by comparing values in successive iterations.
As previously, this means that we are not able to guarantee that the computed
values are within a particular error bound of the exact values.

Below, we give the algorithms for the cases of two infinite-horizon objectives.
The notation used is as in the previous section:
for any state formula $\phi$ and state $s$ we let $\eta_{\phi}(s)$ equal $1$ if $s \in \Sat(\phi)$ and $0$ otherwise; and values of the form $\probopP^{\max}_{\game,s}(\psi)$ and $\rewopR^{\max}_{\game,s}(r,\rho)$ are computed through standard MDP verification.
\startpara{Until}
If $\theta = \probop{}{\phi_1^1 \until \phi_2^1}{+}\probop{}{\phi_1^2 \until \phi_2^2}$, values for any state $s$ can be computed through value iteration as the limit $\V_{\game^C}(s,\theta) = \lim_{n \ra \infty} \V_{\game^C}(s,\theta,n)$ where:
\[
\V_{\game^C}(s,\theta,n) = \begin{cases}
(1,1) & \mbox{if $s \in \Sat(\phi_2^1) \cap \Sat(\phi_2^2)$} \\
(1,\probopP^{\max}_{\game,s}(\phi_1^2 \until \phi_2^2)) & \mbox{else if $s \in \Sat(\phi_2^1)$} \\
(\probopP^{\max}_{\game,s}(\phi_1^1 \until \phi_2^1),1) & \mbox{else if $s \in \Sat(\phi_2^2)$} \\
(\probopP^{\max}_{\game,s}(\phi_1^1 \until \phi_2^1),0) & \mbox{else if $s \in \Sat(\phi_1^1) \setminus \Sat(\phi_1^2)$} \\
(0,\probopP^{\max}_{\game,s}(\phi_1^2 \until \phi_2^2)) & \mbox{else if $s \in \Sat(\phi_1^2) \setminus \Sat(\phi_1^1)$} \\
(0,0) & \mbox{else if $n = 0$ or $s \not\in \Sat(\phi_1^1) \cap \Sat(\phi_1^2)$} \\
\val(\mgame_1, \mgame_2) & \mbox{otherwise}
\end{cases}
\]
where $\val(\mgame_1, \mgame_2)$ equals SWNE values of the bimatrix game $(\mgame_1,\mgame_2)\in \Qset^{l \times m}$:
\begin{align*}
z^1_{i,j} & = \; \mbox{$\sum_{s' \in S}$} \, \delta^C(s,(a_i,b_j))(s') \cdot v^{s',1}_{n-1} \\
z^2_{i,j} & = \; \mbox{$\sum_{s' \in S}$} \, \delta^C(s,(a_i,b_j))(s') \cdot v^{s',2}_{n-1} 
\end{align*}
and $(v^{s',1}_{n-1},v^{s',2}_{n-1}) = \V_{\game^C}(s',\theta,n{-}1)$ for all $s' \in S$.

As can be seen, there are two situations in which we switch to MDP verification. These correspond to the two cases where the value of the objective of one of the players cannot change: when a state satisfying $\phi_2^i$ is reached for only one player $i$ (and therefore the objective is satisfied by that state) and when a state satisfying $\neg \phi_1^i \wedge \neg \phi_2^i$ is reached for only one player $i$ (and therefore the objective is not satisfied by that state).
\startpara{Expected Reachability} If $\theta = \rewop{r_1}{}{\future \phi^1}{+}\rewop{r_2}{}{\future \phi^2}$, values can be computed through value iteration as the limit 
$\V_{\game^C}(s,\theta) = \lim_{n \ra \infty} \V_{\game^C}(s,\theta,n)$ where:
\[
\V_{\game^C}(s,\theta,n) = \begin{cases}
(0,0) & \mbox{if $s \in \Sat(\phi^1) \cap \Sat(\phi^2)$} \\
(0,0) & \mbox{else if $n = 0$} \\
(0,\rewopR^{\max}_{\game,s}(r_2,\future \phi^2)) & \mbox{else if $s \in \Sat(\phi^1)$} \\
(\rewopR^{\max}_{\game,s}(r_1,\future \phi^1),0) & \mbox{else if $s \in \Sat(\phi^2)$} \\
\val(\mgame_1, \mgame_2) & \mbox{otherwise}
\end{cases}
\]
where $\val(\mgame_1, \mgame_2)$ equals SWNE values of the bimatrix game $(\mgame_1,\mgame_2)\in \Qset^{l \times m}$:
\begin{align*}
z^1_{i,j} &= \; r^1_S(s) + r^1_A(s,(a_i,b_j)) + \mbox{$\sum_{s' \in S}$} \, \delta^C(s,(a_i,b_j))(s') \cdot v^{s',1}_{n-1} \\
z^2_{i,j} &= \; r^2_S(s) + r^2_A(s,(a_i,b_j)) + \mbox{$\sum_{s' \in S}$}\, \delta^C(s,(a_i,b_j))(s') \cdot v^{s',2}_{n-1} 
\end{align*}
and $(v^{s',1}_{n-1},v^{s',2}_{n-1}) = \V_{\game^C}(s',\theta,n{-}1)$ for all $s' \in S$. 

In this case, the only situation in which the value of the objective of one of the players cannot change is when only one of their goals is reached, i.e., when a state satisfying $\phi^i$ is reached for only one player $i$. This is therefore the only time we switch to MDP verification.

\subsubsection{Computing SWNE Values of Mixed Nonzero-Sum Formulae}\label{nonzero-mixed-sect}

We now present the algorithms for computing SWNE values of nonzero-sum formula containing a \emph{mixture} of both finite- and infinite-horizon objectives.
This is achieved by finding values for a sum of two modified (infinite-horizon) objectives $\theta'$ on a modified game $\game'$ using the algorithms presented in \sectref{nonzero-unbounded-sect}. This approach is based on the standard construction for converting the verification of finite-horizon properties to infinite-horizon properties~\cite{Put94}. We consider the cases when the first objective is finite-horizon and second infinite-horizon; the symmetric cases follow similarly. In each case, the modified game has states of the form $(s,n)$, where $s$ is a state of $\game^C$, $n \in \Nset$ and the SWNE values $\V_{\game^C}(s,\theta)$ are given by the SWNE values $\V_{\game'}((s,0),\theta')$.
\startpara{Next and Unbounded Until} If $\theta = \probop{}{\next \phi^1}{+}\probop{}{\phi_1^2 \until \phi_2^2}$, then we construct the game $\game' = (\{1,2\}, S', \bar{S}', A^C, \Delta', \delta', \{ \ap_{\phi^1}, \ap_{\phi_1^2}, \ap_{\phi_2^2} \} , \lab')$ where:
\begin{itemize}
\item
$S' = \{ (s,n) \mid s \in S \wedge 0 \leq n \leq 2 \}$ and $\bar{S}' = \{ (s,0) \mid s \in S \}$;
\item $\Delta'((s,n)) = \Delta^C(s)$ for all $(s,n) \in S'$;
\item for any $(s,n),(s',n') \in S'$ and $a \in A^C$:
\[
\delta'((s,n),a)((s',n')) = \begin{cases}
\delta^C(s,a)(s') & \mbox{if $0 \leq n \leq 1$ and $n' = n{+}1$} \\
\delta^C(s,a)(s') & \mbox{else if $n = n' = 2$} \\
0 & \mbox{otherwise;}
\end{cases}
\]
\item
for any $(s,n) \in S'$ and $1 \leq j \leq 2$:
\begin{itemize}
\item
$\ap_{\phi^1} \in \lab'((s,n))$ if and only if $s \in \Sat(\phi^1)$ and $n = 1$;
\item
$\ap_{\phi_j^2} \in \lab'((s,n))$ if and only if $s \in \Sat(\phi_j^2)$.
\end{itemize}
\end{itemize}
and compute the SWNE values of $\theta'=\probop{}{\true \until \ap_{\phi^1}}{+}\probop{}{\ap_{\phi_1^2} \until \ap_{\phi_2^2}}$ for $\game'$.
\startpara{Bounded and Unbounded Until} If $\theta = \probop{}{\phi_1^1 \buntilp{k_1} \phi_2^1} + \probop{}{\phi_1^2 \until \phi_2^2}$, then we construct the game
$\game' = (\{1,2\}, S', \bar{S}', A^C, \Delta', \delta', \{ \ap_{\phi_1^1} , \ap_{\phi_2^1}, \ap_{\phi_1^2}, \ap_{\phi_2^2} \}, \lab')$
where:
\begin{itemize}
\item
$S' = \{ (s,n) \mid s \in S \wedge 0 \leq n \leq k_1{+}1 \}$ and $\bar{S}' = \{ (s,0) \mid s \in S \}$;
\item $\Delta'((s,n)) = \Delta^C(s)$ for all $(s,n) \in S'$;
\item for any $(s,n),(s',n') \in S'$ and $a \in A^C$:
\[
\delta'((s,n),a)((s',n')) = \begin{cases}
\delta^C(s,a)(s') & \mbox{if $0 \leq n \leq k_1$ and $n' = n{+}1$} \\
\delta^C(s,a)(s') & \mbox{else if $n = n' = k_1{+}1$} \\
0 & \mbox{otherwise;}
\end{cases}
\]
\item
for any $(s,n) \in S'$ and $1 \leq j \leq 2$:
\begin{itemize}
\item
$\ap_{\phi_1^1} \in \lab'((s,n))$ if and only if $s \in \Sat(\phi_1^1)$ and $0 \leq n \leq k_1$;
\item
$\ap_{\phi_2^1} \in \lab'((s,n))$ if and only if $s \in \Sat(\phi_2^1)$ and $0 \leq n \leq k_1$;
\item
$\ap_{\phi_j^2} \in \lab'((s,n))$ if and only if $s \in \Sat(\phi_j^2)$.
\end{itemize}
\end{itemize}
and compute the SWNE values of $\theta'=\probop{}{\ap_{\phi_1^1} \until \ap_{\phi_2^1}}{+}\probop{}{\ap_{\phi_1^2} \until \ap_{\phi_2^2}}$ for $\game'$.
\startpara{Bounded Instantaneous and Expected Rewards} If $\theta = \rewop{r_1}{}{\sinstant{=k_1}}+\rewop{r_2}{}{\future \phi^2}$, then we construct the game $\game' = (\{1,2\}, S', \bar{S}', A^C, \Delta', \delta', \{ \ap_{k_1+1} , \ap_{\phi^2} \}, \lab')$ and reward structures $r_1'$ and $r_2'$ where:
\begin{itemize}
\item
$S' = \{ (s,n) \mid s \in S \wedge 0 \leq n \leq k_1{+}1 \}$ and $\bar{S}' = \{ (s,0) \mid s \in S \}$;
\item $\Delta'((s,n)) = \Delta^C(s)$ for all $(s,n) \in S'$;
\item for any $(s,n),(s',n') \in S'$ and $a \in A^C$:
\[
\delta'((s,n),a)((s',n')) = \begin{cases}
\delta^C(s,a)(s') & \mbox{if $0 \leq n \leq k_1$ and $n' = n{+}1$} \\
\delta^C(s,a)(s') & \mbox{else if $n = n' = k_1{+}1$} \\
0 & \mbox{otherwise;}
\end{cases}
\]
\item
for any $(s,n) \in S'$:
\begin{itemize}
\item
$\ap_{k_1+1} \in \lab'((s,n))$ if and only if $n = k_1{+}1$;
\item
$\ap_{\phi^2} \in \lab'((s,n))$ if and only if $s \in \Sat(\phi^2)$;
\end{itemize}
\item
for any $(s,n) \in S'$ and $a \in A^C$:
\begin{itemize}
\item
$r^{1'}_A((s,n),a)=0$ and $r^{1'}_S((s,n))=r_S^{1^C}(s)$ if $n = k_1$ and $r^{1'}_A((s,n),a)=0$ and $r^{1'}_S((s,n))=0$ otherwise;
\item
$r^{2'}_A((s,n),a)=r^{2^C}_A(s)(a)$ and $r^{2'}_S((s,n))=r_S^{2^C}(s)$.
\end{itemize}
\end{itemize}
and compute the SWNE values of $\theta'=\rewop{r_1'}{}{\future \ap_{k_1+1}}{+}\rewop{r_2'}{}{\future \ap_{\phi^2}}$ for $\game'$.
\startpara{Bounded Cumulative and Expected Rewards} If $\theta = \rewop{r_1}{}{\scumul{\leq k_1}}+\rewop{r_2}{}{\future \phi^2}$, then we construct the game $\game' = (\{1,2\}, S', \bar{S}', A^C, \Delta', \delta', \{ \ap_{k_1} , \ap_{\phi^2} \}, \lab')$ and reward structures $r_1'$ and $r_2'$ where:
\begin{itemize}
\item
$S' = \{ (s,n) \mid s \in S \wedge 0 \leq i \leq k_1 \}$ and $\bar{S}' = \{ (s,0) \mid s \in S \}$;
\item $\Delta'((s,n)) = \Delta^C(s)$ for all $(s,n) \in S'$;
\item for any $(s,n),(s',n') \in S'$ and $a \in A^C$:
\[
\delta'((s,n),a)((s',n')) = \begin{cases}
\delta^C(s,a)(s') & \mbox{if $0 \leq n \leq k_1{-}1$ and $n' = n{+}1$} \\
\delta^C(s,a)(s') & \mbox{else if $n = n' = k_1$} \\
0 & \mbox{otherwise;}
\end{cases}
\]
\item
for any $(s,n) \in S'$:
\begin{itemize}
\item
$\ap_{k_1} \in \lab'((s,n))$ if and only if $n = k_1$;
\item
$\ap_{\phi^2} \in \lab'((s,n))$ if and only if $s \in \Sat(\phi^2)$;
\end{itemize}
\item
for any $(s,n) \in S'$ and $a \in A^C$:
\begin{itemize}
\item
$r_A^{1'}((s,n))(a)=r_A^{1^C}(s)$ if $0 \leq n \leq k_1{-}1$ and equals 0 otherwise;
\item
$r_S^{1'}((s,n))=r_S^{1^C}(s)$ if $0 \leq n \leq k_1{-}1$ and equals 0 otherwise;
\item
$r^{2'}_A((s,n),a)=r^{2^C}_A(s)(a)$ and $r^{2'}_S((s,n))=r_S^{2^C}(s)$.
\end{itemize}
\end{itemize}
and compute the SWNE values of $\theta'=\rewop{r_1'}{}{\future \ap_{k_1}}{+}\rewop{r_2'}{}{\future \ap_{\phi^2}}$ for $\game'$.
\subsubsection{Computing SCNE Values of Nonzero-Sum Formulae}\label{scne-sect}

The case for SCNE values follows similarly to the SWNE case using backward induction for finite-horizon properties and value iteration for infinite-horizon properties. There are two differences in the computation. First, when solving MDPs, we find the minimum probability of satisfying path formulae and the minimum expected reward for reward formulae. Second, when solving the bimatrix games constructed during backward induction and value iteration, we find SCNE rather than SWNE values; this is achieved through \lemref{duality-lem}. More precisely, we negate all the utilities in the game, find the SWNE values of this modified game, then negate these values to obtain SCNE values of the original bimatrix game.

\subsection{Strategy Synthesis}

In addition to verifying formulae in our extension of rPATL, it is typically also very useful to perform
\emph{strategy synthesis}, i.e., to construct a witness to the satisfaction of a property.
For each zero-sum formula $\coalition{C}\probop{\sim q}{\psi}$ or $\coalition{C}\rewop{r}{\sim x}{\rho}$ appearing as a sub-formula,
this comprises optimal strategies for the players in coalition $C$
(or, equivalently, for player 1 in the coalition game $\game^C$)
for the objective $X^\psi$ or $X^{r,\rho}$.
For each nonzero-sum formula $\nashop{C{:}C'}{\opt \sim x}{\theta}$  appearing as a sub-formula, this is a subgame-perfect SWNE/SCNE profile for the objectives $(X^\theta_1,X^\theta_2)$ in the coalition game $\game^C$.

We can perform strategy synthesis by adapting the model checking algorithms described in the
previous sections which computes the values of zero-sum objectives and SWNE or SCNE values of nonzero-sum objectives.
The type of strategy needed (deterministic or randomised; memoryless or
finite-memory) depends on the types of objectives.
As discussed previously (in \sectsectref{nonzero-unbounded-sect}{inf-zero-sect}),
for infinite-horizon objectives our use of value iteration means
we cannot guarantee that the values computed are within a particular error
bound of the actual values; so, the same will be true of the optimal strategy
that we synthesise for such a formula.

\startpara{Zero-sum properties}
For zero-sum formulae, all strategies synthesised are randomised;
this is in contrast to checking the equivalent properties against TSGs~\cite{CFK+13b},
where deterministic strategies are sufficient.
For infinite-horizon objectives, we synthesise memoryless strategies,
i.e., a distribution over actions for each state of the game.
For finite-horizon objectives, strategies are finite-memory,
with a separate distribution required for each state and each time step.

For both types of objectives, we synthesise the strategies
whilst computing values using the approach presented in \sectref{zero-sect}:
from the matrix game solved for each state, we extract not just the value of the game,
but also an optimal (randomised) strategy for player 1 of $\game^C$ in that state.
It is also possible to extract the optimal strategy for player 2 in the state by solving the dual LP problem for the matrix game (see \sectref{matrix-sect}).
For finite-horizon objectives, we retain the choices for all steps;
for infinite-horizon objectives, just those from the final step of value iteration are needed.
\startpara{Nonzero-sum properties}
In the case of a nonzero-sum formula,
randomisation is again needed for all types of objectives.
Similarly to zero-sum formulae above,
strategies are generated whilst computing SWNE or SCNE values,
using the algorithms presented in \sectref{nonzero-sect}.
Now, we do this in two distinct ways:
\begin{itemize}
\item
when solving bimatrix games in each state, we also extract an SWNE/SCNE profile,
comprising the distributions over actions for each player of $\game^C$ in that state;
\item
when solving MDPs, we also synthesise an optimal strategy for the MDP~\cite{KP13},
which is equivalent to a strategy profile for $\game^C$ (in fact, randomisation is not needed for this part). 
\end{itemize}
The final synthesised profile is then constructed by initially following the ones generated when solving bimatrix games, and then switching to the MDP strategies if we reach a state where the value of one player's objective cannot change.
This means that all strategies synthesised for nonzero-sum formulae
may need memory. As for the zero-sum case, finite-horizon strategies
are finite-memory since separate player choices are stored for each state and each time step. But, in addition, for both finite- and infinite-horizon objectives,
one bit of memory is required to record that a switch is made to the strategy
extracted when solving the MDP.
\subsection{Complexity}

Due to its overall recursive nature, the complexity of our model checking algorithms are linear in the size of the formula $\phi$. In terms of the problems solved for each subformula, finding zero-sum values of a 2-player CSG is PSPACE~\cite{CH12} and finding subgame-perfect NE for reachability objectives of a 2-player CSG is PSPACE-complete~\cite{BBG+19}.
In practice, our algorithms are iterative, so the complexity depends on the
number of iterations required, the number of states in the CSG
and the problems solved for each state and in each step.

For finite-horizon objectives, the number of iterations is equal to the step-bound
in the formula. For infinite-horizon objectives, the number of iterations depends on the convergence criterion used. For zero-sum properties, an exponential lower bound has been shown for the worst-case number of iterations required for a non-trivial approximation~\cite{HIM11}.
We report on efficiency in practice in \sectref{expr-sect}.

In the case of zero-sum properties, for each state, at each iteration, we need to solve an LP problem of size $|A|$. Such problems can be solved using the simplex algorithm, which is PSPACE-complete~\cite{FS15}, but performs well on average~\cite{Todd02}. Alternatively, Karmarkar's algorithm \cite{Kar84} could be used, which is PTIME.

For nonzero-sum properties, in each state, at each iteration, 
we need to find all solutions to an LCP problem of size $|A|$. Papadimitriou established the complexity of solving the class of LCPs we encounter to be in PPAD (\emph{polynomial parity argument in a directed graph}) \cite{Pap94} and, to the best of our knowledge, there is still no polynomial algorithm for solving such problems. More closely related to finding all solutions, it has been shown that determining if there exists an equilibrium in a bimatrix game for which each player obtains a utility of a given bound is NP-complete~\cite{GZ89}. Also, it is demonstrated in \cite{ARSS10} that bimatrix games may have a number of NE that is exponential with respect to the size of the game, and thus any method that relies on finding all NE in the worst case cannot be expected to perform in a running time that is polynomial with respect to the size of the game.
\section{Correctness of the Model Checking Algorithms}

The overall (recursive) approach and the reduction to solution of a two-player game is essentially the same as for TSGs~\cite{CFK+13b}, and therefore the same correctness arguments apply. In the case of \emph{zero-sum} formulae, the correctness of value iteration for infinite-horizon properties follows from~\cite{RF91} and for finite-horizon properties from \defref{sem-def} and the solution of matrix games (see \sectref{prelim-sect}). Below, we show the correctness of the model checking algorithms for \emph{nonzero-sum} formulae.

\subsection{Nonzero-Sum Formulae}

We fix a game $\game$ and a nonzero-sum formula $\nashop{C{:}C'}{\opt \sim x}{\theta}$. For the case of \emph{finite-horizon} nonzero-sum formulae, the correctness of the model checking algorithms follows from the fact that we use backward induction~\cite{SW+01,NMK+44}. For \emph{infinite-horizon} nonzero-sum formulae, the proof is based on showing that the values computed during value iteration correspond to subgame-perfect SWNE values of finite game trees, and the values of these game trees converge uniformly and are bounded from above by the actual values of $\game^C$.

The fact that we use MDP model checking when the goal of one of the players is reached means that the values computed during value iteration are not finite approximations for the values of $\game^C$. Therefore we must also show that the values computed during value iteration are bounded from below by finite approximations for the values of $\game^C$. We first consider the case when both the objectives in the sum $\theta$ are infinite-horizon objectives. Below we assume $\opt = \max$ and the case when $\opt = \min$ follow similarly. For any $(v_1,v_2),(v_1',v_2') \in \Qset^2$, let $(v_1,v_2)\leq(v_1',v_2')$ if and only if $v_1 \leq v_1'$ and $v_2 \leq v_2'$. The following lemma follows by definition of subgame-perfect SWNE values.
\begin{lemma}\label{probrew-lem}
Consider any strategy profile $\sigma$ and state $s$ of $\game^C$ and let $(v_1^{\sigma,s},v_2^{\sigma,s})$ be the corresponding values of the players in $s$ for the objectives $(X^{\theta_1},X^{\theta_2})$. Considering subgame-perfect SWNE values of the objectives $(X^{\theta_1},X^{\theta_2})$ in state $s$, in the case that $\theta$ is of the form $\probop{}{\phi^1_1 \until \phi^1_2}{+}\probop{}{\phi^2_1 \until \phi^2_2}:$
\begin{itemize}
\item
if $s \sat \phi^1_2 \wedge \phi^2_2$, then $(1,1)$ are the unique subgame-perfect SWNE values for state $s$ and $(v_1^{\sigma,s},v_2^{\sigma,s}) \leq (1,1)$;
\item
if $s \sat \phi^1_2 \wedge \phi^2_1 \wedge \neg \phi^2_2$, then $(1,\probopP^{\max}_{\game,s}(\phi^2_1 \until \phi^2_2))$ are the unique subgame-perfect SWNE values for state $s$ and $(v_1^{\sigma,s},v_2^{\sigma,s}) \leq (1,\probopP^{\max}_{\game,s}(\phi^2_1 \until \phi^2_2))$;
\item
if $s \sat \phi^1_1 \wedge \neg \phi^1_2 \wedge \phi^2_2$, then $(\probopP^{\max}_{\game,s}(\phi^1_1 \until \phi^1_2),1)$ are the unique subgame-perfect SWNE values for state $s$ and $(v_1^{\sigma,s},v_2^{\sigma,s}) \leq (\probopP^{\max}_{\game,s}(\phi^1_1 \until \phi^1_2),1)$;
\item
if $s \sat \phi^1_2 \wedge \neg \phi^2_1 \wedge \neg \phi^2_2$, then $(1,0)$ are the unique subgame-perfect SWNE values for state $s$ and $(v_1^{\sigma,s},v_2^{\sigma,s}) \leq (1,0)$;
\item
if $s \sat \neg \phi^1_1 \wedge \neg \phi^1_2 \wedge \phi^2_2$, then $(0,1)$ are the unique subgame-perfect SWNE values for state $s$ and $(v_1^{\sigma,s},v_2^{\sigma,s}) \leq (0,1)$;
\item
if $s \sat \neg \phi^1_1 \wedge \neg \phi^1_2 \wedge \phi^2_1 \wedge \neg \phi^2_2$, then $(0,\probopP^{\max}_{\game,s}(\phi^2_1 \until \phi^2_2))$ are the unique subgame-perfect SWNE values for state $s$ and $(v_1^{\sigma,s},v_2^{\sigma,s}) \leq (0,\probopP^{\max}_{\game,s}(\phi^2_1 \until \phi^2_2))$;
\item
if $s \sat \phi^1_1 \wedge \neg \phi^1_2 \wedge \neg \phi^2_1 \wedge \neg \phi^2_2$, then $(\probopP^{\max}_{\game,s}(\phi^1_1 \until \phi^1_2),0)$ are the unique subgame-perfect SWNE values for state $s$ and $(v_1^{\sigma,s},v_2^{\sigma,s}) \leq (\probopP^{\max}_{\game,s}(\phi^1_1 \until \phi^1_2),0)$;
\item
if $s \sat \neg \phi^1_1 \wedge \neg \phi^1_2 \wedge \neg \phi^2_1 \wedge \neg \phi^2_2$, then $(0,0)$ are the unique subgame-perfect SWNE values for state $s$ and $(v_1^{\sigma,s},v_2^{\sigma,s}) \leq (0,0)$.
\end{itemize}
\noindent
On the other hand, in the case that $\theta$ is of the form $\rewop{r_1}{}{\future \phi^1}{+}\rewop{r_2}{}{\future \phi^2}:$
\begin{itemize}
\item
if $s \sat \phi^1 \wedge \phi^2$, then $(0,0)$ are the unique subgame-perfect SWNE values for state $s$ and $(v_1^{\sigma,s},v_2^{\sigma,s}) \leq (0,0)$;
\item
if $s \sat \phi^1 \wedge \neg \phi^2$, then $(0,\rewopR^{\max}_{\game,s}(r_2,\future \phi^2))$ are the unique subgame-perfect SWNE values for state $s$ and $(v_1^{\sigma,s},v_2^{\sigma,s}) \leq (0,\rewopR^{\max}_{\game,s}(r_2,\future \phi^2))$;
\item
if $s \sat \neg \phi^1 \wedge \phi^2$, then $(\rewopR^{\max}_{\game,s}(r_1,\future \phi^1),0)$ are the unique subgame-perfect SWNE values for state $s$ and $(v_1^{\sigma,s},v_2^{\sigma,s}) \leq (\rewopR^{\max}_{\game,s}(r_1,\future \phi^1),0)$.
\end{itemize}
\end{lemma}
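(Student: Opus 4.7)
The plan is to treat the until case and the expected-reachability case separately, and within each case to work through the enumerated subcases by the same two-step scheme: first argue that the hypothesis on $s$ \emph{fixes} the value of at least one player's objective independently of the strategy profile, then argue that the other player's objective, if free, is bounded by (and at an SWNE achieves) the cooperative MDP value. In each subcase the claimed pair is then seen both to be the unique SWNE value pair and to be an upper bound on $(v_1^{\sigma,s}, v_2^{\sigma,s})$ for every profile.

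For the until case, I would invoke the semantics from \defref{sem-def}: the random variable $X^{\psi^i}$ for $\psi^i = \phi_1^i \until \phi_2^i$ takes value $1$ on every path from $s$ whenever $s \sat \phi_2^i$, and value $0$ on every path from $s$ whenever $s \sat \neg\phi_1^i \wedge \neg\phi_2^i$. In either situation player $i$ receives the same payoff regardless of how anyone plays, so any profile is trivially a best response for player $i$, and the subgame-perfect SWNE condition in $s$ reduces to choosing the other player's strategy to maximise their own objective. Because both players may now cooperate on the surviving objective $\psi^j$, this maximum equals the MDP value $\probopP^{\max}_{\game,s}(\phi_1^j \until \phi_2^j)$, which is uniquely determined. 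When both players' values are fixed (the subcases $s \sat \phi_2^1 \wedge \phi_2^2$, $s \sat \phi_2^1 \wedge \neg\phi_1^2 \wedge \neg\phi_2^2$, the symmetric case, and the fully-failing case), the SWNE values are immediate from the semantics and no MDP computation is needed.

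For the expected-reachability case the argument is analogous, using that $\rew(r_i, \future \phi^i)(\pi) = 0$ for every path $\pi$ from $s$ whenever $s \sat \phi^i$ (since $k_{\phi^i}=0$), which fixes player $i$'s value at $0$. The free player's SWNE value is then the cooperative MDP value $\rewopR^{\max}_{\game,s}(r_j, \future \phi^j)$, by the same indifference-plus-cooperation argument. Uniqueness of the SWNE value pair in every subcase follows from the observation that, with one component fixed, maximising the sum $v_1 + v_2$ over NE is equivalent to maximising the remaining component, and the MDP value is by definition the unique maximum. For the second part of the lemma, the bound $(v_1^{\sigma,s}, v_2^{\sigma,s}) \leq$ (claimed pair) for an arbitrary profile $\sigma$ is immediate: the fixed components are attained with equality, and the free component is dominated by the cooperative MDP maximum by the very definition of $\probopP^{\max}$ and $\rewopR^{\max}$.

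The main obstacle is not any single deep argument but the bookkeeping of the case split: one must check that the conditions listed are exhaustive for the situations the lemma needs to handle and that in each row the correct player is identified as fixed, with the right boundary value ($0$ or $1$ for until, $0$ for reachability rewards). A minor subtlety worth double-checking is the cases in which $s \sat \phi_2^i$ but also $\neg\phi_1^i$: the semantics of $\until$ still satisfies the formula immediately at $s$, so player $i$'s value is $1$, independent of $\phi_1^i$; this is why, for example, the subcase $s \sat \phi_2^1 \wedge \neg\phi_1^2 \wedge \neg\phi_2^2$ correctly yields $(1,0)$ rather than requiring any further reasoning about player $1$'s continuation.
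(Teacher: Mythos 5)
Your proposal is correct and follows essentially the same route as the paper, which states the lemma as an immediate consequence of the definition of subgame-perfect SWNE values together with the semantics of the until and expected-reachability objectives; your indifference-plus-cooperation argument is precisely the reasoning that justification compresses. The only elaboration you add is spelling out why the indifferent player makes every profile an equilibrium, so that the free player's SWNE value coincides with the cooperative MDP maximum, which is consistent with the paper's treatment.
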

Next we require the following objectives of $\game^C$.
\begin{definition}\label{bounded-objective-def}
For any sum of two probabilistic or reward objectives $\theta$, $1 \leq i \leq 2$ and $n \in \Nset$, let $X^\theta_{i,n}$ be the objective where for any path $\pi$ of $\game^C:$
\begin{eqnarray*}
X^{\probop{}{\phi^1_1 \! \until \! \phi^1_2}{+}\probop{}{\phi^2_1 \! \until \! \phi^2_2}}_{i,n}(\pi) & = & \begin{cases}
1 & \mbox{if $\exists k \leq n. \, ( \pi(k) \sat \phi^i_2 \wedge \forall j < k. \, \pi(j) \sat \phi^i_1)$} \\
0 & \mbox{otherwise}
\end{cases} \\
X^{\rewop{r_{\scale{.75}{1}}}{}{\future \! \phi^1}{+}\rewop{r_{\scale{.75}{2}}}{}{\future \! \phi^2}}_{i,n}(\pi) & = & \begin{cases}
\infty
& \mbox{if} \; \forall k \in \Nset . \, \pi(k) \notsat \phi^i \\
\mbox{$\sum_{k=0}^{k_{\phi_i}-1}$} \big( r_A(\pi(k),\pi[k])+r_S(\pi(k)) \big) & \mbox{if $k_{\phi^i} \leq n{-}1$} \\
0 & \mbox{otherwise}
\end{cases}
\end{eqnarray*}
\noindent
and $k_{\phi_i} = \min \{ k \mid k \in \Nset \wedge \pi(k) \sat \phi^i \}$.
\end{definition}
The following lemma demonstrates that, for a fixed strategy profile and state, the values of these objectives are non-decreasing and converge uniformly to the values of $\theta$.
\begin{lemma}\label{epsilon-lem}
For any sum of two probabilistic or reward objectives $\theta$ and $\varepsilon>0$, there exists $N \in \Nset$ such that, for any $n \geq N$, $s \in S$, $\sigma \in \Sigma^1_{\game^C} {\times} \Sigma^2_{\game^C}$ and $1 \leq i \leq 2:$
\[
0 \ \leq \ \Eset^{\sigma}_{\game^C,s}(X^\theta_i) - \Eset^{\sigma}_{\game^C,s}(X^\theta_{i,n}) \ \leq \ \varepsilon \, .
\]
\end{lemma}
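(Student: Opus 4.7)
The plan is to reduce both cases of the lemma to a single uniform geometric tail bound on the first hitting time $\tau_i$ of the ``decisive set'' $T_i$, taken to be $\Sat(\neg\phi^i_1 \vee \phi^i_2)$ for the until case and $\Sat(\phi^i)$ for the reachability case. The main technical step is to prove that there exist constants $N_0 \in \Nset$ and $c \in [0,1)$ such that, for every state $s$ and every profile $\sigma$,
\[
\Prob^{\sigma}_{\game^C, s}(\tau_i > n) \ \leq \ c^{\lfloor n/N_0 \rfloor} .
\]
By Assumption~\ref{game2-assum} (for the until case) or Assumption~\ref{game3-assum} (for the reachability case), $T_i$ is reached with probability $1$ from every state under every profile, so collapsing both players into a single adversary maximising the avoidance probability yields a finite MDP of value $0$ at every state; a standard finite-MDP argument then produces $N_0$ with maximum avoidance probability in $N_0$ steps strictly less than $1$ uniformly over states, and iterating via the Markov property over blocks of $N_0$ steps gives the claimed geometric bound.

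For the probabilistic until case, pointwise $X^\theta_{i,n}(\pi) \leq X^\theta_i(\pi)$ since any witness to the step-bounded until is also a witness to the unbounded one, which gives the lower inequality of the lemma. For the upper inequality, $X^\theta_i - X^\theta_{i,n}$ is dominated by the indicator $\mathbf{1}\{\tau_i > n\}$, so the expected difference is at most $c^{\lfloor n/N_0 \rfloor}$, and choosing $N$ large enough makes this at most $\varepsilon$.

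For the reward reachability case, assuming (as in the paper's standing non-negativity convention for such objectives) the relevant rewards are non-negative, pointwise $X^\theta_{i,n} \leq X^\theta_i$ still holds: on paths with $n \leq k_{\phi^i} < \infty$ we have $X^\theta_{i,n} = 0$ while $X^\theta_i \geq 0$, and on all other paths the two values agree. This yields the lower inequality. For the upper inequality, let $r^\star$ bound the per-step reward magnitude; since $X^\theta_i(\pi) \leq r^\star \cdot k_{\phi^i}(\pi)$ on finite paths reaching $T_i$,
\[
\Eset^{\sigma}_{\game^C, s}(X^\theta_i - X^\theta_{i,n}) \ \leq \ r^\star \cdot \Eset^{\sigma}_{\game^C, s}\bigl( k_{\phi^i} \cdot \mathbf{1}\{k_{\phi^i} \geq n\} \bigr) ,
\]
and the uniform geometric tail bound on $\tau_i = k_{\phi^i}$, together with the fact that a geometrically-tailed variable has a finite mean with an exponentially decaying tail expectation, makes the right-hand side tend to $0$ uniformly in $(s, \sigma)$, so it can be driven below $\varepsilon$ by taking $N$ large.

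The main obstacle I anticipate is establishing the geometric bound \emph{uniformly} over all history-dependent, randomised profiles on both players simultaneously. The clean route, and the one I would take, is to push the uniformity into the collapsed single-controller MDP described above---where existing finite-MDP theory yields both $N_0$ and $c$ directly---so that no delicate compactness argument over arbitrary strategies in the original concurrent game is needed.
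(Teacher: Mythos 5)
Your proof is correct and supplies, with an explicit construction, the step that the paper's own proof leaves as a bare assertion. The paper argues differently: for a \emph{fixed} profile $\sigma$ it observes that $\Eset^{\sigma}_{\game^C,s}(X^\theta_{i,n})$ is the $n$th iterate of value iteration on the DTMC induced by $\sigma$, hence non-decreasing and convergent to $\Eset^{\sigma}_{\game^C,s}(X^\theta_i)$, and then simply states that the independence of $N$ from $\sigma$ ``follows from Assumptions~\ref{game2-assum} and~\ref{game3-assum}'' without further argument. Your uniform geometric tail bound on the hitting time of the decisive set, obtained from the collapsed single-controller MDP, is precisely the missing justification for that uniformity, and it buys an explicit rate ($c^{\lfloor n/N_0\rfloor}$) rather than mere convergence. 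Two small points. First, the collapsed MDP lets the single controller choose arbitrary distributions over \emph{joint} actions, whereas a CSG profile only induces product distributions; the assumptions quantify over the latter, so you should add one line noting that a positive safety value in a finite MDP is already witnessed by a deterministic memoryless scheduler, which is a deterministic product profile, so the MDP value is still $0$. Second, your non-negativity caveat for the reward case is genuinely needed --- on a path with $n\leq k_{\phi^i}<\infty$ one has $X^\theta_{i,n}=0$ while $X^\theta_i$ could be negative, breaking the lower inequality --- but be aware that the paper does not actually state such a convention for nonzero-sum objectives (it explicitly permits negative rewards there, relying only on Assumption~\ref{game3-assum}); the monotonicity claimed in the lemma, and in the paper's own proof, implicitly requires it just as yours does.
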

\begin{proof}
Consider any sum of two probabilistic or reward objectives $\theta$, state $s$ and $1 \leq i \leq 2$. 
Using \assumref{game3-assum} we have that,  for subformulae $\rewop{r}{}{\future \phi^i}$, the set $\Sat(\phi^i)$ is reached with probability 1 from all states of $\game$ under all profiles, and therefore $\Eset^{\sigma}_{\game^\cC,s}(X^\theta_i)$ is finite. Furthermore, for any $n \geq N$, by \defdefref{sem-def}{bounded-objective-def} we have that $\Eset^{\sigma}_{\game^\cC,s}(X^\theta_{i,n})$ is the value of state $s$ for the $n$th iteration of value iteration~\cite{CH08} when computing $\Eset^{\sigma}_{\game^\cC,s}(X^\theta_i)$ in the DTMC obtained from $\game^\cC$ by following the strategy $\sigma$, and the sequence is both non-decreasing and converges. The fact that we can choose an $N$ independent of the strategy profile for uniform convergence follows from \assumassumref{game2-assum}{game3-assum}. \qed
\end{proof}
In the proof of correctness we will use the fact that $n$ iterations of value iteration is equivalent to performing backward induction on the following game trees.
\begin{definition}\label{trees-def}
For any state $s$ and $n \in \Nset$, let $\game^C_{n,s}$ be the game tree corresponding to playing $\game^C$ for $n$ steps when starting from state $s$ and then terminating. 
\end{definition}
We can map any strategy profile $\sigma$ of $\game^C$ to a strategy profile of $\game^C_{n,s}$ by only considering the choices of the profile over the first $n$ steps when starting from state $s$. This mapping is clearly surjective, i.e., we can generate all profiles of $\game^C_{n,s}$, but is not injective.
We also need the following objectives corresponding to the values computed during value iteration for the game trees of \defref{trees-def}.
\begin{definition}\label{value-objective-def}
For any sum of two probabilistic or reward objectives $\theta$, $s \in S$, $n \in \Nset$, $1 \leq i \leq 2$ and $j= i{+}1 \bmod 2$, let $Y^\theta_i$ be the objective where, for any path $\pi$ of $\game^C_{n,s}:$
\begin{align*}
\lefteqn{Y^{\probop{}{\phi^1_1 \! \until \! \phi^1_2}{+}\probop{}{\phi^2_1 \! \until \! \phi^2_2}}_i(\pi) =} \\
 & \begin{cases}
1 & \mbox{if $\exists m \leq n . \, (\pi(m) \sat \phi^i \wedge \forall k < m . \, \pi(k) \sat \phi^1_1 {\wedge} \neg \phi^1_2 {\wedge} \phi^2_1 {\wedge} \neg \phi^2_2)$} \\
\probopP^{\max}_{\game,\pi(m)}(\phi^i_1 \until \phi^i_2) & \mbox{else if $\exists m \leq n . \, (\pi(m) \sat \phi^j \wedge \forall k < m . \, \pi(k) \sat \phi^1_1 {\wedge} \neg \phi^1_2 {\wedge} \phi^2_1 {\wedge} \neg \phi^2_2)$}\\
0 & \mbox{otherwise} 
\end{cases}
\\
\lefteqn{Y^{\rewop{r_{\scale{.75}{1}}}{}{\future \! \phi^1}{+}\rewop{r_{\scale{.75}{2}}}{}{\future \! \phi^2}}_i(\pi) =} \\
& \begin{cases}
\infty
& \mbox{if} \; \forall k \leq n . \, \pi(k) \notsat \phi^i \\
\mbox{$\sum_{k=0}^{k_{\phi^{\scale{.75}{1}}\vee\phi^{\scale{.75}{2}}}-1}$} \big( r_A(\pi(k),\pi[k]) + r_S(\pi(k)) \big) + r^i_S(\pi(k)) & \mbox{otherwise}
\end{cases}
\end{align*}
\noindent
where
\[
r^i_S(s') \; = \; \begin{cases}
\rewopR^{\max}_{\game,s'}(r_i,\future \phi^{i}) & \mbox{if $s \sat \neg \phi^i \wedge \phi^j$}\\
0 & \mbox{otherwise} 
\end{cases}
\]
for $s' \in S$ and $k_{\phi^1\vee\phi^2} = \min \{ k \mid k \leq n \wedge \pi(k) \sat \phi^1 \vee \phi^2 \}$. %
\end{definition}
Similarly to \lemref{epsilon-lem}, the lemma below demonstrates, for a fixed strategy profile and state $s$ of $\game^C$, that the values for the objectives given in \defref{value-objective-def} when played on the game trees $\game^C_{n,s}$ are non-decreasing and converge uniformly. As with \lemref{epsilon-lem} the result follows from \assumassumref{game2-assum}{game3-assum}.
\begin{lemma}\label{epsilon2-lem}
For any sum of two probabilistic or reward objectives $\theta$ and $\varepsilon>0$, there exists $N \in \Nset$ such that for any $m \geq n \geq N$, $\sigma \in \Sigma^1_{\game^C} {\times} \Sigma^2_{\game^C}$, $s \in S$ and $1 \leq i \leq 2:$
\[
0 \ \leq \ \Eset^{\sigma}_{\game^C_{m,s}}(Y^\theta_i) - \Eset^{\sigma}_{\game^C_{n,s}}(Y^\theta_i) \ \leq \ \varepsilon \, .
\]
\end{lemma}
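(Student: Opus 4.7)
The plan is to follow the same two-step structure as the proof of Lemma \ref{epsilon-lem}, establishing the non-negativity and the uniform $\varepsilon$-bound separately. The lower bound $0 \leq \Eset^{\sigma}_{\game^C_{m,s}}(Y^\theta_i) - \Eset^{\sigma}_{\game^C_{n,s}}(Y^\theta_i)$ for $m \geq n$ will follow by a pointwise inspection of Definition \ref{value-objective-def}: any path of $\game^C$ that falls into the ``terminated within $n$ steps'' branch of $Y^\theta_i$ falls into the same branch, with the same value, when the horizon is extended to $m$; paths that have not yet reached the relevant terminating set within $n$ steps can only move from a zero (or sentinel) contribution to the non-negative value of a terminated path plus the non-negative MDP terminal correction $\rewopR^{\max}_{\game,\cdot}$ or $\probopP^{\max}_{\game,\cdot}$. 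Since $Y^\theta_i$ depends only on the path prefix up to the horizon, both expectations can be viewed as integrals of the horizon-indexed payoffs against the single measure $\Prob^\sigma_{\game^C,s}$, so the pointwise inequality lifts to the expectations.

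For the uniform upper bound by $\varepsilon$, the key ingredient is a uniform exponential tail estimate for the event that the terminating set $T$ — namely $\Sat(\neg\phi^i_1 \vee \phi^i_2)$ in the until case and $\Sat(\phi^i)$ in the reachability case — has not been reached within $n$ steps. By Assumption \ref{game2-assum} (respectively Assumption \ref{game3-assum}), $T$ is reached from every state under every profile of $\game$ with probability $1$. Combined with finiteness of the state and action sets of $\game^C$, a standard compactness argument then yields constants $K \in \Nset$ and $p \in (0,1]$ such that from any state and under any profile the probability of reaching $T$ within $K$ steps is at least $p$; iterating this gives the bound $(1-p)^{\lfloor n/K \rfloor}$ on the probability of not having hit $T$ within $n$ steps, uniform in $s$ and $\sigma$.

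From this tail estimate the probabilistic case is immediate, since $Y^\theta_i$ is bounded in $[0,1]$ and the difference between horizons $m$ and $n$ is supported on paths whose $n$-prefix has not yet entered $T$, and is therefore bounded above by the tail probability. The reward case is the main obstacle I expect: under Assumption \ref{game3-assum} alone rewards may take either sign, and the sentinel values on non-terminated paths still need to be controlled. The same $(K,p)$ estimate provides a uniform bound on the expected hitting time of $\Sat(\phi^i)$, and hence on the MDP correction $\rewopR^{\max}_{\game,\cdot}(r_i,\future\phi^i)$ as well as on the expected reward accrued on tail paths; using that the per-step rewards of $\game^C$ are bounded in absolute value (the game is finite), each of these contributions is dominated by a finite multiple of the geometric tail $(1-p)^{\lfloor n/K\rfloor}$, with the multiplicative constant depending only on $\game$, $r_i$, $K$ and $p$ — and in particular not on $\sigma$ or $s$. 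Choosing $N$ large enough that this uniform bound drops below $\varepsilon$ then gives the claim for all $m \geq n \geq N$.
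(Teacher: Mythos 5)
The paper does not really prove this lemma: it states only that, ``as with \lemref{epsilon-lem}, the result follows from'' \assumassumref{game2-assum}{game3-assum}, and the proof of \lemref{epsilon-lem} itself just identifies the quantities with value-iteration iterates on the DTMC induced by fixing $\sigma$ and asserts uniformity in $\sigma$ and $s$. Your uniform geometric tail estimate --- finitely many states and actions together with almost-sure reachability of the terminating set under all profiles yield $K$ and $p>0$ such that the set is hit within $K$ steps with probability at least $p$ from every state under every profile, hence a $(1-p)^{\lfloor n/K\rfloor}$ tail bound --- is exactly the standard argument that substantiates this assertion (it follows from the fact that the MDP min-reachability value is $1$ everywhere and that finite-horizon min-reachability values converge to it over the finite state space, rather than from compactness of the profile space). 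Combined with boundedness of the payoffs (probabilities in $[0,1]$; per-step rewards and the MDP corrections bounded because the game is finite and, by \assumref{game3-assum}, expected remaining hitting times are uniformly bounded), this gives the $\varepsilon$ upper bound uniformly in $m\geq n\geq N$, $\sigma$ and $s$, which is the part of the lemma actually used in \propref{convergence-prop} and \lemref{strats-lem}. So your proof is correct where it matters and considerably more explicit than the paper's.

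One step does not go through as you describe it: the pointwise argument for the lower bound $0\leq\Eset^{\sigma}_{\game^C_{m,s}}(Y^\theta_i)-\Eset^{\sigma}_{\game^C_{n,s}}(Y^\theta_i)$ in the reward case. As written, \defref{value-objective-def} assigns $\infty$ to a path that has not reached $\Sat(\phi^i)$ within the horizon, so extending the horizon can move a path from $\infty$ to a finite value; and since \assumref{game3-assum} permits negative rewards, an unterminated path's contribution can also strictly decrease as the horizon grows (a single state with negative per-step reward and a probabilistic exit to the target already produces a strictly decreasing value-iteration sequence, so even the expectations are not monotone). Pointwise monotonicity therefore holds only for the probabilistic objectives and for non-negative rewards; in general only $\bigl|\Eset^{\sigma}_{\game^C_{m,s}}(Y^\theta_i)-\Eset^{\sigma}_{\game^C_{n,s}}(Y^\theta_i)\bigr|\leq\varepsilon$ can be salvaged, which your tail bound delivers and which is all the downstream results require. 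This is as much a defect of the lemma's statement and of \defref{value-objective-def} as of your argument, but you should not claim the non-negativity follows by pointwise inspection.
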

We require the following lemma relating the values of the objectives $X^\theta_{i,n}$, $Y^\theta_i$ and $X^\theta_i$ for $1 \leq i \leq 2$.
\begin{lemma}\label{precomp-lem}
For any sum of two probabilistic or reward objectives $\theta$, state $s$ of $\game^C$, strategy profile $\sigma$ such that when one of the targets of the objectives of $\theta$ is reached, the profile then collaborates to maximise the value of the other objective, $n \in \Nset$ and $1 \leq i \leq 2:$
\[
\sup\nolimits_{\sigma_i \in \Sigma_i^{\game^{\scale{.75}{C}}_{\scale{.75}{n,s}}}} \Eset^{\sigma_{-i}[\sigma_i]}_{\game^C,s}(X^\theta_{i,n}) \ \leq \ \sup\nolimits_{\sigma_i \in \Sigma_i^{\game^{\scale{.75}{C}}_{\scale{.75}{n,s}}}} \Eset^{\sigma_{-i}[\sigma_i]}_{\game^C_{n,s}}(Y^\theta_i)
\ \leq \
\sup\nolimits_{\sigma_i \in \Sigma_i^{\game^{\scale{.75}{C}}}} \Eset^{\sigma_{-i}[\sigma_i]}_{\game^C,s}(X^\theta_i) \, .
\]
\end{lemma}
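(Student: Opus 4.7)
The plan is to prove the two inequalities separately, exploiting (i) the fact that fixing $\sigma_{-i}$ turns $\game^C$ into an MDP for player $i$ (so both suprema satisfy Bellman-type recurrences), and (ii) the collaborative behaviour of $\sigma$ after a target is reached, which makes the $\probopP^{\max}$ and $\rewopR^{\max}$ summands appearing in $Y^\theta_i$ actually attainable in $\game^C$ under a suitable lift of player $i$'s strategy.

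For the second (right-hand) inequality my plan is a strategy-lifting argument. Given any $\sigma_i^* \in \Sigma_i^{\game^C_{n,s}}$, I construct $\sigma_i' \in \Sigma_i^{\game^C}$ that mimics $\sigma_i^*$ on prefixes of length at most $n$ (with the natural embedding of tree paths into $\game^C$-paths), and then, on any continuation after either a leaf of $\game^C_{n,s}$ is reached or a state satisfying $\phi^j$ (but not $\phi^i$) is visited within the first $n$ steps, switches to a strategy that collaborates with $\sigma_{-i}$ to realise $\probopP^{\max}_{\game,\,\cdot}(\phi^i_1 \until \phi^i_2)$ or $\rewopR^{\max}_{\game,\,\cdot}(r_i,\future\phi^i)$. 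The existence of such a cooperative continuation is exactly the hypothesis on $\sigma$. I then couple the path measures of $(\sigma_{-i}[\sigma_i'],\game^C)$ and $(\sigma_{-i}[\sigma_i^*],\game^C_{n,s})$ on the first $n$ steps and partition the paths by the first position at which the joint condition is broken: on the ``both targets/$\phi^i$ reached'' event both objectives contribute $1$ (resp.\ the same finite sum of rewards); on the ``$\phi^j$-only'' event the conditional expectation of $X^\theta_i$ from that state under $\sigma_i'$ equals the MDP-maximum in $Y^\theta_i$; on all remaining events $X^\theta_i \geq 0 = Y^\theta_i$ (and for the reward case, $\infty \geq \infty$). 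Integrating over the common first-$n$-step measure gives $\Eset^{\sigma_{-i}[\sigma_i']}_{\game^C,s}(X^\theta_i) \geq \Eset^{\sigma_{-i}[\sigma_i^*]}_{\game^C_{n,s}}(Y^\theta_i)$, from which the inequality follows by taking suprema.

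For the first (left-hand) inequality I plan to proceed by backward induction on $n$, writing $f_n(s)$ and $g_n(s)$ for the two suprema and using the Bellman recursions induced by the fixed $\sigma_{-i}$. At precomputed leaves (states where $\phi^i_2$ is satisfied, or $\phi^j_2$ is first reached) the inequality reduces to either an equality ($1=1$ or $0=0$) or to $f_n(s) \leq \probopP^{\max}_{\game,s}(\phi^i_1 \until \phi^i_2) = g_n(s)$, which holds because $f_n(s)$ is an $n$-step value attained against a single fixed opponent strategy while $\probopP^{\max}$ is the full-horizon cooperative maximum. At interior states satisfying the joint condition, the induction hypothesis and monotonicity of the Bellman operator transfer the inequality one step. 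The reward case follows the same template, using $\rewopR^{\max}$ in place of $\probopP^{\max}$ and invoking \assumref{game3-assum} to rule out pathological infinite values in the interior.

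The main obstacle is the interior case in which the joint condition fails at a state that satisfies neither $\phi^i_2$ nor $\phi^j_2$: there $g_n(s)=0$ while a naive bound on $f_n(s)$ can be positive. The key to closing this gap is to observe that under the hypothesised $\sigma$ (collaborative once a target is reached, and played as a subgame-perfect best-response elsewhere) the reachable support of $\sigma_{-i}[\sigma_i]$ from the states in which the induction is actually invoked never enters such states before a target is reached; formally, the backward induction is carried out only on the subgraph on which the joint condition is preserved, while the ``off-subgraph'' states are absorbed into the precomputed cases of \lemref{probrew-lem}. This reduction, together with the Bellman monotonicity above, yields the first inequality and hence the full chain.
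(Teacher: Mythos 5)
Your treatment of the right-hand inequality is essentially the argument the paper intends: the paper simply asserts $\Eset^{\sigma}_{\game^C_{n,s}}(Y^\theta_i) \leq \Eset^{\sigma}_{\game^C,s}(X^\theta_i)$ for collaborating profiles ``by \defdefref{value-objective-def}{sem-def}'', and your lifting-plus-coupling construction is a reasonable way to make that precise. For the left-hand inequality, however, the paper does something simpler than your Bellman induction: it compares the two objectives directly under one and the same profile, showing $\Eset^{\sigma}_{\game^C,s}(X^\theta_{i,n}) \leq \Eset^{\sigma}_{\game^C_{n,s}}(Y^\theta_i)$ for every $\sigma$ straight from \defdefref{bounded-objective-def}{value-objective-def}, and then takes suprema; no induction on $n$ and no monotonicity of a Bellman operator is needed.

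The genuine problem is your resolution of the ``interior case in which the joint condition fails at a state satisfying neither $\phi^i_2$ nor $\phi^j_2$''. You claim that under the hypothesised profile such states are never reached before a target is reached, and you use this to discard them. That claim is unjustified and false in general: the collaboration hypothesis on $\sigma$ only constrains behaviour \emph{after} a target is reached, so nothing prevents the play from entering, with positive probability, a state where $\phi^j_1$ fails while $\phi^i_1$ still holds and $\phi^i_2$ has not yet been reached. The correct way to close the gap is not unreachability but the observation that at such a state the value of player $j$'s objective is already determined, so the computation (and, on the intended reading, the objective $Y^\theta_i$) assigns player $i$ the precomputed value $\probopP^{\max}_{\game,s}(\phi^i_1 \until \phi^i_2)$ (resp.\ $\rewopR^{\max}_{\game,s}(r_i,\future \phi^i)$) from \lemref{probrew-lem}, which dominates any $n$-step truncated value attainable against the fixed $\sigma_{-i}$; and when it is player $i$'s own constraint $\phi^i_1$ that fails without $\phi^i_2$, both sides are $0$. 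With that correction your induction would go through, but as written the unreachability step is a hole.
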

\begin{proof}
Consider any strategy profile $\sigma$, $n \in \Nset$ and $1 \leq i \leq 2$. By \defdefref{bounded-objective-def}{value-objective-def} it follows that:
\[
\Eset^{\sigma}_{\game^C,s}(X^\theta_{i,n}) \ \leq \ \Eset^{\sigma}_{\game^C_{n,s}}(Y^\theta_i) .
\]
Furthermore, if we restrict the profile $\sigma$ such that, when one of the targets of the objectives of $\theta$ is reached, the profile then collaborates to maximise the value of the other objective, then by \defdefref{value-objective-def}{sem-def}:
\[
\Eset^{\sigma}_{\game^C_{n,s}}(Y^\theta_i) \ \leq \ 
\Eset^{\sigma}_{\game^C,s}(X^\theta_i) .
\]
Combining these results with \lemref{probrew-lem}, we have:
\[
\sup\nolimits_{\sigma_i \in \Sigma_i^{\game^{\scale{.75}{C}}_{\scale{.75}{n,s}}}} \Eset^{\sigma_{-i}[\sigma_i]}_{\game^C,s}(X^\theta_{i,n}) \ \leq \ \sup\nolimits_{\sigma_i \in \Sigma_i^{\game^{\scale{.75}{C}}_{\scale{.75}{n,s}}}} \Eset^{\sigma_{-i}[\sigma_i]}_{\game^C_{n,s}}(Y^\theta_i)
\ \leq \
\sup\nolimits_{\sigma_i \in \Sigma_i^{\game^{\scale{.75}{C}}}} \Eset^{\sigma_{-i}[\sigma_i]}_{\game^C,s}(X^\theta_i) 
\]
as required. \qed
\end{proof}
We now define the strategy profiles synthesised during value iteration.
\begin{definition}
For any $n \in \Nset$ and $s \in S$, let $\sigma^{n,s}$ be the strategy profile generated for the game tree $\game^C_{n,s}$ (when considering value iteration as backward induction) and $\sigma^{n,\star}$ be the synthesised strategy profile for $\game^C$ after $n$ iterations.
\end{definition}
Before giving the proof of correctness we require the following results.
\begin{lemma}\label{backwards-lem}
For any state $s$ of $\game^C$, sum of two probabilistic or reward objectives $\theta$ and $n \in \Nset$ we have that $\sigma^{n,s}$ is a subgame-perfect SWNE profile of the CSG $\game^C_{n,s}$ for the objectives $(Y^{\theta_1},Y^{\theta_2})$.
\end{lemma}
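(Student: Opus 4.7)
The plan is to prove the lemma by induction on $n$, exploiting the fact that $\sigma^{n,s}$ is constructed by value iteration, which — restricted to the finite tree $\game^C_{n,s}$ — coincides with backward induction, and that the auxiliary objectives $Y^\theta_i$ in \defref{value-objective-def} are defined precisely to mirror what the value iteration of \sectref{nonzero-unbounded-sect} computes: ordinary play for up to $n$ steps, plus a terminal correction by the MDP values $\probopP^{\max}$ or $\rewopR^{\max}$ once only one player's objective can still change.

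For the base case $n=0$ the tree $\game^C_{0,s}$ has only its root, no strategic choice remains, and inspecting \defref{value-objective-def} one sees that $Y^\theta_i$ agrees with the initial value assigned by the recurrence in \sectref{nonzero-unbounded-sect}; the empty profile is vacuously subgame-perfect SWNE. For the inductive step I would fix $s$ and split according to which clause of the value-iteration recurrence applies at $s$. In each of the ``MDP-switch'' clauses (e.g.\ $s \sat \phi^1_2 \wedge \neg \phi^2_1 \wedge \neg \phi^2_2$ in the until case, or $s \sat \phi^1 \wedge \neg \phi^2$ in the reward case) \lemref{probrew-lem} already identifies the unique subgame-perfect SWNE values at $s$ and shows that any profile under which the ``decided'' player's value is fixed at that value and the other player acts optimally in the induced MDP is a subgame-perfect SWNE; standard MDP theory~\cite{Put94,KP13} guarantees that the strategy extracted by $\sigma^{n,s}$ in this branch achieves that optimum at every state of the corresponding subtree, yielding subgame-perfection throughout.

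In the ``regular'' clause, where none of the MDP-switch conditions is satisfied at $s$, the profile $\sigma^{n,s}$ plays at the root an SWNE profile of the bimatrix game whose entries
\[
z^k_{i,j} \; = \; \mbox{$\sum_{s' \in S}$} \, \delta^C(s,(a_i,b_j))(s') \cdot v^{n-1,s'}_k
\]
(plus the one-step reward $r^k_S(s) + r^k_A(s,(a_i,b_j))$ in the reward case) are built from the values $v^{n-1,s'}_k$ already computed for the subtrees $\game^C_{n-1,s'}$, and then $\sigma^{n,s}$ follows $\sigma^{n-1,s'}$ on whichever subtree is reached. By the inductive hypothesis $v^{n-1,s'}_k = \Eset^{\sigma^{n-1,s'}}_{\game^C_{n-1,s'}}(Y^\theta_k)$, and a one-step expansion of \defref{value-objective-def} shows that $z^k_{i,j}$ equals the expected value of $Y^\theta_k$ in $\game^C_{n,s}$ under any profile that plays $(a_i,b_j)$ at the root and $\sigma^{n-1,s'}$ thereafter. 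Hence every unilateral deviation in $\game^C_{n,s}$ is either (i) a deviation at the root, ruled out by the bimatrix SWNE property, or (ii) a deviation inside a proper subtree, ruled out by the inductive hypothesis; social-welfare optimality of the sum at each subgame follows in the same way from the SWNE choice in the bimatrix game combined with induction.

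The main obstacle will be the one-step expansion of $Y^\theta_i$ in the regular clause: it is routine but requires care, particularly in the reward case, to separate the one-step reward $r_A(s,\pi[0])+r_S(s)$ from the terminal MDP-correction term $r^i_S$ in \defref{value-objective-def} and to check that the ``$\infty$ if the target is never reached within $n$ steps'' branch is compatible with the induction; \assumref{game3-assum} (probability-1 reachability of the target) ensures that this branch has probability tending to zero and does not obstruct the factorisation of expectations through the successor subtrees.
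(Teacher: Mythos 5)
Your proposal is correct and is essentially the explicit version of the paper's own argument: the paper proves this lemma in two lines by observing that value iteration on the tree $\game^C_{n,s}$ \emph{is} backward induction for the objectives $(Y^{\theta_1},Y^{\theta_2})$, that backward induction returns a subgame-perfect NE, and that SWNE profiles are selected in each bimatrix game. Your induction on $n$, with the case split between the MDP-switch clauses and the regular bimatrix clause and the one-step expansion of $Y^\theta_i$, is precisely the argument that citation compresses, so there is nothing methodologically different to report.
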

\begin{proof}
The result follows from the fact that value iteration selects SWNE profiles, value iteration corresponds to performing backward induction for the objectives $(Y^{\theta_1},Y^{\theta_2})$ and backward induction returns a subgame-perfect NE~\cite{SW+01,NMK+44}. \qed
\end{proof}
The following proposition demonstrates that value iteration converges and depends on \assumassumref{game2-assum}{game3-assum}. Without this assumption convergence cannot be guaranteed as demonstrated by the counterexamples in \appref{3-app} and \appref{4-app}. Although value iteration converges, unlike value iteration for MDPs or zero-sum games, the generated sequence of values is not necessarily non-decreasing. 
\begin{proposition}\label{convergence-prop}
For any sum of two probabilistic or reward objectives $\theta$ and state $s$, the sequence $\langle \V_{\game^C}(s,\theta,n) \rangle_{n \in \Nset}$ converges.
\end{proposition}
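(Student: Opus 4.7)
The plan is to prove the sequence $\langle \V_{\game^C}(s,\theta,n) \rangle_{n \in \Nset}$ is Cauchy, from which convergence follows since the sequence is bounded (by $1$ for probabilistic objectives, and by the relevant $\rewopR^{\max}$ values—finite thanks to \assumref{game3-assum}—for reward objectives). The first step is to apply \lemref{backwards-lem} to identify $f_n(s) \defeq \V_{\game^C}(s,\theta,n)$ with the sum of subgame-perfect SWNE values of the finite game tree $\game^C_{n,s}$ for the auxiliary objectives $(Y^\theta_1, Y^\theta_2)$ of \defref{value-objective-def}, realised by the backward-induction profile $\sigma^{n,s}$. Under this identification, the MDP switching used by value iteration is already baked into $Y^\theta_i$ through its $\probopP^{\max}$ and $\rewopR^{\max}$ tail rewards.

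The analytical engine is \lemref{epsilon2-lem}: for any $\varepsilon > 0$ there exists $N$ such that, for all $m \geq n \geq N$, every profile $\sigma$, every state $s'$ and each $i \in \{1,2\}$, one has $0 \leq \Eset^\sigma_{\game^C_{m,s'}}(Y^\theta_i) - \Eset^\sigma_{\game^C_{n,s'}}(Y^\theta_i) \leq \varepsilon$. Chaining this monotonicity with the defining NE property of $\sigma^{m,s}$ on $\game^C_{m,s}$ gives, for any unilateral deviation $\sigma'_i$ of player $i$,
\[
\Eset^{\sigma^{m,s}_{-i}[\sigma'_i]}_{\game^C_{n,s}}(Y^\theta_i) \,\leq\, \Eset^{\sigma^{m,s}_{-i}[\sigma'_i]}_{\game^C_{m,s}}(Y^\theta_i) \,\leq\, \Eset^{\sigma^{m,s}}_{\game^C_{m,s}}(Y^\theta_i) \,\leq\, \Eset^{\sigma^{m,s}}_{\game^C_{n,s}}(Y^\theta_i) + \varepsilon,
\]
showing that $\sigma^{m,s}$ is an $\varepsilon$-NE of $\game^C_{n,s}$; by a symmetric chain, $\sigma^{n,s}$ is an $\varepsilon$-NE of $\game^C_{m,s}$. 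The same uniform bound further gives that the total utility of $\sigma^{m,s}$ on $\game^C_{n,s}$ lies within $2\varepsilon$ of $f_m(s)$, and that the total utility of $\sigma^{n,s}$ on $\game^C_{m,s}$ is at least $f_n(s)$.

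Putting the two $\varepsilon$-NE witnesses together with the SWNE optimality of $\sigma^{n,s}$ on $\game^C_{n,s}$ and of $\sigma^{m,s}$ on $\game^C_{m,s}$ is intended to yield $|f_n(s) - f_m(s)| \leq c\varepsilon$ for a universal constant $c$, hence Cauchy. The main obstacle is closing this last inequality: an $\varepsilon$-NE profile can a priori have a larger utility sum than any exact NE of the same game, so one must show that the SWNE value is stable under $\varepsilon$-perturbations of the equilibrium condition. I would handle this by a compactness-based upper-semicontinuity argument on the profile space, which is a compact product of simplices on which expected utilities depend continuously: any sequence of $\varepsilon_k$-NE profiles with $\varepsilon_k \to 0$ admits a subsequence converging to a profile that is a genuine NE, and the associated sums converge to the limit NE's sum, which is bounded above by the SWNE value. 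Feeding this stability back into the previous paragraph closes the Cauchy bound and yields the stated convergence.
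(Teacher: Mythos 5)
Your proposal follows the same skeleton as the paper's proof: identify $\V_{\game^C}(s,\theta,n)$ with SWNE values of the finite game trees $\game^C_{n,s}$ for the objectives $(Y^\theta_1,Y^\theta_2)$ via \lemref{backwards-lem}, then invoke the uniform convergence of \lemref{epsilon2-lem}. The paper phrases the second step by viewing each $\game^C_{n,s}$ as an infinite-action NFG $\nfgame_{n,s}$ whose action sets are the strategy sets of $\game^C$ and asserting that uniform convergence of $\langle\nfgame_{n,s}\rangle_{n}$ forces convergence of the SWNE values. Your chain of inequalities showing that $\sigma^{m,s}$ is an $\varepsilon$-NE of $\game^C_{n,s}$ (and symmetrically), and that the corresponding utility sums transfer up to $2\varepsilon$, is a correct and usefully explicit rendering of the first half of that assertion.

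The gap is exactly at the step you flag, and the compactness patch does not close it. Upper semicontinuity of the NE correspondence on a \emph{fixed} game $G$ does give that every $\varepsilon$-NE of $G$ has utility sum at most $\mathrm{SWNE}(G)+\delta_G(\varepsilon)$ with $\delta_G(\varepsilon)\to 0$; but (i) your Cauchy estimate needs this simultaneously for all games $\game^C_{n,s}$ with $n\geq N$, which vary with $n$, and (ii) no game-independent modulus exists, because SWNE values are genuinely discontinuous under uniform perturbation of utilities. Concretely, let player~2 have one action and player~1 two actions, with utility pairs $(0,0)$ and $(0,1)$ for the two choices of player~1: both actions are NE and the SWNE sum is $1$. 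Perturbing player~1's utilities to $(\varepsilon,0)$ destroys the welfare-maximising equilibrium; the unique NE now has sum $\varepsilon$, while ``play the second action'' remains an $\varepsilon$-NE with sum $1$, so $\delta_{G'}(\varepsilon)\geq 1-\varepsilon$ even though $G'$ is uniformly $\varepsilon$-close to $G$. Hence the semicontinuity argument yields only one-sided control ($\limsup_n$ of the sums is at most the SWNE value of the limit game) and not the two-sided bound $|f_n(s)-f_m(s)|\leq c\varepsilon$. To be fair, the paper's own justification of this final implication is a single sentence and is no more detailed than yours; but as written your argument does not establish the proposition, and completing it requires exploiting more structure (e.g.\ how value iteration selects among equilibria) than either the paper or your proposal currently uses.
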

\begin{proof}
For any state $s$ and $n \in \Nset$ we can consider $\game^C_{n,s}$ as two-player infinite-action NFGs $\nfgame_{n,s}$ where for $1 \leq i \leq 2$:
\begin{itemize}
\item
the set of actions of player $i$ equals the set of strategies of player $i$ in $\game^C$;
\item for the action pair $(\sigma_1,\sigma_2)$, the utility function for player $i$ returns $\Eset^{\sigma}_{\game^C_{n,s}}(Y^\theta_i)$.
\end{itemize}
The correctness of this construction relies on the mapping of strategy profiles from the game $\game^C$ to $\game^C_{n,s}$ being surjective. Using \lemref{epsilon2-lem}, we have that the sequence $\langle \nfgame_{n,s} \rangle_{n \in N}$ of NFGs converges uniformly, and therefore, since $\V_{\game^C}(s,\theta,n)$ are subgame-perfect SWNE values of $\game^C_{n,s}$ (see \lemref{backwards-lem}), the sequence $\langle \V_{\game^C}(s,\theta,n) \rangle_{n \in \Nset}$ also converges. \qed
\end{proof}
A similar convergence result to \propref{convergence-prop} has been shown for the simpler case of discounted properties in~\cite{FL83}.
\begin{lemma}\label{strats-lem}
For any $\varepsilon>0$, there exists $N \in \Nset$ such that for any $s \in S$ and $1 \leq i \leq 2$:
\[
\big| \, \Eset^{\sigma^{n,\star}}_{\game^C,s}(X^\theta_i) - \Eset^{\sigma^{n,s}}_{\game^C_{n,s}}(Y^\theta_i) \, \big|
\ \leq \ \varepsilon \, .
\]
\end{lemma}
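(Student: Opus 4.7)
The plan is to relate both $\Eset^{\sigma^{n,\star}}_{\game^C,s}(X^\theta_i)$ and $\Eset^{\sigma^{n,s}}_{\game^C_{n,s}}(Y^\theta_i)$ to the value iteration iterates $\V_{\game^C}(s,\theta,n)$, whose convergence is guaranteed by \propref{convergence-prop}, and to control the discrepancy via the uniform tail bounds of \lemlemref{epsilon-lem}{epsilon2-lem}. The claim will then follow by a triangle inequality with threshold $\varepsilon/2$ against the common limit.

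First, I would identify $\Eset^{\sigma^{n,s}}_{\game^C_{n,s}}(Y^\theta_i)$ with the $i$th component of $\V_{\game^C}(s,\theta,n)$. By \lemref{backwards-lem}, $\sigma^{n,s}$ is the subgame-perfect SWNE profile of $\game^C_{n,s}$ for the objectives $(Y^{\theta_1},Y^{\theta_2})$ obtained from backward induction on the finite tree, and backward induction here is precisely the process carried out by the first $n$ iterations of value iteration on $\game^C$, with the MDP-max boundary conditions from \defref{value-objective-def} playing the role of the terminal leaves.

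Next, I would analyse $\Eset^{\sigma^{n,\star}}_{\game^C,s}(X^\theta_i)$ by partitioning paths according to whether a target of player $1$ or player $2$ is reached within the first $n$ steps. On paths where such a target is reached at some step $k\le n$, the MDP-switching behaviour built into $\sigma^{n,\star}$ forces the conditional expectation of $X^\theta_i$ given the prefix up to $\pi(k)$ to equal the boundary value that $Y^\theta_i$ assigns at $\pi(k)$: either $1$ or the accumulated reward (when player $i$'s own target is hit first), or the collaborative MDP-max value $\probopP^{\max}_{\game,\pi(k)}$ or $\rewopR^{\max}_{\game,\pi(k)}$ (when the other player's target is hit first). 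The contribution from the remaining paths, on which no target is reached within $n$ steps, is bounded uniformly in $s$ and in the chosen profile by \lemref{epsilon-lem}, so that $|\Eset^{\sigma^{n,\star}}_{\game^C,s}(X^\theta_i) - \V_{\game^C}(s,\theta,n)_i| \le \varepsilon/2$ for $n$ large enough; combined with the previous paragraph this yields the stated bound.

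The main obstacle will be establishing the tail bound uniformly in the starting state $s\in S$ and the player index $i\in\{1,2\}$, and in particular handling the expected reward case. There, paths that fail to hit $\Sat(\phi^1)\cup\Sat(\phi^2)$ within $n$ steps carry value $\infty$ under $Y^\theta_i$, so one must argue that their probability mass vanishes uniformly in $s$ and in the profile; this follows from the finiteness of $\game^C$ together with \assumref{game3-assum}, which jointly yield a geometric-type bound on the hitting time of the target set (mirroring the uniform convergence already used in the proof of \lemref{epsilon-lem}). Once this uniform hitting-time control is in place, the expected accumulated reward prior to reaching a target is uniformly bounded, a single $N$ works for all states and both players simultaneously, and the two preceding paragraphs combine to give the desired inequality.
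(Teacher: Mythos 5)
Your overall strategy matches the paper's: both arguments relate the two expectations to the value-iteration iterates $\V_{\game^C}(s,\theta,n)$ and close the estimate using the uniform convergence statements (\lemlemref{epsilon-lem}{epsilon2-lem}) together with \propref{convergence-prop}. Your first step, identifying $\Eset^{\sigma^{n,s}}_{\game^C_{n,s}}(Y^\theta_i)$ with the $i$th component of $\V_{\game^C}(s,\theta,n)$ via \lemref{backwards-lem}, is exactly what the paper relies on.

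There is, however, a genuine gap in your second paragraph. The path-partition argument (splitting on whether a target is hit within $n$ steps and using the MDP-switching behaviour at the hitting time) shows that $\Eset^{\sigma^{n,\star}}_{\game^C,s}(X^\theta_i)$ is close to $\Eset^{\sigma^{n,\star}}_{\game^C_{n,s}}(Y^\theta_i)$, i.e.\ to the $Y^\theta_i$-value of the tree $\game^C_{n,s}$ under the \emph{restriction of $\sigma^{n,\star}$}. But you then assert $\big|\Eset^{\sigma^{n,\star}}_{\game^C,s}(X^\theta_i) - \V_{\game^C}(s,\theta,n)_i\big| \le \varepsilon/2$, and $\V_{\game^C}(s,\theta,n)_i = \Eset^{\sigma^{n,s}}_{\game^C_{n,s}}(Y^\theta_i)$ is the $Y^\theta_i$-value of a \emph{different} profile: $\sigma^{n,\star}$ plays the equilibrium choices extracted from the \emph{final} iteration at every state and every depth, whereas the backward-induction profile $\sigma^{n,s}$ plays the iteration-$(n-d)$ choices at depth $d$ of the tree. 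Nothing in your argument bounds the difference between the $Y^\theta_i$-values of these two profiles on $\game^C_{n,s}$; this is precisely the point on which the paper's proof turns, where it invokes \propref{convergence-prop} and \lemref{epsilon2-lem} to argue that the choices of $\sigma^{n,s}$ and $\sigma^{n,\star}$ agree for sufficiently many initial steps, so that the residual discrepancy is absorbed into $\varepsilon$. You need to add this comparison (or an equivalent argument showing that replacing the depth-dependent choices by the final-iteration choices perturbs the tree value by at most $\varepsilon/2$ uniformly in $s$) for the triangle inequality to close. Your treatment of the reward case and the uniform hitting-time control in the final paragraph is otherwise consistent with how \assumref{game3-assum} is used in \lemref{epsilon-lem}.
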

\begin{proof}
Using \lemref{epsilon2-lem} and \propref{convergence-prop}, we can choose $N$ such that the choices of the profile $\sigma^{n,s}$ agree with those of $\sigma^{n,\star}$ for a sufficient number of steps such that the inequality holds. \qed
\end{proof}
\begin{theorem}
For a given sum of two probabilistic or reward objectives $\theta$ and $\varepsilon>0$, there exists $N \in \Nset$ such that for any $n \geq N$ the strategy profile $\sigma^{n,\star}$ is a subgame-perfect $\varepsilon$-SWNE profile of $\game^C$ and the objectives $(X^{\theta_1},X^{\theta_2})$.
\end{theorem}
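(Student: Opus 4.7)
The plan is to verify the subgame-perfect $\varepsilon$-SWNE condition directly from the semantics by combining the preceding convergence lemmas with the exact NE condition on the truncated game trees. The key observation is that the backward-induction profile $\sigma^{n,s}$ on $\game^C_{n,s}$ is, by construction, the restriction of the globally synthesised $\sigma^{n,\star}$ to the first $n$ steps starting at $s$, so the exact SWNE property of $\sigma^{n,s}$ for the objectives $(Y^{\theta_1},Y^{\theta_2})$ transfers, via the truncated objectives $X^\theta_{i,n}$, back to an approximate NE condition on $X^\theta_i$ in $\game^C$.

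First, fix $\varepsilon > 0$. I would invoke \lemref{epsilon-lem} with tolerance $\varepsilon/2$ to obtain $N_1$ such that $0 \leq \Eset^{\sigma}_{\game^C,s}(X^\theta_i) - \Eset^{\sigma}_{\game^C,s}(X^\theta_{i,n}) \leq \varepsilon/2$ uniformly in $\sigma$, $s$, $i$ for $n \geq N_1$. Then I would invoke \lemref{strats-lem} with tolerance $\varepsilon/2$ to obtain $N_2$ controlling $|\Eset^{\sigma^{n,\star}}_{\game^C,s}(X^\theta_i) - \Eset^{\sigma^{n,s}}_{\game^C_{n,s}}(Y^\theta_i)|$. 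Setting $N = \max(N_1,N_2)$, the rest of the argument is performed for arbitrary $n \geq N$, state $s \in S$, and player $i \in \{1,2\}$.

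The central step is the following chain of inequalities, where each bound cites a previously established result. Starting from $\sup_{\sigma_i}\Eset^{\sigma^{n,\star}_{-i}[\sigma_i]}_{\game^C,s}(X^\theta_i)$, apply \lemref{epsilon-lem} (uniformly in $\sigma_i$) to pass to the truncated objective $X^\theta_{i,n}$ at a cost of $\varepsilon/2$. Next apply \lemref{precomp-lem}---whose hypothesis that the profile collaborates after a target is reached holds for $\sigma^{n,\star}$ by construction, since the algorithm of \sectref{nonzero-unbounded-sect} switches to MDP maximisation for the other player's objective as soon as either target is hit---to move to $\sup_{\sigma_i}\Eset^{\sigma^{n,\star}_{-i}[\sigma_i]}_{\game^C_{n,s}}(Y^\theta_i)$ at zero extra cost. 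Because $\sigma^{n,s}_{-i}$ coincides with the restriction of $\sigma^{n,\star}_{-i}$ to $\game^C_{n,s}$, this supremum equals $\sup_{\sigma_i}\Eset^{\sigma^{n,s}_{-i}[\sigma_i]}_{\game^C_{n,s}}(Y^\theta_i)$, which by \lemref{backwards-lem} (subgame-perfect SWNE on the game tree) equals $\Eset^{\sigma^{n,s}}_{\game^C_{n,s}}(Y^\theta_i)$. Finally, \lemref{strats-lem} converts this into $\Eset^{\sigma^{n,\star}}_{\game^C,s}(X^\theta_i)$ at a further cost of $\varepsilon/2$. Summing the two incurred errors gives
\[
\textstyle\sup_{\sigma_i}\Eset^{\sigma^{n,\star}_{-i}[\sigma_i]}_{\game^C,s}(X^\theta_i) \ \leq \ \Eset^{\sigma^{n,\star}}_{\game^C,s}(X^\theta_i) + \varepsilon,
\]
which is precisely the $\varepsilon$-NE condition at $s$; since $s$ and $i$ are arbitrary and the chosen $N$ does not depend on either, the profile $\sigma^{n,\star}$ is a subgame-perfect $\varepsilon$-SWNE.

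The main obstacle, which I expect to require the most care, is the bookkeeping in the chain above: \lemref{precomp-lem} and \lemref{backwards-lem} live on the truncated game $\game^C_{n,s}$ with objectives $Y^\theta_i$, whereas the target $\varepsilon$-NE condition lives on $\game^C$ with objectives $X^\theta_i$. One must check that deviations of player $i$ in $\game^C$ can be mapped to deviations in $\game^C_{n,s}$ compatibly with the suprema (using that $X^\theta_{i,n}$ depends only on the first $n$ steps and the surjective strategy mapping discussed in the proof of \propref{convergence-prop}) and that the profile $\sigma^{n,\star}$ does indeed satisfy the collaboration hypothesis of \lemref{precomp-lem}. Once these identifications are in place, the inequalities combine to yield the desired bound.
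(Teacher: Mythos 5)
Your proposal follows essentially the same route as the paper's own proof: it combines the same four ingredients (\lemref{epsilon-lem}, \lemref{precomp-lem}, \lemref{backwards-lem} and \lemref{strats-lem}) with the same $\varepsilon/2+\varepsilon/2$ split, and your chain of inequalities is the paper's chain read in the opposite direction, from $\sup_{\sigma_i}\Eset^{\sigma^{n,\star}_{-i}[\sigma_i]}_{\game^C,s}(X^\theta_i)$ down to $\Eset^{\sigma^{n,\star}}_{\game^C,s}(X^\theta_i)+\varepsilon$; your bookkeeping remarks (the surjective strategy mapping between $\game^C$ and $\game^C_{n,s}$, and the fact that $\sigma^{n,\star}$ satisfies the collaboration hypothesis of \lemref{precomp-lem} by construction) are exactly the points the paper relies on. The only omission is the ``SW'' in $\varepsilon$-SWNE: your argument establishes the subgame-perfect $\varepsilon$-NE condition and then simply asserts social-welfare optimality, whereas this still requires the (one-line) observation that the bimatrix games solved at each step of value iteration return social-welfare-optimal equilibria, which is how the paper closes its proof.
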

\begin{proof}
Consider any $\varepsilon>0$. From \lemref{strats-lem} there exists $N_1 \in \Nset$ such that for any $s\in S$ and $n \geq N_1$:
\begin{equation}\label{1-eqn}
\big| \, \Eset^{\sigma^{n,\star}}_{\game^C,s}(X^\theta_i) - \Eset^{\sigma^{n,s}}_{\game^C_{n,s}}(Y^\theta_i) \, \big|
\ \leq \ \mbox{$\frac{\varepsilon}{2}$} \, .
\end{equation}
For any $m \in \Nset$ and $s \in S$, using \lemref{backwards-lem} we have that $\sigma^{m,s}$ is a NE of $\game^C_{m,s}$, and therefore for any $m \in \Nset$, $s\in S$ and $1 \leq i \leq 2$:
\begin{equation}\label{nash-eqn}
\Eset^{\sigma^{m,s}}_{\game^{C}_{m,s}}(Y^\theta_i) 
\ \geq \
\sup\nolimits_{\sigma_i \in \Sigma_i^{\game^{\scale{.75}{C}}_{\scale{.75}{m,s}}}} \Eset^{\sigma^{m,s}_{-i}[\sigma_i]}_{\game^C_{m,s}}(Y^\theta_i) \, .
\end{equation}
From \lemref{epsilon-lem} there exists $N_2 \in \Nset$ such that for any $n \geq N_2$, $s \in S$ and $1 \leq i \leq 2$:
\begin{equation}\label{2-eqn}
\sup\nolimits_{\sigma_i \in \Sigma_i^{\game^{\scale{.75}{C}}}} \Eset^{\sigma^{n,\star}_{-i}[\sigma_i]}_{\game^C,s}(X^\theta_i) - \sup\nolimits_{\sigma_i \in \Sigma_i^{\game^{\scale{.75}{C}}}} \Eset^{\sigma^{n,\star}_{-i}[\sigma_i]}_{\game^C,s}(X^\theta_{i,n}) 
\ \leq \ \mbox{$\frac{\varepsilon}{2}$} \, .
\end{equation}
By construction, $\sigma^{n,\star}$ is a profile for which, if one of the targets of the objectives of $\theta$ is reached, the profile maximises the value of the objective. We can thus rearrange \eqnref{2-eqn} and apply \lemref{precomp-lem} to yield for any $n \geq N_2$, $s \in S$ and $1 \leq i \leq 2$:
\begin{equation}\label{3-eqn}
\sup\nolimits_{\sigma_i \in \Sigma_i^{\game^{\scale{.75}{C}}_{\scale{.75}{n,s}}}} \Eset^{\sigma^{n,s}_{-i}[\sigma_i]}_{\game^C_{n,s}}(Y^\theta_i)
\ \geq \
\sup\nolimits_{\sigma_i \in \Sigma_i^{\game^{\scale{.75}{C}}}} \Eset^{\sigma^{n,\star}_{-i}[\sigma_i]}_{\game^C,s}(X^\theta_i) - \mbox{$\frac{\varepsilon}{2}$} \, .
\end{equation}
Letting $N = \max \{ N_1 , N_2 \}$, for any $n \geq N$, $s \in S$ and $1 \leq i \leq 2$:
\begin{align*}
\Eset^{\sigma^{n,\star}}_{\game^C,s}(X^\theta_i) \ \ & \geq \ \ \Eset^{\sigma^{n,s}}_{\game^C_{n,s}}(Y^\theta_i) - \mbox{$\frac{\varepsilon}{2}$} & \mbox{by \eqnref{1-eqn} since $n \geq N_1$} \\
&\geq \ \ \sup\nolimits_{\sigma_i \in \Sigma_i^{\game^{\scale{.75}{C}}_{\scale{.75}{n,s}}}} \Eset^{\sigma^{n,s}_{-i}[\sigma_i]}_{\game^C_{n,s}}(Y^\theta_i) - \mbox{$\frac{\varepsilon}{2}$} & \mbox{by \eqnref{nash-eqn}} \\
&\geq \ \ \left( \sup\nolimits_{\sigma_i \in \Sigma_i^{\game^{\scale{.75}{C}}}} \Eset^{\sigma^{n,\star}_{-i}[\sigma_i]}_{\game^C,s}(X^\theta_i) - \mbox{$\frac{\varepsilon}{2}$} \right) - \mbox{$\frac{\varepsilon}{2}$} & \mbox{by \eqnref{3-eqn} since $n \geq N_2$} \\
& = \ \ \sup\nolimits_{\sigma_i \in \Sigma_i^{\game^{\scale{.75}{C}}}} \Eset^{\sigma^{n,\star}_{-i}[\sigma_i]}_{\game^C,s}(X^\theta_i) - \varepsilon
\end{align*}
and hence, since $\varepsilon>0$, $s \in S$ and $1 \leq i \leq 2$ were arbitrary, $\sigma^{n,\star}$ is a subgame-perfect $\varepsilon$-NE. It remains to show that the strategy profile is a subgame-perfect social welfare optimal $\varepsilon$-NE, which follows from the fact that when solving the bimatrix games during value iteration social welfare optimal NE are returned. \qed
\end{proof}
It remains to consider the model checking algorithms for nonzero-sum properties for which the sum of objectives contains both a finite-horizon and an infinite-horizon objective. In this case (see \sectref{nonzero-mixed-sect}), for a given game $\game^C$ and sum of objectives $\theta$, the algorithms first build a modified game $\game'$ with states $S' \subseteq S{\times}\Nset$ and sum of infinite-horizon objectives $\theta'$ and then computes SWNE/SCNE values of $\theta'$ in $\game'$. The correctness of these algorithms follows by first showing there exists a bijection between the profiles of $\game^C$ and $\game'$ and then that, for any profile $\sigma$ of $\game^C$ and $\sigma'$, the corresponding profile of $\game'$ under this bijection, we have:
\[
\Eset^{\sigma}_{\game^{C},s}(X_i^{\theta}) = \Eset^{\sigma'}_{\game',(s,0)}(X_i^{\theta'})
\]
for all states $s$ of $\game^C$ and $1 \leq i \leq 2$. This result follows from the fact that in \sectref{nonzero-mixed-sect} we used a standard construction for converting the verification of finite-horizon properties to infinite-horizon properties.

\section{Implementation and Tool Support}

We have implemented support for modelling and automated verification of CSGs in PRISM-games 3.0~\cite{KNPS20}, which previously only handled TSGs and zero-sum objectives~\cite{KPW18}. The PRISM-games tool is available from~\cite{pgwww} and the files for the case studies, described in the next section, are available from~\cite{files}.

\subsection{Modelling}

We extended the PRISM-games modelling language to support specification of CSGs.
The language allows multiple parallel components, called modules, operating both asynchronously and synchronously. Each module's state is defined by a number of finite-valued variables, and its behaviour is defined using probabilistic guarded commands of the form ${[}a{]}\ g \rightarrow u$, where $a$ is an action label, $g$ is a guard (a predicate over the variables of all modules) and $u$ is a probabilistic state update. If the guard is satisfied then the command is enabled, and the module can (probabilistically) update its variables according to $u$.
The language also allows for the specification of cost or reward structures.
These are defined in a similar fashion to the guarded commands,
taking the form ${[}a{]}\ g : v$ (for action rewards) and $g : v$ (for state rewards),
where $a$ is an action label, $g$ is a guard and $v$ is a real-valued expression over variables.

For CSGs, we assign modules to players and, in every state of the model,
each player can choose between the enabled commands of the corresponding modules
(or, if no command is enabled, the player idles).
In contrast to %
the usual behaviour of PRISM, where modules synchronise on common actions,
in CSGs action labels are distinct for each %
player and the players move concurrently.
To allow the updates of variables to depend on the choices of other players,
we extend the language by allowing commands to be labelled with lists of actions ${[}a_1,\dots,a_n{]}$.
Moreover, %
updates to variables can 
be dependent on the new values
of other variables being updated in the same concurrent transition,
provided there are no cyclic dependencies.
This %
ensures that variables of different players are %
updated according to a joint probability distribution. 
Another addition is the possibility of specifying ``independent'' modules, that is,
modules not associated with a specific player,
which do not feature nondeterminism and update their own variables
when synchronising with other players' actions.
Reward definitions are also extended to use action lists, similarly to commands,
so that an action reward can depend on the choices taken by multiple players. 
For further details of the new PRISM-games modelling language,
we refer the reader to the tool documentation~\cite{pgwww}.

\subsection{Implementation}\label{impl-sect}

PRISM-games constructs a CSG from a given model specification and implements the rPATL model checking and strategy synthesis algorithms from \sectref{mc-sect}.
We extend existing functionality within the tool,
such as modelling and property language parsers, the simulator
and basic model checking functionality.
We build, store and verify CSGs using an extension of PRISM's `explicit' model checking engine,
which is based on sparse matrices and implemented in Java.
For strategy synthesis we have included the option to export the generated strategies to a graphical representation using the Dot language~\cite{dot}.

Computing values (and optimal strategies) of matrix games (see~\sectref{matrix-sect}),
as required for zero-sum formulae,
is performed using the LPSolve library~\cite{LPS} via linear programming.
This library is based on the revised simplex and branch-and-bound methods. 
Computing SWNE or SCNE values (and SWNE or SCNE strategies) of bimatrix games (see~\sectref{bimatrix-sect})
for nonzero-sum formulae is performed via labelled polytopes
through a reduction to SMT. 
Currently, we implement this in both Z3~\cite{Z3} and Yices~\cite{Dut14}. As an optimised precomputation step, when possible we also search for and filter out \emph{dominated strategies}, which speeds up computation and reduces calls to the solver.

Since bimatrix games can have multiple SWNE values, when selecting SWNE values of such games
we choose %
the SWNE values for which the value of player 1 is %
maximal.
In case player 1 is indifferent, i.e., their utility is the same for all pairs,
we choose the SWNE values which maximise the value of player 2.
If both players are indifferent, an arbitrary pair of SWNE values is selected.

\begin{table}[t]
\centering
{\scriptsize
\begin{tabular}{|c|r||r|r|r|r|r|r|} \hline
\multirow{3}{*}{Bimatrix Game} & 
\multicolumn{1}{c||}{Actions}
& \multicolumn{3}{c|}{SWNE values} & \multicolumn{3}{c|}{SCNE values} \\ \cline{3-8} 
& 
\multicolumn{1}{c||}{of each}
& \multicolumn{2}{c|}{$\!\!$Solution time (s)$\!\!$} & \multicolumn{1}{c|}{$\!\!$Num. of$\!\!$} & \multicolumn{2}{c|}{$\!\!$Solution time (s)$\!\!$} & \multicolumn{1}{c|}{$\!\!$Num. of$\!\!$} \\ \cline{3-4} \cline{6-7}
& \multicolumn{1}{c||}{player} & \multicolumn{1}{c|}{Yices} & \multicolumn{1}{c|}{Z3} & \multicolumn{1}{c|}{$\!\!$ NE in $\nfgame$ $\!\!$} & \multicolumn{1}{c|}{Yices} & \multicolumn{1}{c|}{Z3} & \multicolumn{1}{c|}{$\!\!$ NE in $\nfgame^{-}$ $\!\!$}
 \\ \hline \hline
\multirow{5}{*}{\shortstack[c]{\emph{Covariant} \\ \emph{games}}}
&  2 &   0.04 &  0.06 &   1 & 0.2   & 1.1   &   1 \\
&  4 &   0.04 &  0.1  &   3 & 0.04  & 0.1   &   3 \\
&  8 &   0.3  &  2.7  &  13 & 0.3  & 1.8    &  21 \\
& 12 &   8.3  & 77.0  &  15 & 11.0  & 130.2 &  45 \\
& 16 & 764.8  & 4,238 & 103 & 318.6 & 2,627 & 109 \\
\hline\hline
\multirow{4}{*}{\shortstack[c]{\emph{Dispersion} \\ \emph{games}}}
&  2 & 0.04  & 0.08 &       3 & 0.04 & 0.08 &      3 \\
&  4 & 0.05  & 0.2  &      51 & 0.04 & 0.1  &     15 \\
&  8 & 1.8   & 16.4 &   6,051 & 0.2  & 1.1  &    255 \\
& 12 & 6,368 &  t/o & 523,251 & 6.0  & 38.3 & 4,095 \\
\hline\hline
\multirow{5}{*}{\shortstack[c]{\emph{Majority voting} \\ \emph{games}}}
&  2 &   0.03 &  0.08 &      5 &   0.03 &  0.07 &      2 \\
&  4 &   0.05 &   0.1 &     15 &   0.04 &  0.1 &     13 \\
&  8 &   0.5  &   1.0 &    433 &    0.2 &  0.7 &    186 \\
& 12 &   9.9  &  22.4 &  3,585 &    9.9 &  16.0 &  3,072 \\
& 16 & 532.5  & 2,386 & 61,441 &  465.9 & 2,791 & 49,153 \\
\hline\hline
\multirow{5}{*}{$\!\!\!$\shortstack[c]{\emph{Randomly generated} \\ \emph{games}}$\!\!\!$}
&  2 &  0.15 &   1.9  &  1 &  0.03 &   0.08 &  1 \\
&  4 &  0.04 &   0.1  &  3 &  0.03 &   0.09 &  3 \\
&  8 &   0.4 &   2.3  & 13 &  0.3  &    1.8 &  7 \\
& 12 & 45.60 & 422.0  & 27 & 68.0  &  343.6 & 31 \\
& 16 & 2,370 &   t/o  & 81 & 1,112 &    t/o & 69 \\
\hline
\end{tabular}}
\vspace*{-0.2cm}
\caption{Finding SWNE/SCNE values in bimatrix games: comparing SMT solvers.}\label{tab:smt-stats}
\vspace*{-0cm}
\end{table}
\tabref{tab:smt-stats} presents experimental results for the time to solve bimatrix games using the Yices and Z3 solvers, as the numbers of actions of the individual games vary.
The table also shows the number of NE in each game $\nfgame$, as found
when determining the SWNE values, and also the number of NE in $\nfgame^{-}$,
as found when determining the SCNE values (see \lemref{duality-lem}).
These games were generated using GAMUT (a suite of game generators)~\cite{NWSL04}
and a time-out of 2 hours was used for the experiments.
The results show Yices to be the faster implementation and that the difference in solution time grows as the number of actions increases. Therefore, in our experimental results in the next section, all verification runs use the Yices implementation. The results in \tabref{tab:smt-stats} also demonstrate that the solution time for either solver can vary widely
and depends on both the number of NE that need to be found and the structure of the game. For example, when solving the dispersion games, the differences in the solution times for SWNE and SCNE seem to correspond to the differences in the number of NE that need to found. On the other hand, there is no such correspondence between the difference in the solution times for the covariant games.

Regarding the complexity of solving bimatrix games, if each player has $n$ actions, then the number of possible assignments to the supports of the strategy profiles
(i.e., the action tuples that are chosen with nonzero probability)
is $(2^n{-}1)^2$, which therefore grows exponentially with the number of actions, surpassing 4.2 billion when each player has $16$ actions. This particularly affects performance in cases where one or both players are \emph{indifferent} with respect to a given support. More precisely, in such cases, if there is an equilibrium including pure strategies over these supports, then there are also equilibria including mixed strategies over these supports as the indifferent player would get the same utility for \emph{any affine combination} of pure strategies. 

\begin{examp}\label{indiff-eg}
Consider the following bimatrix game:
\[
\mgame_1 = 
\kbordermatrix{
 & b_1 & b_2 \cr 
 a_1 & 1 & 0 \cr 
 a_2 & 1 & 0
 }
\qquad
\mgame_2 = 
\kbordermatrix{
 & b_1 & b_2 \cr 
 a_1 & 2 & 2 \cr 
 a_2 & 4 & 4
 }
\]
Since the entries in the rows for the utility matrix for player 1 are the same and the columns are the same for player 2, it is easy to see that both players are indifferent with respect to their actions. As can be seen in \tabref{tab:ex-indiff}, all $(2^2{-}1)^2 = 9$ possible support assignments lead to an equilibrium. %
\end{examp}
\begin{table}[t]
\centering
{\scriptsize
\begin{tabular}{|r|r|r|r|c|} \hline
\multicolumn{2}{|c|}{Player 1 strategy} & 
\multicolumn{2}{c|}{Player 2 strategy} & 
\multicolumn{1}{c|}{Utilities} \\ \cline{1-4}	
prob.\ $a_1$ & prob.\ $a_2$ & prob.\ $b_1$ & prob.\ $b_2$ &
$(u_1,u_2)$ \\ \hline\hline
0.0 & 1.0 & 0.0 & 1.0 & (0.0,4.0) \\ \hline 
0.0 & 1.0 & 1.0 & 0.0 & (1.0,4.0) \\ \hline
0.0 & 1.0 & 0.5 & 0.5 & (0.5,4.0) \\ \hline
1.0 & 0.0 & 0.0 & 1.0 & (0.0,2.0) \\ \hline
1.0 & 0.0 & 1.0 & 0.0 & (1.0,2.0) \\ \hline
1.0 & 0.0 & 0.5 & 0.5 & (0.5,2.0) \\ \hline
0.5 & 0.5 & 0.0 & 1.0 & (0.0,3.0) \\ \hline
0.5 & 0.5 & 1.0 & 0.0 & (1.0,3.0) \\ \hline
0.5 & 0.5 & 0.5 & 0.5 & (0.5,3.0)  \\
\hline
\end{tabular}}
\vspace*{-0.2cm}
\caption{Possible NE strategies and utilities of the bimatrix game of \egref{indiff-eg}.}\label{tab:ex-indiff}
\vspace*{-0.4cm}
\end{table}
\noindent
For the task of computing non-optimal NE values, the large number of supports can be somewhat mitigated by eliminating \emph{weakly dominated} strategies~\cite{NRTV07}. However, removing such strategies is not a straightforward task when computing SWNE or SCNE values, since it can lead to the elimination of SWNE or SCNE profiles, and hence also SWNE or SCNE values. For example, if we removed the row corresponding to action $a_2$ or the column corresponding to action $b_1$ from the matrices in \egref{indiff-eg} above, then we eliminate a SWNE profile. As the number of actions for each player increases, the number of NE profiles also tends to increase and so does the likelihood of indifference. Naturally, the number of actions also affects the number of variables that have to be allocated, and the number and complexity of assertions passed to the SMT solver. As our method is based on the progressive elimination of support assignments that lead to NE, it takes longer to find SWNE and SCNE values as the number of possible supports grows and further constraints are added each time an equilibrium is found.
\section{Case Studies and Experimental Results}

To demonstrate the applicability and benefits of our techniques, and to evaluate their performance,
we now present results from a variety of case studies.
Supporting material for these examples (models and properties)
is available from~\cite{files}. These can be run with PRISM-games 3.0~\cite{KNPS20}. 

\begin{table}[t]
\centering
{\scriptsize
\begin{tabular}{|c|r||r|r|r|r|r|} \hline
\multicolumn{1}{|c|}{Case study} & 
\multicolumn{1}{c||}{Param.} & 
\multicolumn{1}{c|}{Players} & \multicolumn{1}{c|}{States} &  \multicolumn{1}{c|}{Transitions} 
& \multicolumn{1}{c|}{Constr.} \\ \multicolumn{1}{|c|}{[parameters]} &
\multicolumn{1}{c||}{values} & & & & \multicolumn{1}{c|}{time (s)}
 \\ \hline \hline	
\multirow{5}{*}{\shortstack[c]{\emph{Robot coordination} \\
$\mbox{[$l$]}$}}
&  4 & 2 &     226 &      6,610 & 0.1 \\
&  8 & 2 &   3,970 &    201,650 & 1.0 \\
& 12 & 2 &  20,450 &  1,221,074 & 3.9 \\
& 16 & 2 &  65,026 &  4,198,450 & 11.8 \\
& 24 & 2 & 330,626 & 23,049,650 & 62.5 \\
\hline\hline
\multirow{5}{*}{\shortstack[c]{\emph{Future markets} \\ \emph{investors} \\
$\mbox{[$\mathit{months}$]}$}}
&  6  & 3 &    34,247 &    144,396 &  1.5 \\
&  12 & 3 &   257,301 &  1,259,620 &  8.3 \\
&  24 & 3 &   829,125 &  4,318,372 & 25.4 \\
&  36 & 3 & 1,400,949 &  7,377,124 & 59.3 \\
&  48 & 3 & 1,972,773 & 10,435,876 & 70.3 \\
\hline \hline		 	
\multirow{4}{*}{\shortstack[c]{\emph{Future markets} \\ \emph{investors} \\
$\mbox{[$\mathit{months}$,$\mathit{pbar}$]}$}}
&   3,0.5  & 2 &   3,460 &    1,2012 &  0.3 \\
&   6,0.5  & 2 &  29,703 &   12,4748 &  2.5 \\
&  12,0.5  & 2 & 221,713 & 1,083,024 &  8.0 \\
&  18,0.5  & 2 & 467,497 & 2,396,784 & 16.2 \\
\hline \hline	
\multirow{4}{*}{\shortstack[c]{\emph{User-centric} \\ \emph{networks} \\
$\mbox{[$\mathit{td},K$]}$}} 
& 1,3 & 7 &  32,214 &   121,659 & 2.1 \\
& 1,4 & 7 & 104,897 &   433,764 & 6.1 \\
& 1,5 & 7 & 294,625 & 1,325,100 & 17.5 \\
& 1,6 & 7 & 714,849 & 3,465,558 & 42.8 \\
\hline\hline
\multirow{5}{*}{\shortstack[c]{\emph{Aloha (deadline)} \\
$\mbox{[$b_{\max}$,$D$]}$}}
& 1,8 & 3 &     3,519 &     5,839 & 0.2 \\
& 2,8 & 3 &    14,230 &    28,895 & 0.5 \\
& 3,8 & 3 &    72,566 &   181,438 & 2.1 \\
& 4,8 & 3 &   413,035 & 1,389,128 & 9.4 \\
& 5,8 & 3 & 2,237,981 & 9,561,201 & 58.4 \\
\hline\hline
\multirow{5}{*}{\shortstack[c]{\emph{Aloha} \\
$\mbox{[$b_{\max}$]}$}}
& 2 & 3 &     5,111 &     10,100 & 0.3 \\
& 3 & 3 &    22,812 &     56,693 & 0.8 \\
& 4 & 3 &   107,799 &    355,734 & 2.5 \\
& 5 & 3 &   556,168 &  2,401,113 & 13.4 \\
& 6 & 3 & 3,334,681 & 17,834,254 & 118.8 \\
\hline\hline
\multirow{4}{*}{\shortstack[c]{\emph{Intrusion detection} \\ \emph{system}    \\
$\mbox{[$\mathit{rounds}$]}$}}
& 25  & 2 &  75 &    483 & 0.07 \\
& 50  & 2 & 150 &    983 & 0.08 \\
& 100 & 2 & 300 &  1,983 & 0.1 \\
& 200 & 2 & 600 &  3,983 & 0.1 \\
\hline\hline
\multirow{4}{*}{\shortstack[c]{\emph{Jamming radio} \\ \emph{systems}    \\
$\mbox{[$\mathit{chans}$,$\mathit{slots}$]}$}}
& 4,6  & 2 &   531 &    45,004 & 0.5 \\
& 4,12 & 2 & 1,623 &   174,796 & 1.4 \\
& 6,6  & 2 & 1,061 &   318,392 & 2.0 \\
& 6,12 & 2 & 3,245 & 1,240,376 & 6.6 \\ 
\hline \hline
\multirow{4}{*}{\shortstack[c]{\emph{Medium access} \\ \emph{control} \\
$\mbox{[$e_{\max}$]}$}}
& 10 & 3 &  10,591 &   135,915 & 1.7 \\
& 15 & 3 &  33,886 &   457,680 & 5.0 \\
& 20 & 3 &  78,181 & 1,083,645 & 8.2 \\
& 25 & 3 & 150,226 & 2,115,060 & 14.0 \\
\hline \hline
\multirow{4}{*}{\shortstack[c]{\emph{Medium access} \\ \emph{control} \\
$\mbox{[$e_{\max}$,$s_{\max}$]}$}}
& 4,2 & 3 &  14,723 &   129,097 & 2.2 \\
& 4,4 & 3 &  18,751 &   147,441 & 4.3 \\
& 6,4 & 3 & 122,948 & 1,233,976 & 11.0 \\
& 6,6 & 3 & 138,916 & 1,315,860 & 12.5 \\
\hline \hline
\multirow{4}{*}{\shortstack[c]{\emph{Power control} \\
$\mbox{[$e_{\max}, pow_{\max}$]}$}}
& 40,8  & 2 &  32,812 &   260,924 & 1.7 \\
& 80,8  & 2 & 193,396 & 1,469,896 & 6.5 \\
& 40,16 & 2 &  34,590 &   291,766 & 1.6 \\
& 80,16 & 2 & 301,250 & 2,627,278 & 10.5 \\
\hline 
\end{tabular}}
\vspace*{-0.2cm}
\caption{Model statistics for the CSG case studies.}\label{tab:model-stats}
\vspace*{-0.4cm}
\end{table}\begin{table}[t]
\centering
{\scriptsize
\begin{tabular}{|c|r||c|r|r|r|} \hline
\multicolumn{1}{|c|}{Case study \& property} & 
\multicolumn{1}{c||}{$\!\!$Param.$\!\!$} & 
\multicolumn{1}{c|}{Actions} & 
\multicolumn{1}{c|}{Val.} & 
\multicolumn{2}{c|}{Verif. time (s)} \\ 
\cline{5-6}
\multicolumn{1}{|c|}{[parameters]} &
\multicolumn{1}{c||}{values} & 
\multicolumn{1}{c|}{max/avg} & 
\multicolumn{1}{c|}{iters} & 
\multicolumn{1}{c|}{Qual.} & \multicolumn{1}{c|}{Quant.} \\ \hline \hline	
\multirow{4}{*}{\shortstack[c]{\emph{Robot coordination} \\
$\coalition{\mathit{rbt}_1}\probop{\max=?}{\neg \mathsf{c} \, \buntilop \mathsf{g}_1}$ \\
$\mbox{[$l$,$k$]}$}}
&  4,4  & $\!\!\!$3,3/2.52,2.52$\!\!\!$ & 4  & 0.02 & 0.04 \\
&  8,8  & $\!\!\!$3,3/2.52,2.52$\!\!\!$ & 8  & 0.27 & 1.1 \\
& 12,12 & $\!\!\!$3,3/2.68,2.68$\!\!\!$ & 12 & 1.61 & 5.7 \\
& 16,16 & $\!\!\!$3,3/2.76,2.76$\!\!\!$ & 16 & 5.88 & 35.4 \\
& 24,24 & $\!\!\!$3,3/2.84,2.84$\!\!\!$ & 24 & 46.5 & 185.2 \\
\hline\hline
\multirow{5}{*}{\shortstack[c]{\emph{Robot coordination} \\
$\coalition{\mathit{rbt}_1}\rewop{}{\min=?}{\futureop \, \mathsf{g}_1}$ \\
$\mbox{[$l$]}$}}
&  4 & $\!\!\!$3,3/2.52,2.52$\!\!\!$ & 10;\;9  & 0.02 & 0.1 \\
&  8 & $\!\!\!$3,3/2.52,2.52$\!\!\!$ & 15;\;14 & 0.35 & 4.1 \\
& 12 & $\!\!\!$3,3/2.68,2.68$\!\!\!$ & 20;\;19 & 1.99 & 23.2 \\
& 16 & $\!\!\!$3,3/2.76,2.76$\!\!\!$ & 24;\;24 & 7.57 & 96.5 \\
& 24 & $\!\!\!$3,3/2.84,2.84$\!\!\!$ & 34;\;33 & 56.9 & 744.7 \\
\hline\hline
\multirow{5}{*}{\shortstack[c]{\emph{Future markets investors} \\ $\coalition{i_1} \rewop{}{\max=?}{\future \mathsf{c}_1}$ \\
$\mbox{[$\mathit{months}$]}$}}
&  6  &  $\!\!\!$8,2/1.54,1.21$\!\!\!$ & 14;\;14 &   1.3 &   6.0 \\
& 12  &  $\!\!\!$8,2/1.68,1.27$\!\!\!$ & 26;\;26 &  14.8 &  91.7 \\
& 24  &  $\!\!\!$8,2/1.73,1.29$\!\!\!$ & 50;\;50 &  94.3 & 616.0 \\
& 36  &  $\!\!\!$8,2/1.74,1.29$\!\!\!$ & 74;\;73 & 240.4 & 1,613 \\
& 48  &  $\!\!\!$8,2/1.74,1.29$\!\!\!$ & 98;\;97 & 395.4 & 2,770 \\
\hline\hline
\multirow{4}{*}{\shortstack[c]{\emph{User-centric networks} \\
$\coalition{\mathit{user}} \rewop{}{\min=?}{\future  \mathsf{f}}$ \\
$\mbox{[$\mathit{td},K$]}$}} 
& 1,3 & $\!\!$16,8/2.11,1.91$\!\!$ & 15;\;1 &  1.4 & 182.3 \\
& 1,4 & $\!\!$16,8/2.31,1.92$\!\!$ & 21;\;1 &  4.4 & 776.2 \\
& 1,5 & $\!\!$16,8/2.46,1.94$\!\!$ & 25;\;1 & 14.4 & 2,456 \\
& 1,6 & $\!\!$16,8/2.60,1.96$\!\!$ & 29;\;1 & 43.4 & 6,762 \\
\hline\hline
\multirow{4}{*}{\shortstack[c]{\emph{Aloha (deadline)} \\
$\!\!\coalition{\mathit{usr}_2,\mathit{usr}_3}\probop{\max=?}{\futureop \, \mathsf{s}_{1,2} {\wedge} t{\leq}D}\!\!$ \\
$\mbox{[$b_{\max}$,$D$}$]}}
& 2,8 & 4,2/1.01,1.00 & 24 & 0.5  & 0.8 \\
& 3,8 & 4,2/1.01,1.00 & 23 & 1.8  & 1.8 \\
& 4,8 & 4,2/1.01,1.00 & 23 & 6.5  & 4.5 \\
& 5,8 & 4,2/1.01,1.00 & 23 & 35.9 & 9.4 \\
\hline\hline
\multirow{5}{*}{\shortstack[c]{\emph{Aloha} \\
$\coalition{\mathit{usr}_2,\mathit{usr}_3}\rewop{}{\min=?}{\futureop \, \mathsf{sent}_{2,3}}$ \\
$\mbox{[$b_{\max}$}$]}}
& 2 & 4,2/1.01,1.00 &   58;\;47 &  0.2  &   2.1 \\
& 3 & 4,2/1.01,1.00 &   71;\;57 &  0.6  &   6.0 \\
& 4 & 4,2/1.01,1.00 &  109;\;86 &  3.7  &  33.7 \\
& 5 & 4,2/1.01,1.00 & 193;\;150 & 26.6  & 317.6 \\
& 6 & 4,2/1.01,1.00 & 362;\;279 & 314.5 & 3,836 \\
\hline\hline		 			%
\multirow{4}{*}{\shortstack[c]{\emph{Intrusion detection system} \\
$\;\coalition{\mathit{policy}} \rewop{}{\min=?}{\futureop \, r{=}\mathit{rounds}}\;$ \\
$\mbox{[$\mathit{rounds}$]}$}}
& 25  & 2,2/1.96,1.96 &   26;\;26 & 0.03 & 0.2 \\
& 50  & 2,2/1.98,1.98 &   51;\;51 & 0.06 & 0.7 \\
& 100 & 2,2/1.99,1.99 & 101;\;101 &  0.2 & 2.5 \\
& 200 & 2,2/2.00,2.00 & 201;\;201 &  0.5 & 7.0 \\
\hline\hline
\multirow{4}{*}{\shortstack[c]{\emph{Jamming radio systems} \\
$\coalition{\mathit{user}} \probop{\max=?}{\future \mathit{sent}{\geq}\mathit{slots}{/}2}$ \\
$\mbox{[$\mathit{slots}$]}$}}
& 4,6  & 3,3/2.17,2.17 &  7 & 0.03 & 0.1 \\
& 4,12 & 3,3/2.49,2.49 & 13 &  0.2 & 0.5 \\
& 6,6  & 4,4/2.76,2.76 &  7 &  0.1 & 0.3 \\
& 6,12 & 4,4/3.24,3.24 & 13 &  0.5 & 2.0 \\ \hline
\end{tabular}}
\vspace*{-0.2cm}
\caption{Statistics for CSG zero-sum verification instances.}\label{tab:zerosum-results}
\vspace*{-0.4cm}
\end{table}\begin{table}[t]
\centering
{\scriptsize
\begin{tabular}{|c|r||c|r|r|r|} \hline
\multicolumn{1}{|c|}{Case study \& property} & 
\multicolumn{1}{c||}{$\!\!\!$Param.$\!\!\!$} & 
\multicolumn{1}{c|}{Actions} & 
\multicolumn{1}{c|}{Val.} & 
\multicolumn{2}{c|}{Verif. time (s)} \\ 
 \cline{5-6}
\multicolumn{1}{|c|}{[parameters]} &
\multicolumn{1}{c||}{values} & 
\multicolumn{1}{c|}{max/avg} & 
\multicolumn{1}{c|}{$\!\!\!\!$iters.$\!\!\!\!$} & 
\multicolumn{1}{c|}{MDP} & \multicolumn{1}{c|}{CSG} \\ \hline \hline
\multirow{5}{*}{\shortstack[c]{\emph{Robot coordination} \\
$\!\!\!\nashop{\mathit{rbt}_1{:}\mathit{rbt}_2}{\max=?}{\probop{}{\neg \mathsf{c} \, \buntilop \mathsf{g}_1}{+}\probop{}{\neg \mathsf{c} \, \buntilop \mathsf{g}_2}}\!\!\!$ \\
$\mbox{[$l$,$k$]}$}}
&  4,4  & $\!\!\!$3,3/2.07,2.07$\!\!\!$ &  4 &  0.02 & 0.1 \\
&  8,8  & $\!\!\!$3,3/2.52,2.52$\!\!\!$ &  8 &  0.45 & 1.0 \\
& 12,12 & $\!\!\!$3,3/2.68,2.68$\!\!\!$ & 12 &  6.25 & 5.8 \\
& 16,16 & $\!\!\!$3,3/2.76,2.76$\!\!\!$ & 16 &  34.7 & 22.2 \\
& 24,24 & $\!\!\!$3,3/2.84,2.84$\!\!\!$ & 24 & 375.2 & 1,365 \\
\hline\hline
\multirow{4}{*}{\shortstack[c]{\emph{Robot coordination} \\
$\nashop{\mathit{rbt}_1{:}\mathit{rbt}_2}{\max=?}{\probop{}{\neg \mathsf{c} \, \buntilop \mathsf{g}_1}{+}\probop{}{\neg \mathsf{c} \, \untilop \, \mathsf{g}_2}}$ \\
$\mbox{[$l$,$k$]}$}}
& 4,8  & $\!\!\!$3,3/2.10,2.04$\!\!\!$ & 21 & 0.16 &  3.3 \\
& 4,16 & $\!\!\!$3,3/2.12,2.05$\!\!\!$ & 21 & 0.28 &  12.0 \\
& 8,8  & $\!\!\!$3,3/2.53,2.51$\!\!\!$ & 34 & 5.51 &  77.2 \\
& 8,16 & $\!\!\!$3,3/2.54,2.52$\!\!\!$ & 35 & 17.2 & 3,513 \\
\hline\hline
\multirow{4}{*}{\shortstack[c]{\emph{Robot coordination} \\
$\nashop{\mathit{rbt}_1{:}\mathit{rbt}_2}{\min=?}{\rewop{}{}{\futureop \, \mathsf{g}_1}{+}\rewop{}{}{\futureop \, \mathsf{g}_2}}$ \\
$\mbox{[$l$]}$}}
& 4  & $\!\!\!$3,3/2.07,2.07$\!\!\!$ &  8 & 0.05 & 0.2 \\
& 8  & $\!\!\!$3,3/2.52,2.52$\!\!\!$ & 15 & 0.44 & 10.3 \\
& 12 & $\!\!\!$3,3/2.68,2.68$\!\!\!$ & 23 & 2.12 & 227.7 \\
& 16 & $\!\!\!$3,3/2.84,2.84$\!\!\!$ & 28 & 12.1 & 3,272 \\
\hline\hline
\multirow{4}{*}{\shortstack[c]{\emph{Future markets investors} \\ $\nashop{i_1{:}i_2}{\max=?}{\rewop{}{}{\futureop \, \mathsf{c}_1}{+}\rewop{}{}{\futureop \, \mathsf{c}_2}}$ \\
$\mbox{[$\mathit{months}$]}$}}
& 3  & $\!\!\!$2,2/1.15,1.15$\!\!\!$ &  6 &  0.1 &   0.2  \\
& 6  & $\!\!\!$2,2/1.22,1.22$\!\!\!$ & 13 &  0.8 &   3.1 \\
& 12 & $\!\!\!$2,2/1.27,1.27$\!\!\!$ & 25 &  9.1 & 284.7 \\
& 18 & $\!\!\!$2,2/1.29,1.29$\!\!\!$ & 37 & 31.6 & 4,326 \\
\hline \hline	
\multirow{5}{*}{\shortstack[c]{\emph{Aloha (deadline)} \\
$\!\!\!\nashop{\mathit{usr}_1{:}\mathit{usr}_2{,}\mathit{usr}_3}{\max=?}{\probop{}{\futureop \, \mathsf{s}_1}{+}\probop{}{\futureop \, \mathsf{s}_{2,3}}}\!\!\!$ \\
$\mbox{[$b_{\max}$,$D$]}$}}
& 1,8 & $\!\!\!$2,4/1.00,1.01$\!\!\!$ & 23 &  0.1 &   0.3 \\
& 2,8 & $\!\!\!$2,4/1.00,1.00$\!\!\!$ & 23 &  0.4 &   1.2 \\
& 3,8 & $\!\!\!$2,4/1.00,1.00$\!\!\!$ & 22 &  2.0 &   3.4 \\
& 4,8 & $\!\!\!$2,4/1.00,1.00$\!\!\!$ & 22 &  7.1 &  18.5 \\
& 5,8 & $\!\!\!$2,4/1.00,1.00$\!\!\!$ & 22 & 39.9 & 103.0 \\
\hline\hline
\multirow{5}{*}{\shortstack[c]{\emph{Aloha} \\
$\nashop{\mathit{usr}_1{:}\mathit{usr}_2{,}\mathit{usr}_3}{\min=?}{\rewop{}{}{\futureop \, \mathsf{s}_1}{+}\rewop{}{}{\futureop \, \mathsf{s}_{2,3}}}$ \\
$\mbox{[$b_{\max}$]}$}}
& 2 & $\!\!\!$2,4/1.00,1.01$\!\!\!$ &  54 &   0.2  &   0.8 \\
& 3 & $\!\!\!$2,4/1.00,1.00$\!\!\!$ &  62 &   0.8  &   3.1 \\
& 4 & $\!\!\!$2,4/1.00,1.00$\!\!\!$ &  88 &   4.5  &  40.1 \\
& 5 & $\!\!\!$2,4/1.00,1.00$\!\!\!$ & 145 &  35.3  & 187.7 \\
& 6 & $\!\!\!$2,4/1.00,1.00$\!\!\!$ & 256 & 453.7  & 2,396 \\
\hline\hline
\multirow{4}{*}{\shortstack[c]{\emph{Medium access control} \\ $\nashop{p_1{:}p_2{,}p_3}{\max=?}{\rewop{}{}{\scumul{\leq k}}{+}\rewop{}{}{\scumul{\leq k}}}$ \\
$\mbox{[$e_{\max}$,$\mathit{k}$]}$}}
& 10,25 & $\!\!\!$2,4/1.91,3.63$\!\!\!$ & 25 & 0.0 &  65.8 \\
& 15,25 & $\!\!\!$2,4/1.94,3.75$\!\!\!$ & 25 & 0.0 & 193.3 \\
& 20,25 & $\!\!\!$2,4/1.95,3.81$\!\!\!$ & 25 & 0.0 & 394.8 \\
& 25,25 & $\!\!\!$2,4/1.96,3.85$\!\!\!$ & 25 & 0.0 & 565.4 \\
\hline \hline
\multirow{4}{*}{\shortstack[c]{\emph{Medium access control} \\ $\nashop{p_1{:}p_2{,}p_3}{\max=?}{\probop{}{\futureop \, \mathsf{m}_1}{+}\probop{}{\futureop \, \mathsf{m}_{2,3}}}$ \\
$\mbox{[$e_{\max}$,$s_{\max}$]}$}}
& 4,2 & $\!\!\!$2,4/1.70,2.88$\!\!\!$ & 10 & 0.61 & 96.2 \\
& 4,4 & $\!\!\!$2,4/1.64,2.70$\!\!\!$ & 12 & 0.61 & 22.8 \\
& 6,4 & $\!\!\!$2,4/1.77,3.12$\!\!\!$ & 17 & 7.40 & 1,639 \\
& 6,6 & $\!\!\!$2,4/1.74,3.02$\!\!\!$ & 18 & 4.09 &  93.6 \\
\hline \hline
\multirow{4}{*}{\shortstack[c]{\emph{Medium access control} \\ $\nashop{p_1{:}p_2{,}p_3}{\max=?}{\probop{}{\bfutureop \, \mathsf{m}_1}{+}\probop{}{\futureop \, \mathsf{m}_{2,3}}}$ \\
$\mbox{[$e_{\max}$,$s_{\max}$,$k$]}$}}
& 4,4,4  & $\!\!\!$2,4/1.67,2.70$\!\!\!$ & 12 & 1.64 &  39.1 \\
& 4,4,8  & $\!\!\!$2,4/1.68,2.70$\!\!\!$ & 12 & 3.30 & 106.3 \\
& 6,4,6  & $\!\!\!$2,4/1.76,3.02$\!\!\!$ & 18 & 23.4 & 341.9 \\
& 6,4,12 & $\!\!\!$2,4/1.74,3.02$\!\!\!$ & 18 & 60.0 & 961.8 \\
\hline \hline
\multirow{4}{*}{\shortstack[c]{\emph{Power control} \\
$\nashop{p_1{:}p_2}{\max=?}{\rewop{}{}{\futureop \, e_1{=}0}{+}\rewop{}{}{\futureop \, e_2{=}0}}$ \\
$\mbox{[$e_{\max}$,$\mathit{pow}_{\max}$]}$}}
& 40,8  & $\!\!\!$2,2/1.91,1.91$\!\!\!$ & 20 &  4.1 & 11.7 \\
& 80,8  & $\!\!\!$2,2/1.88,1.88$\!\!\!$ & 40 & 34.8 & 130.3 \\
& 40,16 & $\!\!\!$2,2/1.95,1.95$\!\!\!$ & 40 &  5.4 &  11.6 \\
& 80,16 & $\!\!\!$2,2/1.98,1.98$\!\!\!$ & 80 & 64.4 & 211.4 \\
\hline \hline
\multirow{4}{*}{\shortstack[c]{\emph{Power control} \\
$\nashop{p_1{:}p_2}{\max=?}{\rewop{}{}{\rewop{}{}{\futureop \, e_1{=}0}}{+}\rewop{}{}{\scumul{\leq k}}}$ \\
$\mbox{[$e_{\max}$,$\mathit{pow}_{\max}$,$\mathit{k}$]}$}}
& 40,4,20 & $\!\!\!$2,2/1.69,1.69$\!\!\!$ & 20 &  13.3 &  27.5 \\
& 80,4,20 & $\!\!\!$2,2/1.69,1.69$\!\!\!$ & 20 &  83.9 & 134.2 \\
& 40,8,20 & $\!\!\!$2,2/1.91,1.91$\!\!\!$ & 20 &  49.1 &  84.6 \\
& 80,8,20 & $\!\!\!$2,2/1.88,1.88$\!\!\!$ & 20 & 498.6 & 846.8 \\
\hline
\end{tabular}}
\vspace*{-0.2cm}
\caption{Statistics for CSG nonzero-sum verification instances.}\label{tab:nonzerosum-results}
\vspace*{-0.4cm}
\end{table}\begin{table}[t]
\centering
{\scriptsize
\begin{tabular}{|c|r||r|r|r|r|} \hline
\multicolumn{1}{|c|}{Case study} & 
\multicolumn{1}{c||}{Param.} & 
\multicolumn{1}{c|}{Players} & \multicolumn{1}{c|}{States} &  \multicolumn{1}{c|}{Transitions}  \\ \multicolumn{1}{|c|}{[parameters]} &
\multicolumn{1}{c||}{values} & & & 
 \\ \hline \hline	
\multirow{4}{*}{\shortstack[c]{\emph{Robot coordination} \\
$\nashop{\mathit{rbt}_1{:}\mathit{rbt}_2}{\max=?}{\probop{}{\neg \mathsf{c} \, \buntilop \mathsf{g}_1}{+}\probop{}{\neg \mathsf{c} \, \untilop \, \mathsf{g}_2}}$ \\
$\mbox{[$l$,$k$]}$}}
& 4,8  & 2 &  1,923 &    56,385 \\
& 4,16 & 2 &  3,491 &   104,545 \\
& 8,8  & 2 & 36,773 & 1,860,691 \\
& 8,16 & 2 & 67,525 & 3,443,443 \\
\hline \hline
\multirow{4}{*}{\shortstack[c]{\emph{Medium access control} \\ $\nashop{p_1{:}p_2{,}p_3}{\max=?}{\probop{}{\bfutureop \, \mathsf{m}_1}{+}\probop{}{\futureop \, \mathsf{m}_{2,3}}}$ \\
$\mbox{[$e_\mathit{max}$,$s_\mathit{max}$,$k$]}$}}
& 4,4,4  & 3 &    89,405 &    718,119 \\
& 4,4,8  & 3 &   158,609 &  1,282,435 \\
& 6,4,6  & 3 &   944,727 &  9,071,885 \\
& 6,4,12 & 3 & 1,745,001 & 16,800,083 \\
\hline \hline
\multirow{4}{*}{\shortstack[c]{\emph{Power control} \\
$\nashop{p_1{:}p_2}{\max=?}{\rewop{}{}{\futureop \, e_1{=}0}{+}\rewop{}{}{\scumul{\leq k}}}$ \\
$\mbox{[$e_\mathit{max}$,$\mathit{pow}_\mathit{max}$,$\mathit{k}$]}$}}
& 40,4,20 & 2 &   182,772 &  1,040,940 \\
& 80,4,20 & 2 &   917,988 &  5,153,700 \\
& 40,8,20 & 2 &   524,473 &  4,179,617 \\
& 80,8,20 & 2 & 3,806,240 & 28,950,948 \\
\hline
\end{tabular}}
\vspace*{-0.2cm}
\caption{Model statistics for CSGs built verifying mixed nonzero-sum properties.}\label{tab:mixed-stats}
\vspace*{-0.0cm}
\end{table}\begin{table}[t]
\centering
{\scriptsize
\begin{tabular}{|c|r||r|r|r|r|} \hline
\multirow{3}{*}{\shortstack[c]{Case study \\ $\mbox{[parameters]}$}} & 
\multirow{3}{*}{\shortstack[c]{$\!\!$Param.$\!\!$ \\ $\!\!$values$\!\!$}} & 
\multicolumn{2}{c|}{Inner Formula} &
\multicolumn{2}{c|}{Outer Formula}
\\  \cline{3-6}
&
& 
\multicolumn{1}{c|}{Val.} & 
\multicolumn{1}{c|}{Verif.} &
\multicolumn{1}{c|}{Val.} & 
\multicolumn{1}{c|}{Verif.}
\\ 
&
& 
\multicolumn{1}{c|}{$\!\!\!\!$iters.$\!\!\!\!$} & 
\multicolumn{1}{c|}{$\!\!\!\!$time (s)$\!\!\!\!$} & 
\multicolumn{1}{c|}{$\!\!\!\!$iters.$\!\!\!\!$} & 
\multicolumn{1}{c|}{$\!\!\!\!$time (s)$\!\!\!\!$}  \\ \hline \hline
\multirow{5}{*}{\shortstack[c]{\emph{Robot coordination} \\
$\coalition{\mathit{rbt}_1}\probop{\max=?}{ \future \phi}$ \\
$\phi{=}\coalition{\mathit{rbt}_2}\rewop{}{\geq 10}{\futureop \, \mathsf{g}_2}$
$\mbox{[$l$]}$}}
&  4 & 12;\;12 &   0.1 &  5 &  0.1 \\
&  8 & 22;\;22 &   5.0 & 10 &  0.8 \\
& 12 & 31;\;30 &  40.5 & 10 &  2.6 \\
& 16 & 40;\;40 & 187.7 & 11 &  6.6 \\
& 24 & 58;\;58 & 1,235 & 12 & 22.6 \\  \hline\hline
\multirow{4}{*}{\shortstack[c]{\emph{Robot coordination} \\
$\!\!\!\!$$\coalition{\mathit{rbt}_1,\mathit{rbt}_2}\probop{\min=?}{ \future \phi}$$\!\!\!\!$ \\
$\phi{=}\nashop{\mathit{rbt}_1{:}\mathit{rbt}_2}{\min \leq 5}{\rewop{}{}{\futureop \, \mathsf{g}_1}{+}\rewop{}{}{\futureop \, \mathsf{g}_2}}$ $\mbox{[$l$]}$}}
&  4 &  8 &   0.1 & 24 &  0.1 \\
&  8 & 17 &  20.8 & 18 &  1.4 \\
& 12 & 32 & 527.3 & 28 &  8.4 \\
& 16 & 39 & 4,664 & 37 & 24.9 \\  \hline\hline
\multirow{5}{*}{\shortstack[c]{\emph{Aloha} \\
$\coalition{\mathit{usr}_1,\mathit{usr}_2,\mathit{usr}_3}\probop{\max=?}{ \future \phi}$ \\
$\phi{=}\nashop{\mathit{usr}_1{:}\mathit{usr}_2,\mathit{usr}_3}{\min \geq 2}{\probop{}{\futureop \, \mathsf{s}_1}{+}\probop{}{\futureop \, \mathsf{s}_{2,3}}}$ \\
$\mbox{[$b_{\max}$,$D$]}$}}
& 1,8  & 23 &   0.6 & 24 &  0.3  \\
& 2,8  & 23 &   1.4 & 23 &  0.8  \\
& 3,8  & 22 &   7.9 & 23 &  2.8  \\
& 4,8  & 22 &  39.8 & 23 &  9.3 \\
& 5,8  & 22 & 172.6 & 23 & 37.1   \\  \hline
\end{tabular}}
\vspace*{-0.2cm}
\caption{Statistics for verification of nested properties for CSGs.}\label{tab:nest-stats}
\vspace*{-0.2cm}
\end{table}

\subsection{Efficiency and Scalability}\label{expr-sect}

We begin by presenting a selection of results illustrating the performance of our implementation. The experiments were run on a 2.10 GHz Intel Xeon with 16GB of JVM memory. In \tabref{tab:model-stats}, we present the model statistics for the examples used: the number of players, states, transitions and model construction times (details of the case studies themselves follow in the next section). Due to improvements in the modelling language and the model building procedure, some of the model statistics differ from those presented in~\cite{KNPS18,KNPS19}. %
The main reason is that the earlier version of the implementation did not allow for variables of different players to be updated following a joint probability distribution, which made it necessary to introduce intermediate states in order to specify some of the behaviour. Also, some model statistics differ from~\cite{KNPS18}
since models were modified to meet \assumassumref{game2-assum}{game3-assum} to enable %
the analysis of nonzero-sum properties. 

\tabtabref{tab:zerosum-results}{tab:nonzerosum-results} present the model checking statistics when analysing zero-sum and nonzero-sum properties, respectively. In both tables, this includes the maximum and average number of actions of each coalition in the matrix/bimatrix games solved at each step of value iteration and the number of iterations performed. In the case of zero-sum properties including reward formulae of the form $\future \phi$, value iteration is performed twice (see \sectref{inf-zero-sect}), and therefore the number of iterations for each stage are presented (and separated by a semi-colon). For zero-sum properties, the timing statistics are divided into the time for qualitative (column `Qual.') and quantitative verification, which includes solving matrix games (column `Quant.'). For nonzero-sum properties we divide the timing statistics into the time for CSG verification, which includes solving bimatrix games (column `CSG'), and the instances of MDP verification (column `MDP'). 
In the case of mixed nonzero-sum properties, i.e., properties including both finite and infinite horizon objectives, we must first build a new game (see \sectref{nonzero-mixed-sect}); the statistics for these CSGs (number of players, states and transitions) are presented in \tabref{tab:mixed-stats}.
Finally, \tabref{tab:nest-stats} presents the timing results for three nested properties. Here we give the time required for verifying the inner and outer formula separately, as well as the number of iterations for value iteration at each stage.

Our results demonstrate significant gains in efficiency with respect to 
those presented for zero-sum properties in~\cite{KNPS18}
and nonzero-sum properties in~\cite{KNPS19}
(for the latter, a direct comparison with the published results
is possible since it uses an identical experimental setup).
The gains are primarily due to faster SMT solving
and reductions in CSG size as a result of modelling improvements, and specifically the removal of intermediate states as discussed above.

The implementation can analyse models with over 3 million states and almost 18 million transitions; all are solved in under 2 hours and most are considerably quicker. The majority of the time is spent solving matrix or bimatrix games, so performance is affected by the number of choices available within each coalition, rather than the number of players, as well as the number of states.  %
For example, larger instances of the Aloha models are verified relatively quickly since the coalitions have only one choice in many states (the average number of choices is 1.00 for both coalitions).  However, for models where players have choices in almost all states, only models with up to hundreds of thousands of states for zero-sum properties and tens of thousands of states for nonzero-sum properties can be verified within 2 hours. 

\subsection{Case Studies}\label{case-sect}

Next, we present more information about our case studies,
to illustrate the applicability and benefits of our techniques.
We use some of these examples to illustrate the benefits of concurrent stochastic games,
in contrast to their turn-based counterpart; here, we build both TSG and CSG models
for the case study and compare the results.

To study the benefits of nonzero-sum properties,
we compare the results with corresponding zero-sum properties.
For example, for a nonzero-sum formula of the form $\nashop{C{:}C'}{\max=?}{\probop{}{\future \phi_1}{+}\probop{}{\future \phi_2}}$, we compute the value and an optimal strategy $\sigma_C^\star$ for coalition $C$ of the formula $\coalition{C}\probop{\max=?}{\future \phi_1}$, and then find the value of an optimal strategy for the coalition $C'$ for $\probop{\min=?}{\future \phi_2}$ and $\probop{\max=?}{\future \phi_2}$ in the MDP induced by CSG when $C$ follows $\sigma_C^\star$. 
The aim is to showcase the advantages of cooperation since, in many real-world applications, agents’ goals are not strictly opposed and adopting a strategy that assumes antagonistic behaviour can have a negative impact from both individual and collective standpoints.

As will be seen, our results demonstrate that, by using nonzero-sum properties,
at least one of the players gains and in almost all cases neither player loses 
(in the one case study where this is not the case, the gains far outweigh the losses).
The individual SWNE/SCNE values for players need not be unique and,
for all case studies (except Aloha and medium access in which the players are not symmetric),
the values can be swapped to give alternative SWNE/SCNE values.

Finally, we note that,
for infinite-horizon nonzero-sum properties, we compute the value of $\varepsilon$
for the synthesised $\varepsilon$-NE and find that $\varepsilon=0$ in all cases.

\newcommand{\xmin}{0}
\newcommand{\xmax}{3}
\newcommand{\ymin}{0}
\newcommand{\ymax}{3}
\newcommand{\cro}{red}
\newcommand{\crt}{cyan}
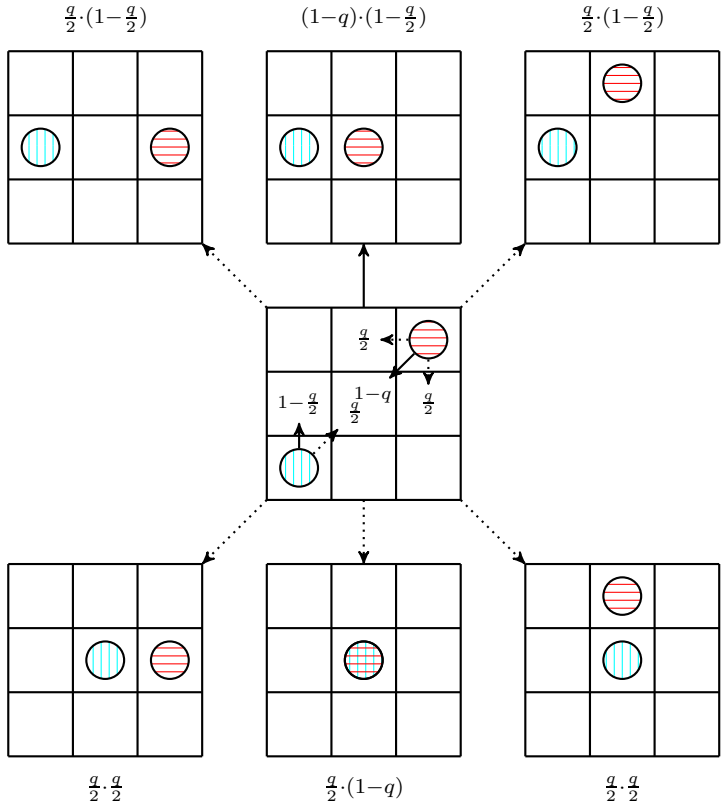
\begin{figure}[t]
\centering
\begin{tikzpicture}[scale=0.85,>=stealth',thick]
    \foreach \r in {0,4,8} {
        \foreach \c in {0,4,8} {
            \ifthenelse{\NOT\(\r=4\AND\c=0\) \AND \NOT\(\r=4\AND\c=8\)} {
                \foreach \x in {\xmin,...,\xmax} {
                    \draw (\c+\x,\r+\ymin) -- (\c+\x,\r+\ymax);
                }
                \foreach \y in {\ymin,...,\ymax} {
                    \draw (\c+\xmin,\r+\y) -- (\c+\xmax,\r+\y);
                }
            }
        }
    }

\node[pattern=vertical lines, circle, pattern color=\crt, draw, radius=0.25, inner sep=5pt] at (0.5,9.5) (r001) {};
\node[pattern=horizontal lines, circle, pattern color=\cro, draw, radius=0.25, inner sep=5pt] at (2.5,9.5) (r002) {};
\node[pattern=vertical lines, circle, pattern color=\crt, draw, radius=0.25, inner sep=5pt] at (4.5,9.5) (r041) {};
\node[pattern=horizontal lines, circle, pattern color=\cro, draw, radius=0.25, inner sep=5pt] at (5.5,9.5) (r042) {};
\node[pattern=vertical lines, circle, pattern color=\crt, draw, radius=0.25, inner sep=5pt] at (8.5,9.5) (r081) {};
\node[pattern=horizontal lines, circle, pattern color=\cro, draw, radius=0.25, inner sep=5pt] at (9.5,10.5) (r082) {};
    
\node[pattern=vertical lines, circle, pattern color=\crt, draw, radius=0.25, inner sep=5pt] at (4.5,4.5) (r441) {};
\node[pattern=horizontal lines, circle, pattern color=\cro, draw, radius=0.25, inner sep=5pt] at (6.5,6.5) (r442) {};
\node[] at (4.5,5.5) (mv1) {\scriptsize $1{-}\frac{q}{2}$};
\node[] at (5.36,5.36) (mv2) {\scriptsize $\frac{q}{2}$};
\node[] at (5.64,5.64) (mv3) {\scriptsize $1{-}q$};
\node[] at (5.5,6.5) (mv4) {\scriptsize $\frac{q}{2}$};
\node[] at (6.5,5.5) (mv5) {\scriptsize $\frac{q}{2}$};
\draw[->] (r441) -- (mv1);
\draw[dotted,->] (r441) -- (mv2);
\draw[->,] (r442) -- (mv3);
\draw[dotted,->] (r442) -- (mv4);
\draw[dotted,->] (r442) -- (mv5);

\node[pattern=vertical lines, circle, pattern color=\crt, draw, radius=0.25, inner sep=5pt] at (1.5,1.5) (r801) {};
\node[pattern=horizontal lines, circle, pattern color=\cro, draw, radius=0.25, inner sep=5pt] at (2.5,1.5) (r802) {};
\node[pattern=vertical lines, circle, pattern color=\crt, draw, radius=0.25, inner sep=5pt] at (5.5,1.5) (r841) {};
\node[pattern=horizontal lines, circle, pattern color=\cro, draw, radius=0.25, inner sep=5pt] at (5.5,1.5) (r842) {};
\node[pattern=vertical lines, circle, pattern color=\crt, draw, radius=0.25, inner sep=5pt] at (9.5,1.5) (r881) {};
\node[pattern=horizontal lines, circle, pattern color=\cro, draw, radius=0.25, inner sep=5pt] at (9.5,2.5) (r882) {};

\draw[dotted, ->] (4,7) -- (3,8);
\draw[solid,->] (5.5,7) -- (5.5,8);
\draw[dotted,->] (7,7) -- (8,8);

\node[] at (1.5,11.5) {$\frac{q}{2}{\cdot}(1{-}\frac{q}{2})$};
\node[] at (5.5,11.5) {$(1{-}q){\cdot}(1{-}\frac{q}{2})$};
\node[] at (9.5,11.5) {$\frac{q}{2}{\cdot}(1{-}\frac{q}{2})$};

\draw[dotted,->] (4,4) -- (3,3);
\draw[dotted,->] (5.5,4) -- (5.5,3);
\draw[dotted,->] (7,4) -- (8,3);

\node[] at (1.5,-0.5) {$\frac{q}{2}{\cdot}\frac{q}{2}$};
\node[] at (5.5,-0.5) {$\frac{q}{2}{\cdot}(1{-}q)$};
\node[] at (9.5,-0.5) {$\frac{q}{2}{\cdot}\frac{q}{2}$};

\end{tikzpicture}
\caption{Robot coordination on a $3{\times}3$ grid: probabilistic choices for one pair of action choices in the initial state.
Solid lines indicate movement in the intended direction, dotted lines where there is deviation due to obstacles.}\label{gridrobot-fig}
\vspace*{-0.4cm}
\end{figure}
\startpara{Robot Coordination} Our first case study concerns a scenario in which two robots move concurrently over a grid of size $l{\times}l$, briefly discussed in \egref{logic-eg}. The robots start in diagonally opposite corners and try to reach the corner from which the other starts. A robot can move either diagonally, horizontally or vertically towards its goal. Obstacles which hinder the robots as they move from location to location
are modelled stochastically according to a parameter $q$ (which we set to $0.25$): when a robot moves, there is a probability that it instead moves in an adjacent direction, e.g., if it tries to moves north west, then with probability $q/2$ it will instead move north and with the same probability west. 

We can model this scenario as a two-player CSG, where the players correspond to the robots ($\mathit{rbt}_1$ and $\mathit{rbt}_2$), the states of the game represent their positions on the grid. In states where a robot has not reached its goal, it can choose between actions that move either diagonally, horizontally or vertically towards its goal (under the restriction that it remains in the grid after this move). For $i \in \{1,2\}$, we let $\mathsf{goal}_i$ be the atomic proposition labelling those states of the game in which $\mathit{rbt}_i$ has reached its goal and $\mathsf{crash}$ the atomic proposition labelling the states in which the robots have crashed, i.e., are in the same grid location.
In \figref{gridrobot-fig}, we present the states that can be reached from the initial state of the game when $l=3$, when the robot in the south west corner tries to move north and the robot in the north east corner tries to move south west. As can be seen there are six different outcomes and the probability of the robots crashing is $\frac{q}{2}{\cdot}(1{-}q)$.

We first investigate the probability of the robots eventually reaching their goals without crashing for different size grids. In the zero-sum case, we find the values for the formula $\coalition{\mathit{rbt}_1}\probop{\max=?}{\neg \mathsf{crash} \until \mathsf{goal}_1}$ converge to $1$ as $l$ increases; for example, the values for this formula in the initial states of game when $l=5$, $10$ and $20$ are approximately $0.9116$, $0.9392$ and $0.9581$, respectively. On the other hand, in the nonzero-sum case, considering SWNE values for the formula $\nashop{\mathit{rbt}_1{:}\mathit{rbt}_2}{\max =?}{\probop{}{\neg \mathsf{crash} \until \mathsf{goal}_1}{+}\probop{}{\neg \mathsf{crash} \until \mathsf{goal}_2}}$ and $l \geq 4$, we find that each robot can reach its goal with probability 1 (since time is not an issue, they can collaborate to avoid crashing).

We next consider the probability of the robots reaching their targets without crashing within a bounded number of steps. \figref{robot-deadline-fig} presents
both the value for the (zero-sum) formula $\coalition{\mathit{rbt}_1}\probop{\max=?}{\neg \mathsf{crash} \buntil \mathsf{goal}_1}$ and SWNE values for the formula $\nashop{\mathit{rbt}_1{:}\mathit{rbt}_2}{\max \geq 2}{\probop{}{\neg \mathsf{crash} \buntil \mathsf{goal}_1}{+}\probop{}{\neg \mathsf{crash} \buntil \mathsf{goal}_2}}$, for a range of step bounds and grid sizes.
When there is only one route to each goal within the bound (along the diagonal), i.e., when $k=l{-}1$, in the SWNE profile both robots take this route. In odd grids, there is a high chance of crashing, but also a chance one will deviate and the other reaches its goal. Initially, as the bound $k$ increases, for odd grids the SWNE values for the robots are not equal (see \figref{robot-deadline-fig} right). Here, both robots following the diagonal does not yield a NE profile. First, the chance of crashing is high, and therefore the probability of the robots satisfying their objectives is low. Therefore it is advantageous for a robot to switch to a longer route as this will increase the probability of satisfying its objective, even taking into account that there is a greater chance it will run out of steps and changing its route will increase the probability of the other robot satisfying its objective by a greater amount (as the other robot will still be following the diagonal). Dually, both robots taking a longer route is not an NE profile, since if one robot switches to the diagonal route, then the probability of satisfying its objective will increase. It follows that, in a SWNE profile, one robot has to follow the diagonal and the other take a longer route.
As expected, if we compare the results, we see that the robots can improve their chances of reaching their goals by collaborating.

The next properties we consider concern the minimum expected number of steps for the robots to reach their goal. In \figref{robot-expected-fig} we have plotted the values corresponding to the formula $\coalition{\mathit{rbt}_2}\rewop{r_\mathit{steps}}{\min=?}{\future \mathsf{goal}_2}$ and SCNE values for the individual players for $\nashop{\mathit{rbt}_1{:}\mathit{rbt}_2}{\min=?}{\rewop{r_\mathit{steps}}{}{\future \mathsf{goal}_1}{+}\rewop{r_\mathit{steps}}{}{\future \mathsf{goal}_2}}$ as the grid size $l$ varies. The results again demonstrate that the players can gain by collaborating.

\begin{figure}[t]
\centering
\begin{subfigure}{.49\textwidth}
\vspace*{0.2cm}
\centering
\tiny{
\begin{tikzpicture}
\begin{axis}[
    title style={yshift=-2ex},
    title={$\;$},
    ylabel={Probability},
    xlabel={$k$},
    xmin=0, xmax=10,
    xtick={0,1,2,3,4,5,6,7,8,9,10},
    ymin=0, ymax=1,
    ytick={0,0.25,0.5,0.75,1},
    legend pos=south east,
    legend style={fill=none, draw=none, 
                    at={(1,0)}, anchor=south east, 
                    nodes={scale=0.9, transform shape}},
    ymajorgrids=true,
    grid style=dashed,
    height=4.25cm,
    width=0.95\textwidth,
    legend entries={
                $l{=}3$,
                $l{=}4$,
                $l{=}5$,
                $l{=}6$,      
                $l{=}7$                
                }]
\addlegendimage{mark=square*,red,mark size=1.5pt}
\addlegendimage{mark=o*,blue,mark size=1.5pt}
\addlegendimage{mark=triangle*,magenta,mark size=1.5pt}
\addlegendimage{mark=+,cyan,mark size=1.5pt}
\addlegendimage{mark=square,orange,mark size=1.5pt}
]

\addplot[mark=square*,red,mark size=1.5pt] table [x=k, y=minmax3, col sep=comma]{rbts_deadline_minmax_players.csv};
\addplot[mark=o,blue,mark size=1.5pt] table [x=k, y=minmax4, col sep=comma]{rbts_deadline_minmax_players.csv};
\addplot[mark=triangle*,magenta,mark size=1.5pt] table [x=k, y=minmax5, col sep=comma]{rbts_deadline_minmax_players.csv};
\addplot[mark=+,cyan,mark size=1.5pt] table [x=k, y=minmax6, col sep=comma]{rbts_deadline_minmax_players.csv};
\addplot[mark=o,orange,mark size=1.5pt] table [x=k, y=minmax7, col sep=comma]{rbts_deadline_minmax_players.csv};
\addplot[dashed,mark=square,orange,mark size=1.5pt] table [x=k, y=minmax7, col sep=comma]{rbts_deadline_minmax_players.csv};
\end{axis}
\end{tikzpicture}
}
\vspace*{-0.2cm}
\caption{Zero-sum objectives}
\end{subfigure}\begin{subfigure}{.49\textwidth}
\centering
\tiny{
\begin{tikzpicture}
\begin{axis}[
    title style={yshift=-2ex},
    title={$\mathit{robot}_{\scale{.75}{1}}$ (solid lines) and $\mathit{robot}_{\scale{.75}{2}}$ (dashed lines)},
    ylabel={Probability},
    xlabel={$k$},
    xmin=0, xmax=10,
    xtick={0,1,2,3,4,5,6,7,8,9,10},
    ymin=0, ymax=1,
    ytick={0,0.25,0.5,0.75,1},
    legend pos=south east,
    legend style={fill=none, draw=none, 
                    at={(1,0)}, anchor=south east, 
                    nodes={scale=0.9, transform shape}},
    ymajorgrids=true,
    grid style=dashed,
    height=4.25cm,
    width=0.95\textwidth,
    legend entries={
                $l{=}3$,
                $l{=}4$,
                $l{=}5$,
                $l{=}6$,      
                $l{=}7$                
                }]
\addlegendimage{mark=square*,red,mark size=1.5pt}
\addlegendimage{mark=o*,blue,mark size=1.5pt}
\addlegendimage{mark=triangle*,magenta,mark size=1.5pt}
\addlegendimage{mark=+,cyan,mark size=1.5pt}
\addlegendimage{mark=square,orange,mark size=1.5pt}
]

\addplot[mark=square*,red,mark size=1.5pt] table [x=k, y=p1_3, col sep=comma]{rbts_deadline_nash_players.csv};
\addplot[dashed,mark=square*,red,mark size=1.5pt] table [x=k, y=p2_3, col sep=comma]{rbts_deadline_nash_players.csv};
\addplot[mark=o,blue,mark size=1.5pt] table [x=k, y=p1_4, col sep=comma]{rbts_deadline_nash_players.csv};
\addplot[dashed,mark=o*,blue,mark size=1.5pt] table [x=k, y=p2_4, col sep=comma]{rbts_deadline_nash_players.csv};
\addplot[mark=triangle*,magenta,mark size=1.5pt] table [x=k, y=p1_5, col sep=comma]{rbts_deadline_nash_players.csv};
\addplot[dashed,mark=triangle*,magenta,mark size=1.5pt] table [x=k, y=p2_5, col sep=comma]{rbts_deadline_nash_players.csv};
\addplot[mark=+,cyan,mark size=1.5pt] table [x=k, y=p1_6, col sep=comma]{rbts_deadline_nash_players.csv};
\addplot[dashed,mark=square,cyan,mark size=1.5pt] table [x=k, y=p2_6, col sep=comma]{rbts_deadline_nash_players.csv};
\addplot[mark=o,orange,mark size=1.5pt] table [x=k, y=p1_7, col sep=comma]{rbts_deadline_nash_players.csv};
\addplot[dashed,mark=square,orange,mark size=1.5pt] table [x=k, y=p2_7, col sep=comma]{rbts_deadline_nash_players.csv};

\end{axis}
\end{tikzpicture}
}
\vspace*{-0.2cm}
\caption{Nonzero-sum objectives}
\end{subfigure}\vspace*{-0.2cm}
\caption{Robot coordination: probability of reaching the goal without crashing.}\label{robot-deadline-fig}
\vspace*{-0.2cm}
\end{figure}
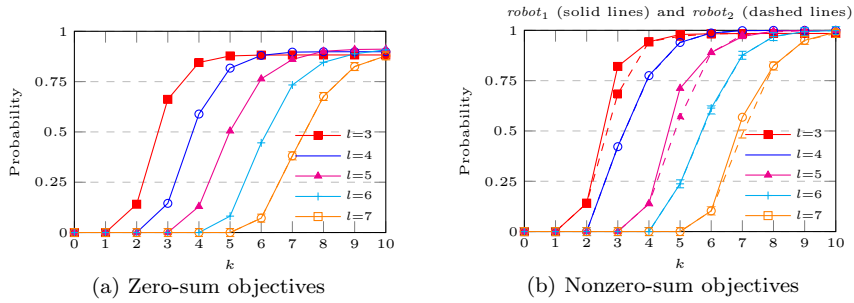
\begin{figure}[t]
\centering
\begin{subfigure}{.49\textwidth}
\centering
\tiny{
\begin{tikzpicture}
\begin{axis}[
    ylabel={Expected Cost},
    xlabel={$l$},
    xmin=3, xmax=16,
    xtick={4,6,8,10,12,14,16},
    ymin=2, ymax=19.5,
    ytick={2,4,6,8,10,12,14,16,18},
    legend pos=north west,
    legend style={fill=none},
    ymajorgrids=true,
    grid style=dashed,
    height=4.5cm,
    width=0.95\textwidth,
]
\addplot[mark=square*,red,mark size=1.5pt] table [x=l, y=minmax, col sep=comma]{rbts_expected_players.csv};

\end{axis}
\end{tikzpicture}
}
\vspace*{-0.2cm}
\caption{Zero-sum objectives}
\end{subfigure}\begin{subfigure}{.49\textwidth}
\centering
\tiny{
\begin{tikzpicture}
\begin{axis}[
    ylabel={Expected Cost},
    xlabel={$l$},
    xmin=3, xmax=16,
    xtick={4,6,8,10,12,14,16},
    ymin=2, ymax=19.5,
    ytick={2,4,6,8,10,12,14,16,18},
    legend pos=north west,
    legend style={fill=none, draw=none, 
                    at={(1,0)}, anchor=south east, 
                    nodes={scale=0.9, transform shape}},
    ymajorgrids=true,
    grid style=dashed,
    height=4.5cm,
    width=0.95\textwidth,
    legend entries={
                $\mathit{robot}_{\scale{.75}{1}}$,
                $\mathit{robot}_{\scale{.75}{2}}$
                },
legend pos=north west]
\addlegendimage{mark=square*,red,mark size=1.5pt}
\addlegendimage{mark=*,blue,mark size=1.5pt}
]

\addplot[mark=square*,red,mark size=1.5pt] table [x=l, y=player1, col sep=comma]{rbts_expected_players.csv};
\addplot[mark=*,blue,mark size=1.5pt] table [x=l, y=player2, col sep=comma]{rbts_expected_players.csv};

\end{axis}
\end{tikzpicture}
}
\vspace*{-0.2cm}
\caption{Nonzero-sum objectives}
\end{subfigure}\vspace*{-0.2cm}
\caption{Robot coordination: expected steps to reach the goal.}\label{robot-expected-fig}
\vspace*{-0.4cm}
\end{figure}
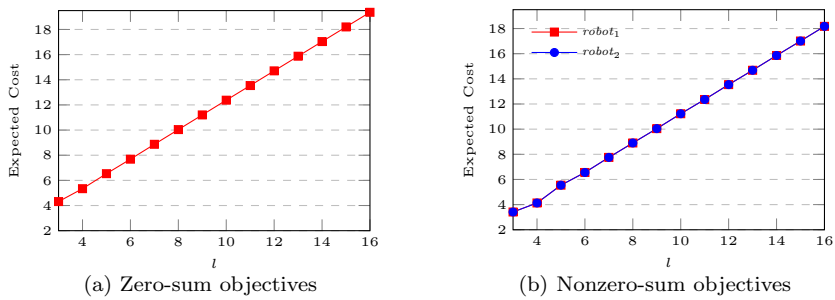

\startpara{Futures market investors} This case study is a model of a futures market investor~\cite{MM07},
which represents the interactions between investors and a stock market.
For the TSG model of~\cite{MM07}, in successive months, a single investor chooses whether to invest, next the market decides whether to bar the investor, with the restriction that the investor cannot be barred two months in a row or in the first month, and then the values of shares and a cap on values are updated probabilistically.

We have built and analysed several CSGs variants of the model,
analysing optimal strategies for investors under adversarial conditions.
First, we made a single investor and market take their decisions concurrently,
and verified that this yielded no additional gain for the investor (see~\cite{files}).
This is because the market and investor have the same information,
and so the market knows when it is optimal for the investor to invest without needing to see its decision. We next modelled two competing investors who simultaneously decide whether to invest (and, as above, the market simultaneously decides which investors to bar).
If the two investors cash in their shares in the same month, then their profits are reduced.
We also consider several distinct profit models:
`normal market', `later cash-ins', `later cash-ins with fluctuation' and `early cash-ins'.
The first is from~\cite{MM07} and the remaining reward models either postponing cashing in shares or the early chasing in of shares
(see~\cite{files} for details).
The CSG has 3 players: one for each investor and one representing the market who decides on the barring of investors. We study both the maximum profit of one investor and the maximum combined profit of both investors. 
For comparison, we also build a TSG model in which the investors first take turns to decide whether to invest (the ordering decided by the market) and then the market decides on whether to bar any of the investors.  

\figref{fig:2inv:normal} shows the maximum expected value over a fixed number of months under the `normal market' for both the profit of first investor and the combined profit of the two investors. For the former, we show results for the formulae $\coalition{i_1} \rewop{\mathit{profit}_1}{\max=?}{\future \mathsf{cashed\_in}_1}$, corresponding to the first investor acting alone, and $\coalition{i_1,i_2} \rewop{\mathit{profit}_{1,2}}{\max=?}{\future \mathsf{cashed\_in}_{1,2}}$ when in a coalition with the second investor. We plot the corresponding results from the TSG model for comparison.
\figref{fig:2inv:laterboth} shows the maximum expected combined profit for the other two profit models.

When investors cooperate to maximise the profit of the first, results for the CSG and TSG models coincide. This follows from the discussion above since all the second investor can do is make sure it does not invest at the same time as the first. For the remaining cases and given sufficient months, there is always a strategy in the concurrent setting that outperforms all turn-based strategies.
The increase in profit for a single investor in the CSG model is due to the fact that, as the investors decisions are concurrent, the second cannot ensure it invests at the same time as the first, and hence decreases the profit of the first. In the case of combined profit, the difference arises because, although the market knows when it is optimal for one investor to invest, in the CSG model the market does not know which one will, and therefore may choose the wrong investor to bar. %

\begin{figure}[t]
\centering
\begin{subfigure}{.49\textwidth}
\begin{center}
\tiny{
\begin{tikzpicture}
\begin{axis}[
    ylabel={Max profit for first investor},
    xlabel={Number of months},
    xmin=1, xmax=9,
    xtick={1, 2, 3, 4, 5, 6, 7, 8, 9},
    ymin=2.5, ymax=5.5,
    ytick={2.5,3,3.5,4,4.5,5,5.5},
    xtick pos=left,
    ytick pos=left,
    ymajorgrids=true,
    grid style=dashed,
    height=4.5cm,
    width=0.95\textwidth,
    legend entries={
                \textit{CSG $\coalition{\mathit{i1}}$},
                \textit{TSG $\coalition{\mathit{i1}}$},
                \textit{CSG $\coalition{\mathit{i1},\mathit{i2}}$},
                \textit{TSG $\coalition{\mathit{i1},\mathit{i2}}$}                },
    legend style={fill=none, draw=none, 
                    at={(1,0)}, anchor=south east, 
                    nodes={scale=0.9, transform shape}}
]
\addlegendimage{mark=square*,red,mark size=1.5pt}
\addlegendimage{mark=*,blue,mark size=1.5pt}
\addlegendimage{mark=triangle*,magenta,mark size=1.5pt}
\addlegendimage{mark=+,cyan,mark size=1.5pt}
\addlegendimage{thick}
]
legend entries={simulation},
\addlegendimage{only marks, mark=triangle*}

\addplot[mark=square*,red,mark size=1.5pt] table [x=n, y=csg1i1, col sep=comma]{2inv1normal.csv};
\addplot[mark=*,blue,mark size=1.5pt] table [x=n, y=stpg1i1, col sep=comma]{2inv1normal.csv};
\addplot[mark=triangle*,magenta,mark size=1.5pt] table [x=n, y=csg1i1i2, col sep=comma]{2inv1normal.csv};
\addplot[mark=+,cyan,mark size=1.5pt] table [x=n, y=stpg1i1i2, col sep=comma]{2inv1normal.csv};

\end{axis}
\end{tikzpicture}
}
\end{center}
\end{subfigure}
\begin{subfigure}{.49\textwidth}
\begin{center}
\tiny{
\begin{tikzpicture}
\begin{axis}[
    ylabel={Max combined profit},
    xlabel={Number of months},
    xmin=1, xmax=9,
    xtick={1, 2, 3, 4, 5, 6, 7, 8, 9},
    ymin=7, ymax=11,
    ytick={7.5,8,8.5,9,9.5,10,10.5,11},
    xtick pos=left,
    ytick pos=left,
    ymajorgrids=true,
    grid style=dashed,
    height=4.5cm,
    width=0.95\textwidth,
    legend entries={
                \textit{CSG $\coalition{\mathit{i1},\mathit{i2}}$},
                \textit{TSG $\coalition{\mathit{i1},\mathit{i2}}$}               },
    legend style={fill=none, draw=none, 
                    at={(1,0)}, anchor=south east, 
                    nodes={scale=0.9, transform shape}}
]
\addlegendimage{mark=square*,red,mark size=1.5pt}
\addlegendimage{mark=*,blue,mark size=1.5pt}
\addlegendimage{thick}
]
legend entries={simulation},
\addlegendimage{only marks, mark=triangle*}

\addplot[mark=square*,red,mark size=1.5pt] table [x=n, y=csg1i1i2, col sep=comma]{2inv2normal.csv};
\addplot[mark=*,blue,mark size=1.5pt] table [x=n, y=stpg1i1i2, col sep=comma]{2inv2normal.csv};

\end{axis}
\end{tikzpicture}
}
\end{center}
\end{subfigure}
\vspace*{-0.2cm}
\caption{Futures market investors: normal market.}
\label{fig:2inv:normal}
\vspace*{-0.4cm}
\end{figure}
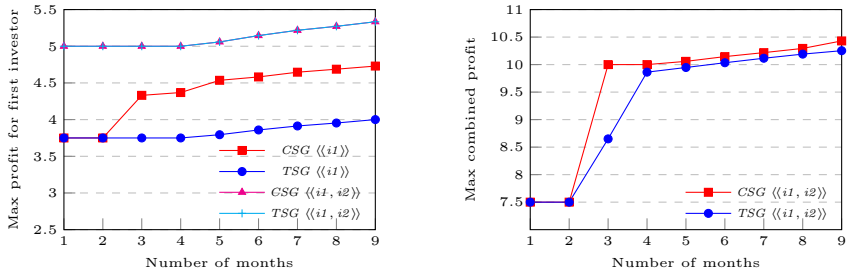\begin{figure}[t]
\centering

\begin{subfigure}{.49\textwidth}
\begin{center}
\tiny{
\begin{tikzpicture}
\begin{axis}[
    ylabel={Max combined profit},
    xlabel={Number of months},
    xmin=1, xmax=9,
    xtick={1, 2, 3, 4, 5, 6, 7, 8, 9},
    ymin=7, ymax=16,
    ytick={7,8,9,10,11,12,13,14,15,16},
    xtick pos=left,
    ytick pos=left,
    ymajorgrids=true,
    grid style=dashed,
    height=4.5cm,
    width=0.95\textwidth,
    legend entries={
                \textit{CSG $\coalition{\mathit{i1},\mathit{i2}}$},
                \textit{TSG $\coalition{\mathit{i1},\mathit{i2}}$}                },
    legend style={fill=none, draw=none, 
                    at={(1,0)}, anchor=south east, 
                    nodes={scale=0.9, transform shape}}
]
\addlegendimage{mark=square*,red,mark size=1.5pt}
\addlegendimage{mark=*,blue,mark size=1.5pt}
\addlegendimage{thick}
]
legend entries={simulation},
\addlegendimage{only marks, mark=triangle*}

\addplot[mark=square*,red,mark size=1.5pt] table [x=n, y=csg2i1i2, col sep=comma]{2inv2later.csv};
\addplot[mark=*,blue,mark size=1.5pt] table [x=n, y=stpg2i1i2, col sep=comma]{2inv2later.csv};

\end{axis}
\end{tikzpicture}
}
\end{center}
\end{subfigure}
\begin{subfigure}{.49\textwidth}
\begin{center}
\tiny{
\begin{tikzpicture}
\begin{axis}[
    ylabel={Max combined profit},
    xlabel={Number of months},
    xmin=1, xmax=9,
    xtick={1, 2, 3, 4, 5, 6, 7, 8, 9},
    ymin=9, ymax=27.5,
    ytick={9,10,12.5,15,17.5,20,22.5,25,27.5},
    xtick pos=left,
    ytick pos=left,
    ymajorgrids=true,
    grid style=dashed,
    height=4.5cm,
    width=0.95\textwidth,
    legend entries={
                \textit{CSG $\coalition{\mathit{i1},\mathit{i2}}$},
                \textit{TSG $\coalition{\mathit{i1},\mathit{i2}}$}                },
    legend style={fill=none, draw=none, at={(1,0)},
                    anchor=south east, 
                    nodes={scale=0.9, transform shape}}
]
\addlegendimage{mark=square*,red,mark size=1.5pt}
\addlegendimage{mark=*,blue,mark size=1.5pt}
\addlegendimage{thick}
]
legend entries={simulation},
\addlegendimage{only marks, mark=triangle*}

\addplot[mark=square*,red,mark size=1.5pt] table [x=n, y=csg1i1i2, col sep=comma]{2inv2laterfluct.csv};
\addplot[mark=*,blue,mark size=1.5pt] table [x=n, y=stpg1i1i2, col sep=comma]{2inv2laterfluct.csv};

\end{axis}
\end{tikzpicture}
}
\end{center}
\end{subfigure}
\vspace*{-0.2cm}
\caption{Futures market: later cash-ins without (left) and with (right) fluctuations.}
\label{fig:2inv:laterboth}
\vspace*{-0.4cm}
\end{figure}
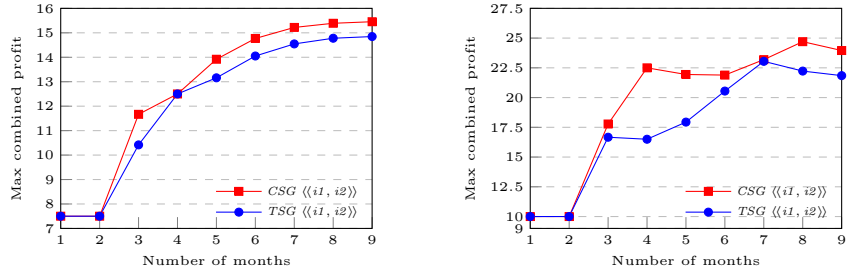

We performed strategy synthesis to study the optimal actions of investors.
By way of example, consider $\coalition{\mathit{i_1}}\rewop{\mathit{profit}_1}{\max=?}{\future \mathsf{cashed\_in}_1}$ over three months and for a normal market (see \figref{fig:2inv:normal} left). The optimal TSG strategy for the first investor is to invest in the first month (which the market cannot bar) ensuring an expected profit of $3.75$. The optimal (randomised) CSG strategy is to invest:
\begin{itemize}
\item
in the first month with probability ${\sim}0.4949$;
\item
in the second month with probability 1, if the second investor has cashed in;
\item
in the second month with probability ${\sim}0.9649$, if the second investor did not cash in at the end of the first month and the shares went up;
\item
in the second month with probability ${\sim}0.9540$, if the second investor did not cash in at the end of the first month and the shares went down;
\item
in the third month with probability 1 (this is the last month to invest).
\end{itemize}
Following this strategy, the first investor ensures an expected profit of ${\sim}4.33$.

We now make the market probabilistic, where the probability the market bars an individual investor equals $\mathit{pbar}$, and consider nonzero-sum properties of the form $\nashop{i_1{:}i_2}{\max=?}{\rewop{\mathit{profit}_1}{}{\future \mathsf{cashed\_in}_1}{+}\rewop{\mathit{profit}_2}{}{\future \mathsf{cashed\_in}_2}}$, in which each investor tries to maximise their individual profit, for different reward structures. In \figfigref{fig:2inv:nash01}{fig:2inv:nash05} we have plotted the results for the investors where the profit models of the investors follow a normal profile and where the profit models of the investors differ (`later cash-ins' for the first investor and `early cash-ins' for second), when $\mathit{pbar}$ equals $0.1$ and $0.5$ respectively. The results demonstrate that, given more time and a more predictable market, i.e., when $\mathit{pbar}$ is lower, the players can collaborate to increase their profits.

Performing strategy synthesis, we find that the strategies in the mixed profiles model are for the investor with an `early cash-ins' profit model to invest as soon as possible, i.e., it tries to invest in the first month and if this fails because it is barred, it will be able to invest in the second. On the other hand, for the investor with the `later cash-ins' profile, the investor will delay investing until the chances of the shares failing starts to increase or they reach the month before last and then invest (if the investor is barred in this month, they will be able to invest in the final month).

\begin{figure}[t]
\centering

\begin{subfigure}{.49\textwidth}
\begin{center}
\tiny{
\begin{tikzpicture}
\begin{axis}[
    ylabel={Profit},
    xlabel={Number of months},
    xmin=1, xmax=9,
    xtick={1, 2, 3, 4, 5, 6, 7, 8, 9},
    ymin=3.4, ymax=5.6,
    ytick={3.4,4.0,4.4,4.8,5.2,5.6},
    xtick pos=left,
    ytick pos=left,
    ymajorgrids=true,
    grid style=dashed,
    height=4.5cm,
    width=0.95\textwidth,
    legend entries={
                {$i_{\scale{.75}{1}}$ (normal market)},
                {$i_{\scale{.75}{2}}$ (normal market)}
                },
    legend style={fill = none, draw=none, 
            at={(1,0)}, anchor=south east, 
            nodes={scale=1.0, transform shape}},
]
\addlegendimage{mark=square*,red,mark size=1.5pt}
\addlegendimage{mark=*,blue,mark size=1.5pt}
\addlegendimage{thick}
]

\addplot[mark=square*,red,mark size=1.5pt] table [x=months, y=profit1_.1, col sep=comma]{invstrs_prft1_prtf2_up_v2.csv};
\addplot[mark=*,blue,mark size=1.5pt] table [x=months, y=profit2_.1, col sep=comma]{invstrs_prft1_prtf2_up_v2.csv};

\end{axis}
\end{tikzpicture}
}
\end{center}
\end{subfigure}
\begin{subfigure}{.49\textwidth}
\begin{center}
\tiny{
\begin{tikzpicture}
\begin{axis}[
    ylabel={Profit},
    xlabel={Number of months},
    xmin=1, xmax=9,
    xtick={1, 2, 3, 4, 5, 6, 7, 8, 9},
    ymin=3.4, ymax=9.5,
    ytick={3.4,4.5,5.5,6.5,7.5,8.5,9.5},
    xtick pos=left,
    ytick pos=left,
    ymajorgrids=true,
    grid style=dashed,
    height=4.5cm,
    width=0.95\textwidth,
    legend entries={
                {$i_{\scale{.75}{1}}$ (later cashin)},
                {$i_{\scale{.75}{2}}$ (earlier cashin)}
                },
    legend style={fill=none, draw=none, 
            at={(1,0)}, anchor=south east, 
            nodes={scale=1.0, transform shape}},
]
\addlegendimage{mark=square*,red,mark size=1.5pt}
\addlegendimage{mark=*,blue,mark size=1.5pt}
\addlegendimage{thick}
]

\addplot[mark=square*,red,mark size=1.5pt] table [x=months, y=profit1lc_.1, col sep=comma]{invstrs_prft1lc_prft2ec_up_v2.csv};
\addplot[mark=*,blue,mark size=1.5pt] table [x=months, y=profit2ec_.1, col sep=comma]{invstrs_prft1lc_prft2ec_up_v2.csv};

\end{axis}
\end{tikzpicture}
}
\end{center}
\end{subfigure}
\vspace*{-0.2cm}
\caption{Futures market: normal profiles (left) and mixed profiles (right) ($\mathit{pbar}{=}0.1$).}\label{fig:2inv:nash01}
\vspace*{-0.4cm}
\end{figure}
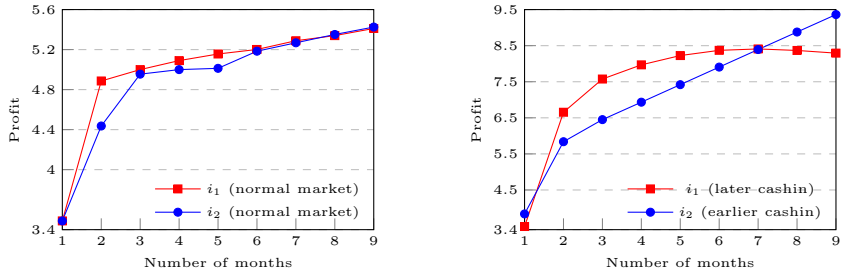

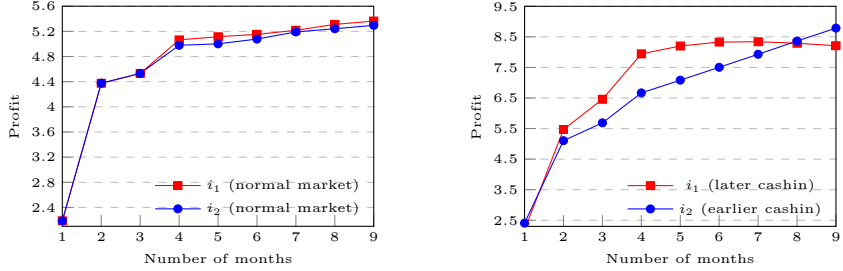
\begin{figure}[t]
\centering

\begin{subfigure}{.49\textwidth}
\begin{center}
\tiny{
\begin{tikzpicture}
\begin{axis}[
    ylabel={Profit},
    xlabel={Number of months},
    xmin=1, xmax=9,
    xtick={1, 2, 3, 4, 5, 6, 7, 8, 9},
    ymin=2.1, ymax=5.6,
    ytick={2.4,2.8,3.2,3.6,4.0,4.4,4.8,5.2,5.6},
    xtick pos=left,
    ytick pos=left,
    ymajorgrids=true,
    grid style=dashed,
    height=4.5cm,
    width=0.95\textwidth,
    legend entries={
                {$i_{\scale{.75}{1}}$ (normal market)},
                {$i_{\scale{.75}{2}}$ (normal market)}
                },
    legend style={fill=none, draw=none, 
            at={(1,0)}, anchor=south east, 
            nodes={scale=1.0, transform shape}}
            ]
\addlegendimage{mark=square*,red,mark size=1.5pt}
\addlegendimage{mark=*,blue,mark size=1.5pt}
\addlegendimage{thick}
]

\addplot[mark=square*,red,mark size=1.5pt] table [x=months, y=profit1_.5, col sep=comma]{invstrs_prft1_prtf2_up_v2.csv};
\addplot[mark=*,blue,mark size=1.5pt] table [x=months, y=profit2_.5, col sep=comma]{invstrs_prft1_prtf2_up_v2.csv};

\end{axis}
\end{tikzpicture}
}
\end{center}
\end{subfigure}
\begin{subfigure}{.49\textwidth}
\begin{center}
\tiny{
\begin{tikzpicture}
\begin{axis}[
    ylabel={Profit},
    xlabel={Number of months},
    xmin=1, xmax=9,
    xtick={1, 2, 3, 4, 5, 6, 7, 8, 9},
    ymin=2.3, ymax=9.5,
    ytick={2.5,3.5,4.5,5.5,6.5,7.5,8.5,9.5},
    xtick pos=left,
    ytick pos=left,
    ymajorgrids=true,
    grid style=dashed,
    height=4.5cm,
    width=0.95\textwidth,
    legend entries={
                {$i_{\scale{.75}{1}}$ (later cashin)},
                {$i_{\scale{.75}{2}}$ (earlier cashin)}
                },
    legend style={fill=none, draw=none, 
            at={(1,0)}, anchor=south east, 
            nodes={scale=1.0, transform shape}},
]
\addlegendimage{mark=square*,red,mark size=1.5pt}
\addlegendimage{mark=*,blue,mark size=1.5pt}
\addlegendimage{thick}
]

\addplot[mark=square*,red,mark size=1.5pt] table [x=months, y=profit1lc_.5, col sep=comma]{invstrs_prft1lc_prft2ec_up_v2.csv};
\addplot[mark=*,blue,mark size=1.5pt] table [x=months, y=profit2ec_.5, col sep=comma]{invstrs_prft1lc_prft2ec_up_v2.csv};

\end{axis}
\end{tikzpicture}
}
\end{center}
\end{subfigure}
\vspace*{-0.2cm}
\caption{Futures market: normal profiles (left) and mixed profiles (right) ($\mathit{pbar}{=}0.5$).}
\label{fig:2inv:nash05}
\vspace*{-0.4cm}
\end{figure}
\startpara{Trust models for user-centric networks} Trust models for user-centric networks were analysed previously using TSGs in~\cite{KPS13}. The analysis considered the impact of different parameters on the effectiveness of cooperation mechanisms between service providers. The providers share information on the measure of \emph{trust} for users in a \emph{reputation}-based setting. Each measure of trust is based on the service's previous interactions with the user (which previous services they paid for), and providers use this measure to block or allow the user to obtain services. 

In the original TSG model, a single user can either make a request to one of three service providers or buy the service directly by paying maximum price. If the user makes a request to a service provider, then the provider decides to accept or deny the request based on the user's trust measure. If the request was accepted, the provider would next decide on the price again based on the trust measure, and the user would then decide whether to pay for the service and finally the provider would update its trust measure based on whether there was a payment. This sequence of steps would have to take place before any other interactions occurred between the user and other providers. 
Here we consider CSG models allowing the user to make requests and pay different service providers simultaneously and for the different providers to execute requests concurrently. There are 7 players: one for the user's interaction with each service provider, one for the user buying services directly and one for each of the 3 service providers. Three trust models were considered. In the first, the trust level was decremented by 1 ($\mathit{td} = 1$) when the user does not pay, decremented by 2 in the second ($\mathit{td} = 2$) and reset to 0 in the third ($\mathit{td} = \mathit{inf}$). 

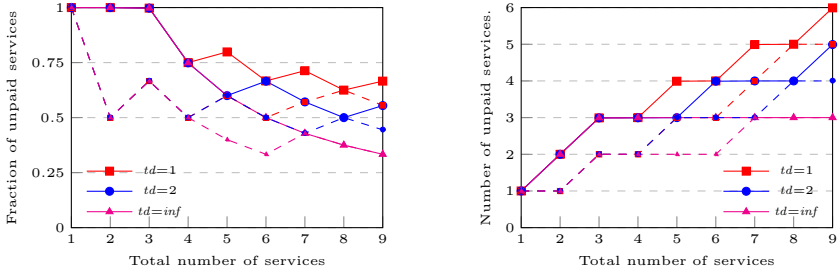
\begin{figure}[t]
\centering
\begin{subfigure}{.49\textwidth}
\begin{center}
\tiny{
\begin{tikzpicture}
\begin{axis}[
    title={},
    ylabel={Fraction of unpaid services},
    xlabel={Total number of services},
    xmin=1, xmax=9,
    xtick={1, 2, 3, 4, 5, 6, 7, 8, 9},
    ymin=0, ymax=1,
    ytick={0,.25,.50,.75,1},
    xtick pos=left,
    ytick pos=left,
    ymajorgrids=true,
    grid style=dashed,
    height=4.5cm,
    width=0.95\textwidth,
    legend entries={
                \textit{$\mathit{td}{=}1$},
                \textit{$\mathit{td}{=}2$},
                \textit{$\mathit{td}{=}\mathit{inf}$}
                },
    legend style={fill=none, draw=none, 
                    at={(0.4,0)}, anchor=south east, 
                    nodes={scale=0.9, transform shape}}
]
\addlegendimage{mark=square*,red,mark size=1.5pt}
\addlegendimage{mark=*,blue,mark size=1.5pt}
\addlegendimage{mark=triangle*,magenta,mark size=1.5pt}
\addlegendimage{thick}
]
legend entries={simulation},
\addlegendimage{only marks, mark=triangle*}

\addplot[mark=square*,red,mark size=1.5pt] table [x=k, y=red1, col sep=comma]{rewratio_3.csv};
\addplot[mark=*,blue,mark size=1.5pt] table [x=k, y=red2, col sep=comma]{rewratio_3.csv};
\addplot[mark=triangle*,magenta,mark size=1.5pt] table [x=k, y=red0, col sep=comma]{rewratio_3.csv};

\addplot[dashed,mark=square*,red,mark size=1pt] table [x=k, y=red1, col sep=comma]{rewratio_tsg.csv};
\addplot[dashed,mark=*,blue,mark size=1pt] table [x=k, y=red2, col sep=comma]{rewratio_tsg.csv};
\addplot[dashed,mark=triangle*,magenta,mark size=1pt] table [x=k, y=red0, col sep=comma]{rewratio_tsg.csv};

\end{axis}
\end{tikzpicture}
}
\end{center}
\end{subfigure}
\begin{subfigure}{.49\textwidth}
\begin{center}
\tiny{
\begin{tikzpicture}
\begin{axis}[
    title={},
    ylabel={Number of unpaid services.},
    xlabel={Total number of services},
    xmin=1, xmax=9,
    xtick={1, 2, 3, 4, 5, 6, 7, 8, 9},
    ymin=0, ymax=6,
    ytick={0, 1, 2, 3, 4, 5, 6},
    xtick pos=left,
    ytick pos=left,
    ymajorgrids=true,
    grid style=dashed,
    height=4.5cm,
    width=0.95\textwidth,
    legend entries={
                \textit{$\mathit{td}{=}1$},
                \textit{$\mathit{td}{=}2$},
                \textit{$\mathit{td}{=}\mathit{inf}$}
                },
    legend style={fill=none, draw=none, 
                    at={(1,0)}, anchor=south east, 
                    nodes={scale=0.9, transform shape}}
]
\addlegendimage{mark=square*,red,mark size=1.5pt}
\addlegendimage{mark=*,blue,mark size=1.5pt}
\addlegendimage{mark=triangle*,magenta,mark size=1.5pt}
\addlegendimage{thick}
]
legend entries={simulation},
\addlegendimage{only marks, mark=triangle*}

\addplot[mark=square*,red,mark size=1.5pt] table [x=k, y=red1, col sep=comma]{rewnopayed_3.csv};
\addplot[mark=*,blue,mark size=1.5pt] table [x=k, y=red2, col sep=comma]{rewnopayed_3.csv};
\addplot[mark=triangle*,magenta,mark size=1.5pt] table [x=k, y=red0, col sep=comma]{rewnopayed_3.csv};

\addplot[dashed,mark=square*,red,mark size=1pt] table [x=k, y=red1, col sep=comma]{rewnopaytsg.csv};
\addplot[dashed,mark=*,blue,mark size=1pt] table [x=k, y=red2, col sep=comma]{rewnopaytsg.csv};
\addplot[dashed,mark=triangle*,magenta,mark size=1pt] table [x=k, y=red0, col sep=comma]{rewnopaytsg.csv};

\end{axis}
\end{tikzpicture}
}
\end{center}
\end{subfigure}
\vspace*{-0.2cm}
\caption{User-centric network results (CSG/TSG values as solid/dashed lines).}
\label{fig:trust:unpaid_3}
\vspace*{-0.4cm}
\end{figure}
\figref{fig:trust:unpaid_3} presents results for the maximum fraction and number of unpaid services the user can ensure for each trust model, corresponding to the formulae $\coalition{\mathit{usr}} \rewop{\mathit{ratio}^-}{\min=?}{\future  \mathsf{finished}}$ and $\coalition{\mathit{usr}} \rewop{\mathit{unpaid}^-}{\min=?}{\future \mathsf{finished}}$ (to prevent not requesting any services and obtaining an infinite reward being the optimal choice of the user, we negate all rewards and find the minimum expected reward the user can ensure). 
The results for the original TSG model are included as dashed lines. The results demonstrate that the user can take advantage of the fact that in the CSG model it can request multiple services at the same time, and obtain more services without paying before the different providers get a chance to inform each other about non-payment. In addition, the results show that imposing %
a more severe penalty on the trust measure for non-payment reduces %
the %
number of services the user can obtain without paying.

\startpara{Aloha} This case study concerns three users trying to send packets using the slotted ALOHA protocol. In a time slot, if a single user tries to send a packet, there is a probability ($q$) that the packet is sent; as more users try and send, then the probability of success decreases. If sending a packet fails, the number of slots a user waits before resending is set according to an exponential backoff scheme. More precisely, each user maintains a backoff counter which it increases each time there is a failure (up to $b_{\max}$) and, if the counter equals $k$, randomly chooses the slots to wait from $\{0,1,\dots,2^k{-}1\}$. 

\begin{figure}[t]
\centering
\begin{subfigure}{.49\textwidth}
\centering
\tiny{
\begin{tikzpicture}
\begin{axis}[
    ylabel={Sum of Probabilities},
    xlabel={$D$},
    xmin=0, xmax=10,
    xtick={0,1,2,3,4,5,6,7,8,9,10},
    ymin=0, ymax=2,
    ytick={0,0.5,1,1.5,2.0,2.5},
    xtick pos=left,
    ytick pos=left,
    ymajorgrids=true,
    grid style=dashed,
    height=4.25cm,
    width=0.95\textwidth,
    legend entries={
                {$\snashop{\mathit{usr}_{\scale{.75}{1}}{:}\mathit{usr}_{\scale{.75}{2}}}{=?}{\mathtt{P}{+}\mathtt{P}}$},
                \textit{$\coalition{\mathit{\mathit{usr}_{\scale{.75}{1}}}}{\mathtt P_{\scale{.75}{\max}}}+{\mathtt P_{\scale{.75}{\max}}}$},
                \textit{$\coalition{\mathit{\mathit{usr}_{\scale{.75}{1}}}}{\mathtt P_{\scale{.75}{\max}}}+{\mathtt P_{\scale{.75}{\min}}}$}    
                },
    legend style={fill=none, draw=none, 
                    at={(1,0)}, anchor=south east, 
                    nodes={scale=0.8, transform shape}},
]
\addlegendimage{mark=square*,red,mark size=1.5pt}
\addlegendimage{mark=*,blue,mark size=1.5pt}
\addlegendimage{mark=triangle,orange,mark size=1.5pt}
]

\addplot[mark=square*,red,mark size=1.5pt] table [x=tmax, y=peq, col sep=comma]{alh_bckff_3p_tmax.csv};
\addplot[mark=*,blue,mark size=1.5pt] table [x=tmax, y=pmxmx, col sep=comma]{alh_bckff_3p_tmax.csv};
\addplot[mark=triangle,orange,mark size=1.5pt] table [x=tmax, y=pmxmn, col sep=comma]{alh_bckff_3p_tmax.csv};

\end{axis}
\end{tikzpicture}
}
\end{subfigure}
\begin{subfigure}{.49\textwidth}
\centering
\tiny{
\begin{tikzpicture}
\begin{axis}[
    ylabel={Probability},
    xlabel={$q$},
    xmin=0, xmax=1,
    xtick={0,0.1,0.2,0.3,0.4,0.5,0.6,0.7,0.8,0.9,1.0},
    ymin=0, ymax=1,
    ytick={0,0.2,0.4,0.6,0.8,1.0},
    xtick pos=left,
    ytick pos=left,
    ymajorgrids=true,
    grid style=dashed,
    height=4.25cm,
    width=0.95\textwidth,
    legend entries={
                $\mathit{usr}_{\scale{.75}{1}}$ (SWNE),
                $\mathit{usr}_{\scale{.75}{2}}$ (SWNE),
                $\mathit{usr}_{\scale{.75}{1}}$ (Max),
                $\mathit{usr}_{\scale{.75}{2}}$ (Max)
                },
    legend style={fill=none, draw=none, 
                at={(0,1)}, anchor=north west, 
                nodes={scale=0.8, transform shape}}
]
\addlegendimage{mark=square*,red,mark size=1.5pt}
\addlegendimage{mark=*,blue,mark size=1.5pt}
\addlegendimage{mark=triangle*,magenta,mark size=1.5pt}
\addlegendimage{mark=+,cyan,mark size=1.5pt}
]

\addplot[mark=square*,red,mark size=1.5pt] table [x=p11, y=player1, col sep=comma]{alh_bckff_3p_p11_players.csv};
\addplot[mark=*,blue,mark size=1.5pt] table [x=p11, y=player2, col sep=comma]{alh_bckff_3p_p11_players.csv};;

\addplot[mark=triangle*,magenta,mark size=1.5pt] table [x=p11, y=p1, col sep=comma]{alh_bckff_3p_p11_players.csv};
\addplot[mark=+,cyan,mark size=1.5pt] table [x=p11, y=p2, col sep=comma]{alh_bckff_3p_p11_players.csv};

\end{axis}
\end{tikzpicture}
}
\end{subfigure}
\vspace*{-0.2cm}
\caption{Aloha: $\nashop{\mathit{usr}_1{:}\{\mathit{usr}_2,\mathit{usr}_3\}}{\max=?}{\probop{}{\future (\mathsf{s}_1 {\wedge} t{\leq}D)}{+}\probop{}{\future (\mathsf{s}_2 {\wedge} \mathsf{s}_3 {\wedge} t{\leq}D)}}$}
\label{aloha-deadline-fig}
\vspace*{-0.4cm}
\end{figure}

We suppose that the three users are each trying to maximise the probability of sending their packet before a deadline $D$, with users 2 and 3 forming a coalition, which corresponds to the formula $\coalition{\mathit{usr}_1{:}\mathit{usr}_2{,}\mathit{usr}_3}_{\max=?}\probop{}{\future (\mathsf{sent}_1 \wedge t {\leq} D)}+\probop{}{\future (\mathsf{sent}_2 \wedge \mathsf{sent}_3 \wedge t {\leq} D)}$. \figref{aloha-deadline-fig} presents total values as $D$ varies (left) and individual values as $q$ varies (right). Through synthesis, we find the collaboration is dependent on $D$ and $q$. Given more time there is a greater chance for the users to collaborate by sending in different slots, 
while if $q$ is large it is unlikely users need to repeatedly send, so again can send in different slots. 
As the coalition has more messages to send, their probabilities are lower.
However, for the scenario with two users, the probabilities of the two users would still be different. In this case, although it is advantageous to initially collaborate and allow one user to try and send its first message, if the sending fails, given there is a bound on the time for the users to send, both users will try to send at this point as this is the best option for their individual goals.

We have also considered when the users try to minimise the expected time before their packets are sent, where users 2 and 3 form a coalition, represented by the formula $\nashop{\mathit{usr}_1{:}\mathit{usr}_2{,}\mathit{usr}_3}{\min=?}{\rewop{\mathit{time}}{}{\future  \mathsf{sent}_1}{+}\rewop{\mathit{time}}{}{\future (\mathsf{sent}_2 \wedge \mathsf{sent}_3)}}$. When synthesising the strategies we see that the players collaborate with the coalition of users 2 and 3, letting user 1 to try and send before sending their messages. However, if user 1 fails to send, then the coalition either lets user 1 try again in case the user can do so immediately, and otherwise the coalition attempts to send their messages.

Finally, we have analysed when the players collaborate to maximise the probability of reaching a state where they can then send their messages with probability 1 within $D$ time units (with users 2 and 3 in coalition), which is represented by the formula $\coalition{\mathit{usr}_1{,}\mathit{usr}_2{,}\mathit{usr}_3}\probopP_{\max=?}[ \future \coalition{\mathit{usr}_1{:}\mathit{usr}_2{,}\mathit{usr}_3}_{\min \geq 2} \probop{}{\future (\mathsf{sent}_1 \wedge t {\leq} D)} + \probop{}{\future (\mathsf{sent}_2 \wedge \mathsf{sent}_3 \wedge t {\leq} D)}]$.

\startpara{Intrusion detection policies} In~\cite{ZB09}, CSGs are used to model the interaction between an intrusion detection policy and attacker. The policy has a number of libraries it can use to detect attacks and the attacker has a number of different attacks which can incur different levels of damage if not detected. Furthermore, each library can only detect certain attacks. In the model, in each round the policy chooses a library to deploy and the attacker chooses an attack. A reward structure is specified representing the level of damage when an attack is not detected. The goal is to find optimal intrusion detection policies which correspond to finding a strategy for the policy that minimises damage, represented by synthesising a strategy for the formula $\coalition{\mathit{policy}} \rewop{\mathit{damage}}{\min=?}{\future (r = \mathit{rounds})}$. We have constructed CSG models with two players (representing the policy and the attacker) for the two scenarios outlined in~\cite{ZB09}.

\startpara{Jamming multi-channel radio systems} A CSG model for jamming multi-channel cognitive radio systems is presented in~\cite{ZLHB10}. The system consists of a number of channels ($\mathit{chans}$), which can be in an occupied or idle state. The state of each channel remains fixed within a time slot and between slots is Markovian (i.e.\ the state changes randomly based only on the state of the channel in the previous slot). A secondary user has a subset of available channels and at each time-slot must decide which to use. There is a single attacker which again has a subset of available channels and at each time slot decides to send a jamming signal over one of them. The CSG has two players: one representing the secondary user and the other representing the attacker. Through %
the zero-sum property $\coalition{\mathit{user}} \probop{\max=?}{\future (\mathit{sent} \geq \mathit{slots}{/}2)}$ we find the optimal strategy for the secondary user to maximize the probability that at least half their messages are sent against any possible attack.

\startpara{Medium Access Control} %
This case study extends the CSG model from \egref{mac-eg} to three users and assumes that the probability of a successful transmission is dependent on the number of users that try and send ($q_1 = 0.95$, $q_2 = 0.75$ and $q_3 = 0.5$). The energy of each user is bounded by $e_{\max}$. We suppose the first user acts in isolation and the remaining users form a coalition. The first nonzero-sum property we consider is $\nashop{p_1{:}p_2{,}p_3}{\max=?}{\rewop{\mathit{sent}_1}{}{\scumul{\leq k}}{+}\rewop{\mathit{sent}_{2,3}}{}{\scumul{\leq k}}}$, which corresponds to each coalition trying to maximise the expected number of messages they send over a bounded number of steps. On the other hand, the second property is  $\nashop{p_1{:}p_2{,}p_3}{\max=?}{\probop{}{\bfuture (\mathit{mess}_1 = s_{\max})}{+}\probop{}{\future (\mathit{mess}_2{+}\mathit{mess}_3 = 2{\cdot}s_{\max})}}$ and here the coalitions try to maximise the probability of successfully transmitting a certain number of messages ($s_{\max}$ for the first user and $2{\cdot}s_{\max}$ for the coalition of the second and third users), where in addition the first user has to do this in a bounded number of steps ($k$).

\startpara{Power Control} Our final case study is based on a model of power control in cellular networks from~\cite{BRE13}. In the model, phones emit signals over a cellular network and the signals can be strengthened by increasing the power level up to a bound ($\mathit{pow}_{\max}$). A stronger signal can improve transmission quality, but uses more energy and lowers the quality of other transmissions due to interference. We extend this model by adding a failure probability ($q_\mathit{fail}$) when a power level is increased and assume each phone has a limited battery capacity ($e_{\max}$). Based on~\cite{BRE13}, we associate a reward structure with each phone representing transmission quality dependent both on its power level and that of other phones due to interference. We consider the nonzero-sum property $\nashop{p_1{:}p_2}{\max=?}{\rewop{\mathit{r}_1}{}{\future (e_1 = 0)}{+}\rewop{\mathit{r}_2}{}{\future (e_2 = 0)}}$, where each user tries to maximise their expected reward before their phone's battery is empty. We have also analysed the property $\nashop{p_1{:}p_2}{\max=?}{\rewop{\mathit{r}_1}{}{\future (e_1 = 0)}+\rewop{\mathit{r}_2}{}{\scumul{\leq k}}}$, where the objective of the second user is to instead maximise their expected reward over a bounded number of steps ($k$).

\section{Conclusions}

In this paper, we have designed and implemented an approach for the automatic verification of a large subclass of CSGs.
We have extended the temporal logic rPATL to allow for the specification of equilibria-based (nonzero-sum) properties, where two players or coalitions with distinct goals can collaborate.
We have then proposed and implemented algorithms for verification and strategy synthesis using this extended logic, including both zero-sum and nonzero-sum properties, in the PRISM-games model checker. In the case of finite-horizon properties the algorithms are exact, while for infinite-horizon they are approximate using value iteration. We have also extended the PRISM-games modelling language, adding new features tailored to CSGs.
Finally, we have evaluated the approach on a range of case studies that have demonstrated
the benefits %
of CSG models compared %
to TSGs
and of nonzero-sum properties as a means to 
synthesise strategies that are collectively more beneficial for all players in a game.

The main challenge in implementing the model checking algorithms is efficiently solving matrix and bimatrix games at each state in each step of value iteration for zero-sum and nonzero-sum properties, respectively, which are non-trivial optimisation problems.
For bimatrix games, this furthermore requires finding an optimal equilibrium, which currently relies on iteratively restricting the solution search space.
Solution methods can be sensitive to floating-point arithmetic issues,
particularly for bimatrix games;
arbitrary precision representations may help here to alleviate these problems.

There are a number of directions for future work. First, we plan to consider additional properties such as multi-objective queries. We are also working on extending the implementation to consider alternative solution methods (e.g., policy iteration and using CPLEX~\cite{CPLEX} to solve matrix games) and a symbolic (binary decision diagram based) implementation and other techniques for Nash equilibria synthesis such as an MILP-based solution using regret minimisation. Lastly, we are considering extending the approach to partially observable strategies, multi-coalitional games, building on~\cite{KNPS20b}, and mechanism design.

\startpara{Acknowledgements} 
This project has received funding from the European Research Council (ERC)
under the European Union’s Horizon 2020 research and innovation programme
(grant agreement No.~834115) 
and the EPSRC Programme Grant on Mobile Autonomy (EP/M019918/1).

\bibliographystyle{spmpsci}
\bibliography{bib}

\appendix
\section{Convergence of Zero-Sum Reachability Reward Formulae}\label{2-app}

In this appendix we give a witness to the failure of convergence for value iteration when verifying zero-sum formulae with an infinite horizon reward objective if \assumref{game1-assum} does not hold.

\begin{figure}[h]
\vspace*{-0.4cm}
\begin{subfigure}{.4\textwidth}
\centering
\begin{tikzpicture}[->,>=stealth',shorten >=1pt,auto,node distance=2.8cm, semithick, scale=.40]
  \tikzstyle{every state}=[draw=black,text=black, initial text=]
\small 
\node[initial,state,minimum width=0.75cm,minimum height=0.75cm] (S)at(0,0) (s0) {$s_1$}; 

\node[state,minimum width=0.75cm,minimum height=0.75cm] (S)at(8,0) (s1) {$s_2$}; 

\node[state,minimum width=0.75cm,minimum height=0.75cm] (S)at(8,-6) (s3) {$t$};

\path [->] (s0.north east) [bend left]
edge node []  {$a_1,a_2$} (s1.north west);

\path [->] (s1.south west) [bend left]
edge node []  {$a_1,a_2$} (s0.south east);

\path [->] (s1.south) []
edge node []  {$a_1,b_2$} (s3.north);

\end{tikzpicture} 
\end{subfigure}
\begin{subfigure}{.6\textwidth}
\begin{align*}
\\
r_A(s, (c_1,c_2) ) & = \; \begin{cases}
-1 & \mbox{if $s = s_1$ and $(c_1,c_2) = (a_1,a_2)$} \\
1 & \mbox{if $s = s_2$ and $(c_1,c_2) = (a_1,a_2)$} \\
0 & \mbox{otherwise}
\end{cases} \\
r_S(s) & = \; 0 \; \mbox{for all $s$} \\ \\ 
\end{align*}
\end{subfigure}
\caption{Counterexample for zero-sum expected reachability reward properties.}\label{counter2-fig}
\end{figure}
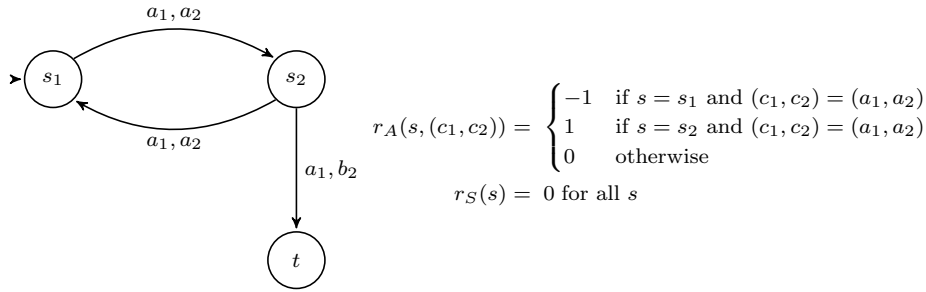

\noindent
Consider the CSG in \figref{counter2-fig} with players $p_1$ and $p_2$ and the zero-sum state formula $\phi=\coalition{p_1,p_2}\rewop{r}{\max=?}{\future \ap}$, where $\ap$ is the atomic proposition satisfied only by state $t$. Clearly, state $s_1$ does not reach either the target of the formula or an absorbing state with probability 1 under all strategy profiles, while the reward for the state-action pair $(s_1,(a_1,a_2))$ is negative. Applying the value iteration algorithm of \sectref{mc-sect}, we see that the values for state $s_1$ oscillate between $0$ and ${-}1$, while the values for state $s_2$ oscillate between $0$ and $1$.\section{Convergence of Nonzero-Sum Probabilistic Reachability Properties}\label{3-app}

In this appendix we give a witness to the failure of convergence for value iteration when verifying nonzero-sum formulae with infinite horizon probabilistic objectives if \assumref{game2-assum} does not hold.

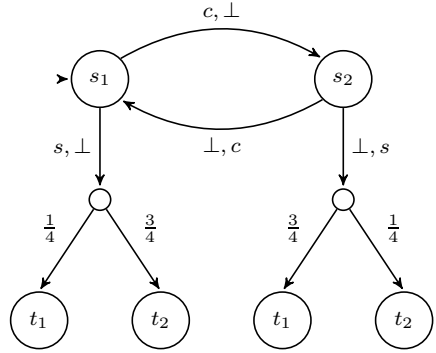
\begin{figure}[h]
\vspace*{-0.4cm}
\centering
\begin{tikzpicture}[->,>=stealth',shorten >=1pt,auto,node distance=2.8cm, semithick, scale=.40]

  \tikzstyle{every state}=[draw=black,text=black, initial text=]
\small 
\node[initial,state,minimum width=0.75cm,minimum height=0.75cm] (S)at(0,0) (s0) {$s_1$}; 

\node[state,minimum width=0.75cm,minimum height=0.75cm] (S)at(8,0) (s1) {$s_2$}; 

\node[state,minimum width=0cm,minimum height=0cm] (S)at(0,-4) (d1) {}; 

\node[state,minimum width=0cm,minimum height=0cm] (S)at(8,-4) (d2) {}; 

\node[state,minimum width=0.75cm,minimum height=0.75cm] (S)at(-2,-8) (t11) {$t_1$}; 

\node[state,minimum width=0.75cm,minimum height=0.75cm] (S)at(2,-8) (t12) {$t_2$}; 

\node[state,minimum width=0.75cm,minimum height=0.75cm] (S)at(6,-8) (t21) {$t_1$}; 

\node[state,minimum width=0.75cm,minimum height=0.75cm] (S)at(10,-8) (t22) {$t_2$};

\path [->] (s0.north east) [bend left]
edge node []  {$c,\bot$} (s1.north west);

\path [->] (s1.south west) [bend left]
edge node []  {$\bot,c$} (s0.south east);

\path [->] (s0.south) []
edge node [swap]  {$s,\bot$} (d1);

\path [->] (s1.south) []
edge node []  {$\bot,s$} (d2);

\path [->] (d1) []
edge node [swap]  {$\frac{1}{4}$} (t11.north);

\path [->] (d1) []
edge node []  {$\frac{3}{4}$} (t12.north);

\path [->] (d2) []
edge node [swap]  {$\frac{3}{4}$} (t21.north);

\path [->] (d2) []
edge node []  {$\frac{1}{4}$} (t22.north);

\end{tikzpicture} 
\caption{Counterexample for nonzero-sum probabilistic reachability properties.}\label{counter3-fig}
\end{figure}

\noindent
Consider the CSG in \figref{counter3-fig} with players $p_1$ and $p_2$ (an adaptation of a TSG example from~\cite{BMS16}) and the nonzero-sum state formula $\nashop{p_1{:}p_2}{\max=?}{\theta}$, where $\theta=\probop{}{\future \ap_1}{+}\probop{}{\future \ap_2}$ and $\ap_i$ is the atomic proposition satisfied only by the state $t_i$. Clearly, this CSG has a non-terminal end component as one can remain in $\{ s_1 , s_2 \}$ indefinitely or leave at any time.

Applying the value iteration algorithm of \sectref{mc-sect}, we have:
\begin{itemize}
\item In the first iteration $\V_{\game^C}(s_1,\theta,1)$ are the SWNE values of the bimatrix game:
\[
\mgame_1 \; = \; \kbordermatrix{ & \bot \cr
c & 0 \cr
s & \frac{1}{4}  \cr
} 
\quad \mbox{and} \quad
\mgame_2 \; = \; \kbordermatrix{ & \bot \cr
c & 0 \cr
s & \frac{3}{4}  \cr
}
\]
i.e. the values $(\frac{1}{4},\frac{3}{4})$, and $\V_{\game^C}(s_2,\theta,1)$ are the SWNE values of the bimatrix game:
\[
\mgame_1 \; = \; \kbordermatrix{ & c & s \cr
\bot\!\! & 0 & \frac{3}{4} \cr
}
\quad \mbox{and} \quad
\mgame_2 \; = \; \kbordermatrix{ & c & s \cr
\bot\!\! & 0 & \frac{1}{4} \cr
}
\]
i.e. the values $\
{(\frac{3}{4},\frac{1}{4})}$.
\item In the second iteration $\V_{\game^C}(s_1,\theta,2)$ are the SWNE values of the bimatrix game:
\[
\mgame_1 \; = \; \kbordermatrix{ & \bot \cr
c & \frac{3}{4}  \cr
s & \frac{1}{4}  \cr
}
\quad \mbox{and} \quad
\mgame_2 \; = \; \kbordermatrix{ & \bot \cr
c & \frac{1}{4} \cr
s & \frac{3}{4}  \cr
}
\]
i.e. the values $(\frac{3}{4},\frac{1}{4})$, and $\V_{\game^C}(s_2,\theta,2)$ are the SWNE values of the bimatrix games:
\[
\mgame_1 \; = \; \kbordermatrix{ & c & s \cr
\bot\!\! & \frac{1}{4} & \frac{3}{4} \cr
}
\quad \mbox{and} \quad
\mgame_2 \; = \; \kbordermatrix{ & c & s \cr
\bot\!\! & \frac{3}{4} & \frac{1}{4} \cr
}
\]
i.e. the values $(\frac{1}{4},\frac{3}{4})$.
\item 
In the third iteration $\V_{\game^C}(s_1,\theta,3)$ are the SWNE values of the bimatrix game:
\[
\mgame_1 \; = \; \kbordermatrix{ & \bot \cr
c & \frac{1}{4}  \cr
s & \frac{1}{4}  \cr
}
\quad \mbox{and} \quad
\mgame_2 \; = \; \kbordermatrix{ & \bot \cr
c & \frac{3}{4} \cr
s & \frac{3}{4}  \cr
}
\]
i.e. the values $(\frac{1}{4},\frac{3}{4})$, and $\V_{\game^C}(s_2,\theta,3)$ are the SWNE values of the bimatrix game:
\[
\mgame_1 \; = \; \kbordermatrix{ & c & s \cr
\bot\!\! & \frac{3}{4} & \frac{3}{4} \cr
}
\quad \mbox{and} \quad
\mgame_2 \; = \; \kbordermatrix{ & c & s \cr
\bot\!\! & \frac{1}{4} & \frac{1}{4} \cr
}
\]
i.e. the values $(\frac{3}{4},\frac{1}{4})$. 
\item
In the fourth iteration $\V_{\game^C}(s_1,\theta,4)$ are the SWNE values of the bimatrix game:
\[
\mgame_1 \; = \; \kbordermatrix{ & \bot \cr
c & \frac{3}{4}  \cr
s & \frac{1}{4}  \cr
}
\quad \mbox{and} \quad
\mgame_2 \; = \; \kbordermatrix{ & \bot \cr
c & \frac{1}{4} \cr
s & \frac{3}{4}  \cr
}
\]
i.e. the values $(\frac{3}{4},\frac{1}{4})$, and $\V_{\game^C}(s_2,\theta,4)$ are the SWNE values of the bimatrix game:
\[
\mgame_1 \; = \; \kbordermatrix{ & c & s \cr
\bot\!\! & \frac{3}{4}  & \frac{3}{4} \cr
}
\quad \mbox{and} \quad
\mgame_2 \; = \; \kbordermatrix{ & c & s \cr
\bot\!\! & \frac{1}{4} & \frac{1}{4} \cr
}
\]
i.e. the values $(\frac{3}{4},\frac{1}{4})$.
\end{itemize}
As can be seen the values computed at each iteration for the states $s_1$ and $s_2$ will oscillate between $(\frac{1}{4},\frac{3}{4})$ and $(\frac{3}{4},\frac{1}{4})$.\section{Convergence of Nonzero-Sum Expected Reachability Properties}\label{4-app}

In this appendix we give a witness to the failure of convergence for value iteration when verifying nonzero-sum formulae with infinite horizon reward objectives if \assumref{game3-assum} does not hold.

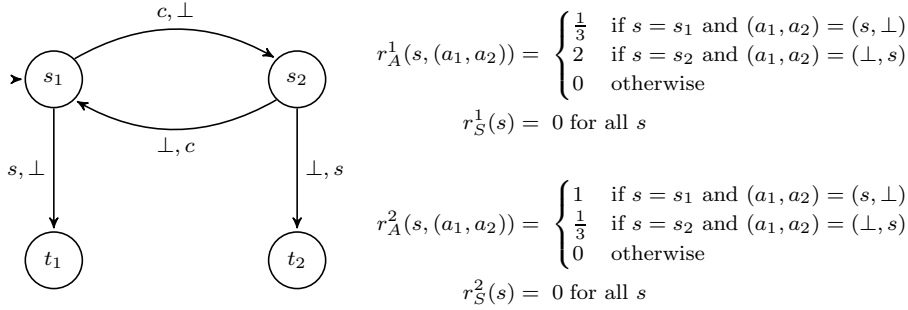
\begin{figure}[h]
\vspace*{-0.8cm}
\begin{subfigure}{.4\textwidth}
\centering
\begin{tikzpicture}[->,>=stealth',shorten >=1pt,auto,node distance=2.8cm, semithick, scale=.40]
  \tikzstyle{every state}=[draw=black,text=black, initial text=]
\small 
\node[initial,state,minimum width=0.75cm,minimum height=0.75cm] (S)at(0,0) (s0) {$s_1$}; 

\node[state,minimum width=0.75cm,minimum height=0.75cm] (S)at(8,0) (s1) {$s_2$}; 

\node[state,minimum width=0.75cm,minimum height=0.75cm] (S)at(0,-6) (s2) {$t_1$}; 

\node[state,minimum width=0.75cm,minimum height=0.75cm] (S)at(8,-6) (s3) {$t_2$}; 

\path [->] (s0.north east) [bend left]
edge node []  {$c,\bot$} (s1.north west);

\path [->] (s1.south west) [bend left]
edge node []  {$\bot,c$} (s0.south east);

\path [->] (s0.south) []
edge node [swap]  {$s,\bot$} (s2.north);

\path [->] (s1.south) []
edge node []  {$\bot,s$} (s3.north);

\end{tikzpicture} 
\end{subfigure}
\begin{subfigure}{.6\textwidth}
\begin{align*}
\\
r_A^1(s, (a_1,a_2) ) & = \;\begin{cases}
\frac{1}{3} & \mbox{if $s = s_1$ and $(a_1,a_2) = (s,\bot)$} \\
2 & \mbox{if $s = s_2$ and $(a_1,a_2) = (\bot,s)$} \\
0 & \mbox{otherwise}
\end{cases} \\
r_S^1(s) & = \; 0 \; \mbox{for all $s$} \\ \\
r_A^2(s, (a_1,a_2) ) & = \; \begin{cases}
1 & \mbox{if $s = s_1$ and $(a_1,a_2) = (s,\bot)$} \\
\frac{1}{3} & \mbox{if $s = s_2$ and $(a_1,a_2) = (\bot,s)$} \\
0 & \mbox{otherwise}
\end{cases} \\
r_S^2(s) & = \; 0 \; \mbox{for all $s$} \\
\end{align*}
\end{subfigure}
\vspace*{-0.4cm}
\caption{Counterexample for nonzero-sum expected reachability properties.}\label{counter4-fig}
\end{figure}

\noindent
Consider the CSG in \figref{counter4-fig} with players $p_1$ and $p_2$ (which again is an adaptation of a TSG example from~\cite{BMS16}) and the nonzero-sum state formula $\nashop{p_1{:}p_2}{\max=?}{\theta}$, where $\theta=\rewop{r_1}{}{\future \ap}{+}\rewop{r_2}{}{\future \ap}$ and $\ap$ is the atomic proposition satisfied only by the states $t_1$ and $t_2$. Clearly, there are strategy profiles for which the targets are not reached with probability 1. 

Applying the value iteration algorithm of \sectref{mc-sect}, we have:
\begin{itemize}
\item In the first iteration $\V_{\game^C}(s_1,\theta,1)$ are the SWNE values of the bimatrix game:
\[
\mgame_1 \; = \; \kbordermatrix{ & \bot \cr
c & 0 \cr
s & \frac{1}{3}  \cr
} 
\quad \mbox{and} \quad
\mgame_2 \; = \; \kbordermatrix{ & \bot \cr
c & 0 \cr
s & 1  \cr
}
\]
i.e. the values $(\frac{1}{3},1)$, and $\V_{\game^C}(s_2,\theta,1)$ are the SWNE values of the bimatrix game:
\[
\mgame_1 \; = \; \kbordermatrix{ & c & s \cr
\bot\!\! & 0 & 2 \cr
}
\quad \mbox{and} \quad
\mgame_2 \; = \; \kbordermatrix{ & c & s \cr
\bot\!\! & 0 & \frac{1}{3} \cr
}
\]
i.e. the values $\
{(2,\frac{1}{3})}$.
\item In the second iteration $\V_{\game^C}(s_1,\theta,2)$ are the SWNE values of the bimatrix game:
\[
\mgame_1 \; = \; \kbordermatrix{ & \bot \cr
c & 2 \cr
s & \frac{1}{3}  \cr
}
\quad \mbox{and} \quad
\mgame_2 \; = \; \kbordermatrix{ & \bot \cr
c & \frac{1}{3} \cr
s & 1  \cr
}
\]
i.e. the values $(2,\frac{1}{3})$, and $\V_{\game^C}(s_2,\theta,2)$ are the SWNE values of the bimatrix games:
\[
\mgame_1 \; = \; \kbordermatrix{ & c & s \cr
\bot\!\! & \frac{1}{3} & 2 \cr
}
\quad \mbox{and} \quad
\mgame_2 \; = \; \kbordermatrix{ & c & s \cr
\bot\!\! & 1 & \frac{1}{3} \cr
}
\]
i.e. the values $(\frac{1}{3},1)$.
\item 
In the third iteration $\V_{\game^C}(s_1,\theta,3)$ are the SWNE values of the bimatrix game:
\[
\mgame_1 \; = \; \kbordermatrix{ & \bot \cr
c & \frac{1}{3}  \cr
s & \frac{1}{3}  \cr
}
\quad \mbox{and} \quad
\mgame_2 \; = \; \kbordermatrix{ & \bot \cr
c & 1 \cr
s & 1  \cr
}
\]
i.e. the values $(\frac{1}{3},1)$, and $\V_{\game^C}(s_2,\theta,3)$ are the SWNE values of the bimatrix game:
\[
\mgame_1 \; = \; \kbordermatrix{ & c & s \cr
\bot\!\! & 2 & 2 \cr
}
\quad \mbox{and} \quad
\mgame_2 \; = \; \kbordermatrix{ & c & s \cr
\bot\!\! & \frac{1}{3} & \frac{1}{3} \cr
}
\]
i.e. the values $(2,\frac{1}{3})$. 
\item
In the fourth iteration $\V_{\game^C}(s_1,\theta,4)$ are the SWNE values of the bimatrix game:
\[
\mgame_1 \; = \; \kbordermatrix{ & \bot \cr
c & 2  \cr
s & \frac{1}{3}  \cr
}
\quad \mbox{and} \quad
\mgame_2 \; = \; \kbordermatrix{ & \bot \cr
c & \frac{1}{3} \cr
s & 1  \cr
}
\]
i.e. the values $(2,\frac{1}{3})$, and $\V_{\game^C}(s_2,\theta,4)$ are the SWNE values of the bimatrix game:
\[
\mgame_1 \; = \; \kbordermatrix{ & c & s \cr
\bot\!\! & \frac{1}{3}  & 2 \cr
}
\quad \mbox{and} \quad
\mgame_2 \; = \; \kbordermatrix{ & c & s \cr
\bot\!\! & 1 & \frac{1}{3} \cr
}
\]
i.e. the values $(\frac{1}{3},1)$.
\end{itemize}
As can be seen the values computed during value iteration oscillate for both $s_1$ and $s_2$.

\end{document}